\newtheorem{lemma}{Lemma}
\newtheorem{proposition}{Proposition}
\newtheorem{definition}{Definition}
\newtheorem{observation}{Observation}
\newcommand{\Rmnum}[1]{\expandafter\@slowromancap\romannumeral #1@}
\newenvironment{breakablealgorithm}
  {
   \begin{center}
     \refstepcounter{algorithm}
     \hrule height.8pt depth0pt \kern2pt
     \renewcommand{\caption}[2][\relax]{
       {\raggedright\textbf{\ALG@name~\thealgorithm} ##2\par}%
       \ifx\relax##1\relax 
         \addcontentsline{loa}{algorithm}{\protect\numberline{\thealgorithm}##2}%
       \else 
         \addcontentsline{loa}{algorithm}{\protect\numberline{\thealgorithm}##1}%
       \fi
       \kern2pt\hrule\kern2pt
     }
  }{
     \kern2pt\hrule\relax
   \end{center}
  }
\journal{Elsevier}
\begin{document}

\begin{frontmatter}

\title{Maximin Headway Control of Automated Vehicles for System Optimal Dynamic Traffic Assignment in General Networks}
\author[mymainaddress]{Jinxiao Du}
\ead{jinxiao.du@connect.polyu.hk}
\author[mymainaddress,mymainaddress2]{Wei Ma\corref{mycorrespondingauthor}}
\cortext[mycorrespondingauthor]{Corresponding author}
\ead{wei.w.ma@polyu.edu.hk}	
\address[mymainaddress]{Department of Civil and Environmental Engineering, The Hong Kong Polytechnic University, Hong Kong SAR, China}
\address[mymainaddress2]{The Hong Kong Polytechnic University Shenzhen Research Institute, Shenzhen, Guangdong, China}

\begin{abstract}
This study develops the headway control framework in a fully automated road network, as we believe headway of Automated Vehicles (AVs) is another influencing factor to traffic dynamics in addition to conventional vehicle behaviors ({\em e.g.} route and departure time choices).
Specifically, we aim to search for the optimal time headway between AVs on each link that achieves the network-wide system optimal dynamic traffic assignment (SO-DTA). 
To this end, the headway-dependent fundamental diagram (HFD) and headway-dependent double queue model (HDQ) are developed to model the effect of dynamic headway on roads, and a dynamic network model is built. It is rigorously proved that the minimum headway could always achieve SO-DTA, yet the optimal headway is non-unique. Motivated by these two findings, this study defines a novel concept of maximin headway, which is the largest headway that still achieves SO-DTA in the network. Mathematical properties regarding maximin headway are analyzed and an efficient solution algorithm is developed. 
Numerical experiments on both a small and large network verify the effectiveness of the maximin headway control framework as well as the properties of maximin headway.
This study sheds light on deriving the desired solution among the non-unique solutions in SO-DTA and provides implications regarding the safety margin of AVs under SO-DTA.
\end{abstract}

\begin{keyword}
Automated Vehicles (AVs) \sep System Optimal Dynamic Traffic Assignment (SO-DTA) \sep Headway Control \sep  Maximin Optimization \sep Double Queue 
\end{keyword}
  
\end{frontmatter}


\section{Introduction}

With the rapid development of advanced sensors, wireless communication, and artificial intelligence, it is envisioned that automated vehicles (AVs) will play a significant role in future intelligent transportation systems. Ultimately, the fully automated environment would be realized to significantly mitigate congestion and ensure safety in general urban networks \citep{wadud2016help}. Therefore, it is essential for public agencies ({\em e.g.}, Transport Department) and private sectors ({\em e.g.}, Transportation Network Companies) to understand the network-wide effects of AVs in the fully automated environment and to develop effective management and control schemes for AVs. However, recent studies mainly focus on the microscopic behaviors of AVs, while studies on the spatio-temporal effects of AVs in networked traffic dynamics are still lacking \citep{YU2021103101}.

Among various traffic network models, system optimal dynamic traffic assignment (SO-DTA) quantifies the best possible performance of a traffic network given the dynamic network demand and supply. 
Both the optimal objective function ({\em i.e.}, the total travel time) and the optimal solution ({\em i.e.}, path flow and link flow) could benchmark different traffic operation and management strategies and provide policy implications for decision-makers.
Additionally, it is practical to control all the AVs to achieve SO-DTA under fully automated environments \citep{nguyen2021system}.

In this paper, we consider the headway control of AVs, in addition to the travelers' behaviors ({\em e.g.}, route and departure time choices) for SO-DTA. That is, AVs can decide the headway between each other according to the mobility and safety requirements of each road segment \citep{hatipoglu1996longitudinal}. In particular, headway refers to the time headway throughout the paper.
Existing studies have shown that different headway settings of AVs could affect the fundamental diagrams of each road \citep{ZHOU2020102614, GONG2016314}, hence affecting the solutions of SO-DTA. However, most current studies are at a microscopic level, while how the AV headway control could affect the network-wide SO-DTA is still an open question. 

Many studies have explored the applicability and practicality of AV headway control in future traffic scenarios \citep{xiao2023adaptive,elmorshedy2023freeway}. The concept and control principles of headway control are first proposed as an effective longitudinal control strategy of vehicles \citep{chiu1977vehicle,shladover1991automated}. The development of automated technology has attracted interest in studying the application of headway control in fully automated vehicles. \citet{li2017dynamical} and \citet{xiao2011practical} propose a practical constant time headway (CTH) control policy for AVs. Further, \citet{middleton2010string} considers the control strategy under a heterogeneous headway. Moreover, a variable time headway (VTH) control method was developed and proved as a more flexible, effective, and practical method compared with the CTH control policy \citep{chen2020connected}. A brief illustration of the headway control around an intersection is shown in Figure \ref{fig:headway}. All the AVs are connected through Vehicle-to-Vehicle (V2V) communications, and their headway can be set differently on different roads. The cooperative area is at the intersection \citep{de2015autonomous}, where AVs can communicate with the Roadside Unit (RSU) to receive the headway setting for each road and adjust their headway accordingly. The cooperative area should be designed to have strong communication capacities between AVs and RSUs and enough space for the adjustment of headway.


\begin{figure}[h]
    \centering
    \includegraphics[width=0.6\linewidth]{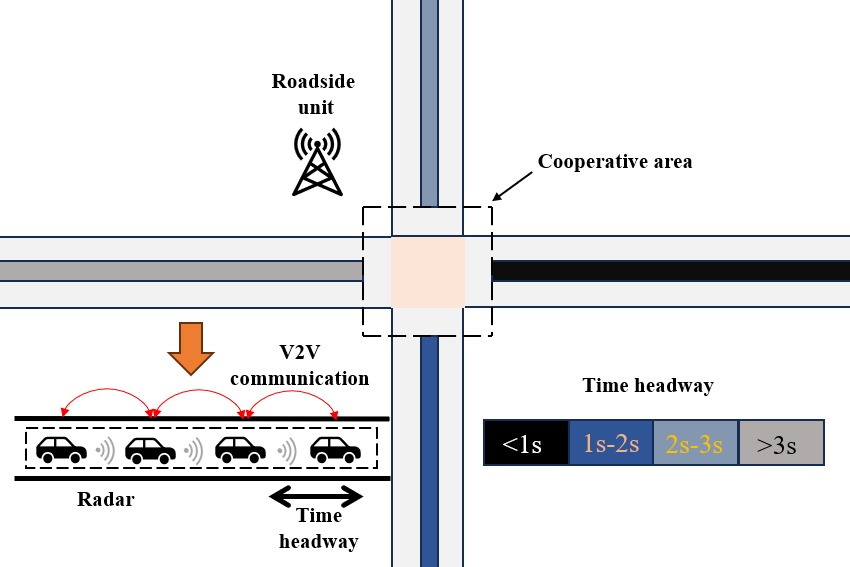}
    \caption{Illustration of headway control around an intersection.}
    \label{fig:headway}
\end{figure}

Intuitively, a smaller AV headway yields higher road capacities and throughput, leading to shorter travel time, and hence SO-DTA can be achieved with the smallest AV headway. To verify this intuition, later this paper {\em analytically} proves that Observation~\ref{ob:1} holds in general networks.

\begin{observation}
\label{ob:1}
Given a range of the AV headway, SO-DTA is always achieved when the AV headway is set as the smallest value in the range.
\end{observation}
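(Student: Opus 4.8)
The plan is to exploit the monotone dependence of link capacity on headway together with a nested-feasible-region argument. First I would establish the foundational monotonicity property of the headway-dependent fundamental diagram (HFD): on every link, the capacity is a non-increasing function of the time headway, so that a smaller headway yields a pointwise larger capacity, while the free-flow speed (and hence the free-flow traversal time) is left unchanged. This should follow directly from the definition of the HFD, since a shorter time headway admits more vehicles per unit time at the critical state.

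Second, I would lift this pointwise capacity monotonicity to a monotonicity of the entire feasible region of the network loading problem. Concretely, let $F(h)$ denote the set of admissible network states (cumulative inflows and outflows, link occupancies, and the upstream and downstream queues of the HDQ model) that satisfy the link dynamics and flow conservation under a headway profile $h$. I would argue that if $h \le h'$ holds componentwise across all links and time steps, then $F(h) \supseteq F(h')$: every sending-capacity and receiving-capacity constraint in the HDQ model is relaxed when the headway shrinks, whereas the flow-conservation and free-flow-delay constraints are headway-independent. Hence any admissible loading under $h'$ remains admissible, with \emph{identical} flows and occupancies, under the smaller headway $h$.

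Third, I would observe that the SO-DTA objective, the total system travel time, is a function of the network state alone — it can be written as the time-integral of the total network occupancy — and does not depend explicitly on the headway. Consequently, a state that is feasible under two different headway profiles yields the same objective value. Minimizing this common objective over the nested family $F(h_{\min}) \supseteq F(h)$ therefore gives $J(h_{\min}) \le J(h)$ for every admissible $h$. Since $h_{\min}$ is the smallest value in the prescribed range, every feasible profile satisfies $h \ge h_{\min}$, so the constant minimum-headway profile attains the global optimum and achieves SO-DTA, which is exactly the assertion of Observation~\ref{ob:1}.

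The main obstacle will be the second step: rigorously verifying that \emph{each} constraint of the HDQ model relaxes under a smaller headway. The delicate case is the receiving-flow constraint, where the admissible inflow couples the downstream queue with the backward-wave term of the HFD, so that a larger capacity could in principle interact with the jam density or wave speed in a way that tightens rather than loosens the bound. I would resolve this by writing the HDQ sending and receiving functions explicitly in terms of the HFD parameters and checking that each is monotone in capacity, so that a state feasible at $h'$ satisfies every HDQ inequality with slack at any $h \le h'$, thereby establishing the containment $F(h)\supseteq F(h')$ that drives the whole argument.
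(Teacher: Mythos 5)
Your Step 2 — the containment $F(h)\supseteq F(h')$ with \emph{identical} flows and occupancies — is where the argument breaks, and it cannot be repaired in the way you sketch. In this paper's formulation the fundamental diagram does not enter as a relaxable capacity inequality but as an \emph{equality} constraint (Equations~\ref{eq:free_dis}--\ref{eq:cong_dis}): whenever a link--time pair is congested, the boundary flow is pinned to $f_{i,j}(k)=(1-\rho_{i,j}(k)L)/h_{i,j}(k)$, so a state feasible under $h'$ violates this equality at any $h<h'$ — the flow is \emph{forced} to increase, which in turn changes the density trajectory via Equation~\ref{eq:density_dis}. Two further couplings compound this. First, the shockwave travel time $n^{w}_{i,j}(k)$ is itself determined by the headway (Equations~\ref{eq:shockwave_dis1}--\ref{eq:shockwave_dis2}), so the upstream queue in Equation~\ref{eq:uq_dis} must be recomputed, and its nonnegativity under the shortened backward delay is a nontrivial fact that must be proved, not assumed. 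Second, even if you retreat to freezing only the boundary inflows and outflows $(u,v)$ and recomputing the internal variables, the weakly larger $f$ inflates the downstream queue $q^{\mathcal{D}}=F-V$, which can violate the capacity bound $\bar{Q}^{\mathcal{D}}$ in Equation~\ref{eq:ineq_1_dis}. So no version of ``the old solution remains feasible'' goes through; monotonicity of sending/receiving functions in capacity, which is the fix you propose, does not address equality-type dynamics.

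The paper's proof (Proposition~\ref{prop:minimum}, \ref{app:minimum}) is instead \emph{constructive}: it perturbs one headway entry at a time and builds a new feasible solution. Lemma~\ref{lemma:1} solves the coupled discrete equations in both regimes to show that, given the same $u_{i,j}(k)$ and $\rho_{i,j}(k-1)$, a smaller headway yields a weakly larger $f_{i,j}(k)$ (this is the rigorous form of your Step 1 intuition). Lemma~\ref{lemma:2} handles exactly the receiving-side delicacy you flagged, proving the same inflows remain admissible because the recomputed upstream queue stays nonnegative despite the shorter shockwave time. The excess flow created by the larger $f$ is then drained by \emph{increasing} outflows toward the destination connectors — which carry no capacity constraints, since Equations~\ref{eq:ineq_1_dis}--\ref{eq:ineq_2_dis} exclude $L_R\cup L_S$ — and the argument sweeps links backward from the destinations, each node in turn ``acting like the destination,'' showing the link-wise TTT weakly decreases at every exchange. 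Iterating over all links and time intervals gives $TTT_{\mathbf{h}^{min}}\le TTT_{\mathbf{h}}$. Your third step (the objective depends only on boundary states) is fine, but you need the exchange-and-sweep construction, not set nesting, to supply the feasible comparison solution.
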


Note that the range of AV headway depends on the minimum safety requirements of AVs, specifications of AVs, and the maximum capacity of the specific road \citep{li2017vehicle}.
Furthermore, similar to the existing SO-DTA models \citep{SHEN20141}, we have Observation~\ref{ob:nonunique} hold, indicating that SO-DTA may be achieved under different headway settings.

\begin{observation}
\label{ob:nonunique}
There can be multiple AV headway settings yielding the same SO-DTA solution in general networks.
\end{observation}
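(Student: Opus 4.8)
The plan is to establish the observation by exhibiting non-uniqueness directly, building on the minimum-headway solution whose optimality is guaranteed by Observation~\ref{ob:1}. The key structural fact I would exploit is that in the HFD/HDQ model the AV headway enters the link dynamics only through the fundamental diagram, and in particular the time headway $h$ on a link determines its capacity (roughly, capacity scales like $1/h$): a smaller headway raises the capacity and hence the flow bounds in the double-queue sending and receiving functions, while a larger headway lowers them. Thus, once the SO-DTA flow pattern is fixed, the headway on a given link matters only insofar as the induced capacity actually constrains the realized flow on that link.

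First, I would take the minimum-headway SO-DTA solution and read off, for every link and every time step, the realized link flow together with the associated upstream and downstream queue states. Second, I would identify the set of links that are never capacity-binding under this solution, i.e. links on which the realized flow stays strictly below the capacity associated with the minimum headway at all times (for instance, any link carrying demand well under its free-flow throughput). On each such link the capacity term in the HDQ sending/receiving functions is slack throughout the horizon, so the link's inflow and outflow profiles are pinned down by demand and by downstream conditions rather than by its own capacity. Third, invoking the monotonicity of the sending/receiving functions in the capacity parameter, I would argue that the headway on such a link can be raised---reducing its capacity---up to the largest value whose induced capacity still dominates the realized flow profile, without activating any previously slack constraint and hence without altering any flow variable, queue trajectory, or travel time anywhere in the network. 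The resulting headway profile differs from the minimum-headway profile yet yields the identical SO-DTA solution, which establishes the claim. To guarantee that the set of non-binding links is nonempty in general, and to supply a concrete witness, I would also present a minimal instance: a single origin--destination pair joined by one link whose capacity, under a whole interval of admissible headways, exceeds the peak demand, so that every headway in that interval produces zero queueing and the same travel times.

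I expect the main obstacle to be the invariance argument in the third step: formally verifying that shrinking the capacity on a non-binding link does not propagate through the network---neither triggering spillback that changes upstream flows nor tightening a downstream receiving constraint that changes downstream flows---so that the global dynamic solution is genuinely unchanged. This hinges on the monotonicity and continuity of the HDQ flow-transition map in the capacity parameter and on the fact that only strictly slack constraints are perturbed; care is needed to ensure the perturbation stays within the admissible headway range and does not reach the boundary at which a constraint becomes active. The concrete single-link example sidesteps this difficulty and by itself suffices for the existential statement, but the perturbation argument is what explains why the non-uniqueness is generic rather than accidental, and it naturally motivates the maximin headway formulation that follows.
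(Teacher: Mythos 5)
Your overall strategy---freeze the optimal traffic dynamics obtained under the minimum headway, then enlarge the headway wherever the headway-related constraints are slack---is the same perturbation idea as the paper's proof (Proposition~\ref{prop:nonuni}, proved in \ref{app:nonuniq}), and your single-link witness is a legitimate degenerate instance of it. However, your slackness criterion is wrong for this model, and that is a genuine gap. You treat the headway as a capacity parameter entering the HDQ sending/receiving bounds and declare a link perturbable whenever the realized flow stays strictly below the headway-induced capacity. In this paper the inflow/outflow capacities $\bar{C}_{i,j}^{\mu},\bar{C}_{i,j}^{\nu}$ in Equation~\ref{eq:ineq_2_dis} are exogenous constants, not functions of $\textbf{h}$; the headway enters instead through the fundamental-diagram constraints \ref{eq:free_dis}--\ref{eq:cong_dis}. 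In the congested regime ($\delta_{i,j}(k)=1$) the relation $\textit{f}_{i,j}(k)=\frac{1-\rho_{i,j}(k)L}{\textit{h}_{i,j}(k)}$ is an \emph{equality}: the flow is then strictly below the FD apex (your ``capacity''), and yet any change of $\textit{h}_{i,j}(k)$ changes $\textit{f}_{i,j}(k)$ and destroys the fixed solution $\textbf{x}$. So ``flow strictly below capacity'' does not imply perturbability; the correct criterion, and the hypothesis the paper's formal statement actually carries, is the regime indicator $\delta_{i,j}(k)=0$ (free flow), under which $\textit{f}_{i,j}(k)=\textit{v}_{i,j}^{f}\rho_{i,j}(k)$ is headway-free and the headway appears only in strict inequalities with slack.

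A second gap: even on a free-flowing link, your invariance argument (``only strictly slack constraints are perturbed'') is incomplete, because the headway also determines the discretized shockwave travel time $n_{i,j}^{w}(k)$ through Equations \ref{eq:shockwave_dis1}--\ref{eq:shockwave_dis2}, a \emph{step function} of $\textit{h}_{i,j}(k)$ that enters the upstream-queue definition \ref{eq:uq_dis}; moreover $n_{i,j}^{w}(k)$ is itself a component of $\textbf{x}$. Raising the headway can bump $n_{i,j}^{w}(k)$ to the next integer without any previously identified constraint having been tight, which changes the upstream queue values and hence breaks feasibility of the unchanged $\textbf{x}$. The paper handles this by confining the perturbed headway to the window $\frac{\Delta_{t}L\,n_{i,j}^{w,*}(k)}{L_{i,j}} \leq \textit{h}_{i,j}(k) < \frac{\Delta_{t}L\,(n_{i,j}^{w,*}(k)+1)}{L_{i,j}}$ (Equations \ref{app:nonuniue2}--\ref{app:nonuniue3}), reducing the whole question to a scalar system of linear inequalities with slack, and dispatches the boundary case $\textit{h}_{i,j}^{*}(k)=\textit{h}_{i,j}^{max}$ via Proposition~\ref{prop:minimum}. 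Your concrete single-link example can be arranged to avoid both pitfalls (free flow throughout, $n^{w}$ constant on the headway interval) and by itself suffices for the existential Observation~\ref{ob:nonunique}, but the general perturbation step as you wrote it would fail on congested link-times and on perturbations crossing a shockwave-time breakpoint.
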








Observation~\ref{ob:nonunique} inspires us to search for the second criterion in the SO-DTA problem. For AVs, safety is always an indispensable factor.  The impact of headway on traffic safety is widely studied in the literature \citep{fenton1979headway,ayres2001preferred}. For example, 
\citet{vogel2003comparison} demonstrates that headway is a suitable indicator of safety and suggests using headway to measure safety for enforcement purposes given that the small headway has the potential to create dangerous driving situations. Moreover, \citet{biswas2022drivers} states that the small headway corresponds to a high risk of crash due to the inadequate response time, and in contrast, the large headway allows drivers to adjust their vehicles to decrease the risk of crash. Therefore, we present Figure \ref{fig:probability} to briefly illustrate a general relationship between the probability of accident and headway. The headway on each road should be between the minimum and maximum headway, and within the range, a larger headway could yield a lower probability of an accident, thus leading to safer traffic conditions.

\begin{figure}[h]
    \centering
    \includegraphics[width=0.5\linewidth]{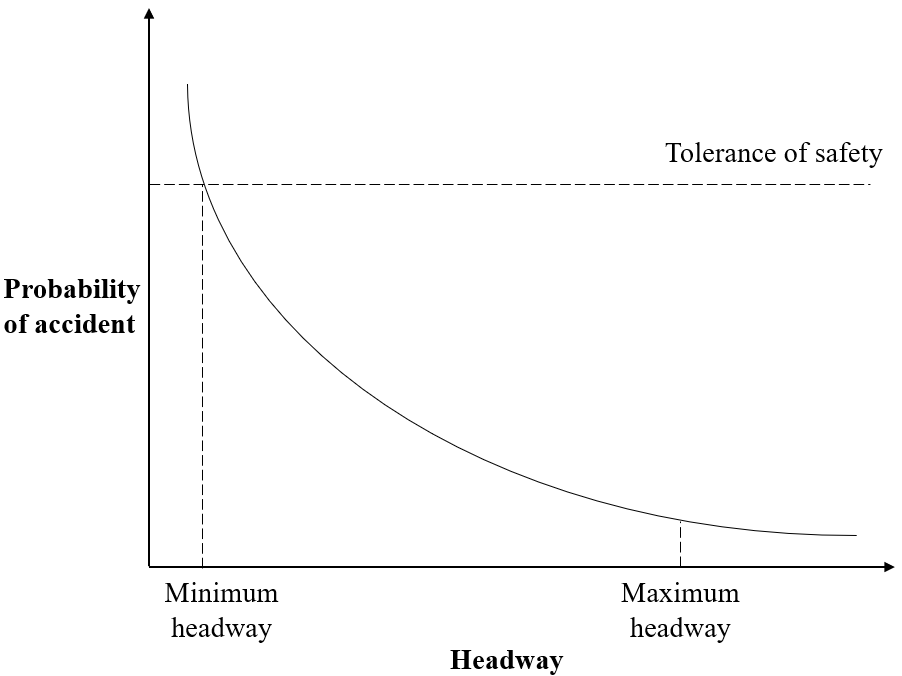}
    \caption{Illustrative relationship between the probability of accident and headway.}
    \label{fig:probability}
\end{figure}

In view of this, this paper proposes a novel concept of the maximin headway of AVs for SO-DTA, as defined in Definition~\ref{def:maximin}.

\begin{definition}[Maximin Headway - Layman's version]
\label{def:maximin}
The largest headway setting for AVs that still achieves the SO-DTA is referred to as the maximin headway of AVs. 
\end{definition}
One can see that the maximin headway is the ``largest'' headway setting that achieves SO-DTA, meaning that such a headway setting is the ``safest'' under the SO-DTA. An illustration of the maximin headway is also presented in Figure~\ref{fig:maximin}. Existing literature mainly focuses on the solutions to SO-DTA \citep{zhang2020path, qian2012system}, while this paper discusses the ``best'' solution among the non-unique solutions of SO-DTA for AVs, which we believe is interesting and unique.  In particular, differences between minimum headway and maximin headway provide policy implications for road segment design and AV deployment.

\begin{figure}[h]
    \centering
    \includegraphics[width=0.7\linewidth]{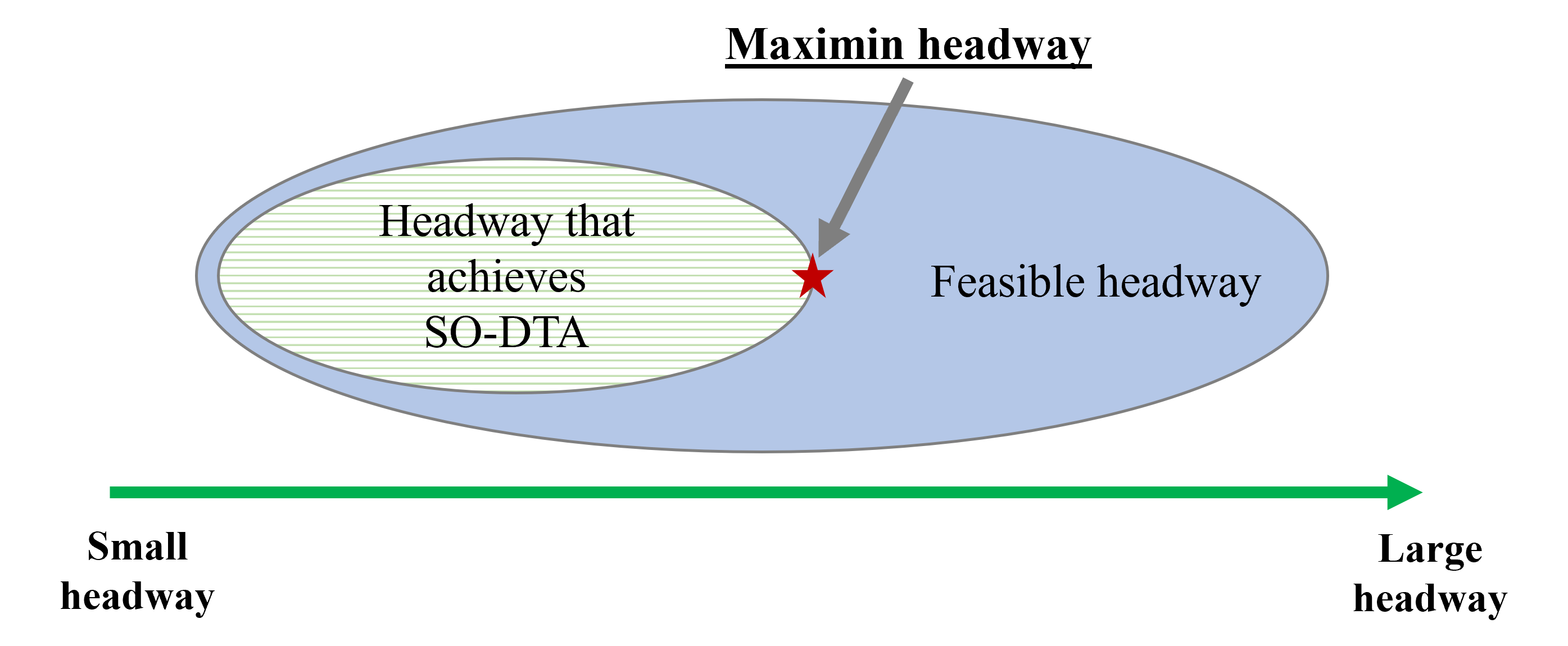}
    \caption{Illustration of the maximin headway for SO-DTA.}
    \label{fig:maximin}
\end{figure}

To summarize, this paper proposes a maximin headway control framework that achieves SO-DTA. Specifically,
we first explore how the headway of AVs would affect the road fundamental diagram and develop a headway-dependent double queue model (HDQ), which is proved to be a generalized form of the double queue model \citep{MA201498}.
Then we formulate the network-wide traffic dynamics with SO-DTA for AVs, followed by the rigorous proof that Observation~\ref{ob:1} and Observation \ref{ob:nonunique} hold. Then we formally define the maximin headway control for SO-DTA, propose a provable approach to solve it analytically, and show the uniqueness of the maximin headway. Lastly, numerical experiments are conducted on the small and large networks to verify our proofs.


The contributions of this paper are summarized as follows:
\begin{itemize}
    \item The headway-dependent double queue (HDQ) model is developed to depict the effects of AV headway on network-wide traffic dynamics.
    \item It is analytically proved that the SO-DTA can be achieved under minimum headway.
    \item It first time proposes and solves the maximin headway control of AVs for SO-DTA in general dynamic networks. 
\end{itemize}

The remainder of this paper is organized as follows: related literature is reviewed in Section~\ref{sec:lit}. Section \ref{sec:model} presents headway-dependent flow dynamics using HFD and HDQ, Section \ref{sec_system_optimal_headway_formulation} develops the formulation of optimal headway control, Section \ref{sec_Optimal_maximin_Headway_Formulation} proposes the formulation of maximin headway control and the corresponding solution algorithm, and Section \ref{sec:Numerical_Results} shows the numerical results in the small and large networks. Lastly, conclusions are drawn in section \ref{sec:Conclusion}.

\section{Literature review}
\label{sec:lit}
In this section, we summarize the literature regarding the traffic network models for AVs. In particular, we compare with existing studies on dynamic link models, network models, and SO-DTA, to further highlight the contributions of this paper.

\subsection{Dynamic link models}

Dynamic link models are an essential component that models the flow dynamics on road segments, and 
it is necessary to develop an appropriate link model to depict the effects of AV headway control. The Lighthill, Whitham, and Richards (LWR) model forms the basis of the dynamic network loading model (DNL) by capturing key congested patterns on road segments \citep{lighthill1955kinematic,richards1956shock}. The LWR model is a system of partial differential equations (PDE), which consists of three parts: static hydrodynamics relation for flow, density, and speed; fundamental diagram; and dynamic continuity equation. Many famous DNL models are designed to solve the PDE system of LWR in different methods.

The cell transmission model (CTM) provides an approximate solution to the PDE system of the LWR model through the discretization in time and space \citep{DAGANZO199579}. CTM is formulated based on the density in each cell and the flow rates over the boundaries of cells. While the CTM is capable of accurately capturing spillback phenomena and determining detailed traffic states within a link, it results in high computational costs in our problem due to the potentially larger dimension of a DTA problem when using CTM compared to other link models \citep{nie2005comparative}. Compared with CTM which approximately solves LWR by discretizing in space and time, the link transmission model (LTM) directly solves the LWR system using the Newell-Daganzo method \citep{LTM}. LTM only tracks the cumulative number of vehicles at the entrance and exit of a link with a smaller computation cost, and it could also capture the spillbacks by incorporating the receiving constraints for congested states and the sending constraints for free-flow states. Recently, \citet{osorio2011dynamic} first proposes the concept of double queue, where the dynamics of a link are captured by the downstream queue and upstream queue. The downstream queue models the delay in free-flow time, the queuing process, and the outflow at the downstream end of a link. On the other hand, the upstream queue captures the delay caused by backward shockwave and the congestion propagation process, as well as the inflow at the upstream end of a link. The continuous form of DQ is developed by \citet{MA201498} and it is proved that DQ could be directly derived by LTM but with the simplified mathematical representation compared with LTM. Besides, the spillback in DQ could be tracked by the traffic states at either the entrance or the exit of a link \citep{MA201498}. Furthermore, \cite{NGODUY202156} proposes a multi-class two regime transmission model (MTTM) to capture the evolution of the queue lengths, where a link is divided into two areas: free-flow area and congested area. But MTTM is a nonlinear model for traffic states and it would add complexity and difficulty in solving the traffic states and analyzing the properties in DTA problems. 

The abovementioned models are all capable of capturing the spillback phenomenon representing the time when the congestion shockwave reaches the entrance of a link. Besides, they all require a node model to satisfy the invariance principle \citep{lebacque2005first}. To conclude, all these models are also designed to solve the PDE system of the LWR model, which actually depicts the dynamics of traffic states. Technically, the choice of DNL models would not significantly influence the modeling of the dynamic networks, while we prefer a simpler model that has more analytical capabilities. Compared with other DNL models, DQ  has the advantages of low computation cost, simplified mathematical formulation, and linear representations for traffic state variables. Therefore, we choose the DQ model as the basis to incorporate the effects of headway in flow dynamics in this paper.

\subsection{DTA with AVs}

Dynamic traffic assignment (DTA) models are designed to explore the time-dependent traffic state pattern given a time-varying travel demand on general traffic networks \citep{szeto2006dynamic}. There are two representative types of DTA: dynamic user equilibrium (DUE) and system optimal dynamic traffic assignment (SO-DTA). SO-DTA quantifies the best possible performance of a traffic network based on the dynamic extension of Wardrop’s second principle \citep{wardrop1952road}. For its advantage of exploring the spatial-temporal traffic dynamics to achieve the best performance of traffic networks, the research of the application of SO-DTA problem focuses on many different areas, such as emission reduction \citep{lu2016eco,ma2017emission,long2018link,tan2021emission}, congestion mitigation \citep{levin2017congestion,samaranayake2018discrete,liu2020integrated}, parking services \citep {levin2019linear,qian2014optimal}, traffic signal control \citep{lo2001cell,han2016robust}, emergency evacuation \citep{liu2006cell,chiu2007modeling} and network design \citep{waller2001stochastic,waller2006linear}.

Different from human-driven vehicles (HVs), AVs show great potential in improving control efficiency and may play an important role in achieving SO-DTA \citep{wang2018dynamic}. 
The current research on traffic assignment with AVs is summarized in Table \ref{literature}. For the static traffic assignment problem of AVs, \cite{CHEN2016143} proposes a mathematical model to derive an optimal AV lane control framework for mixed traffic, where AV lines are only permitted for AVs and it is necessary to specify the deployment plan. \citet{Bagloee} focuses on a static traffic assignment problem on mixed traffic where AVs are seeking the system optimal (SO) and HVs are seeking for user equilibrium (UE), and considers the influence of many realistic features in the solution, such as road capacity and elastic demand. \cite{CHEN201744} presents an optimal AV zone design framework to improve the performance of the whole traffic network, where the government may make a plan of AVs zones only allowing AVs to enter to promote the adoption of AVs in the future. \cite{ZHANG201875} proposes a route control scheme to balance the system efficiency and the control intensity by controlling a relatively small portion of all vehicles. \cite{WANG2019139} addresses the multi-class traffic assignment problem by assuming AVs follow SO and HVs follow cross-nested logit (CNL) stochastic user equilibrium (SUE) to capture HV drivers' uncertainty due to their limited knowledge of traffic conditions in the whole network. \cite{WANG2020227} develops a mixed behavioral equilibrium model in mixed traffic considering different mode choices for HVs and AVs, 
and \cite{chen2020path} determines the minimum control ratio of AVs to achieve SO by path controlling. \citet{LEVIN2016103} proposes a multiclass cell transmission model for HVs and AVs and proves the consistency with the hydrodynamic theory. A collision avoidance car-following model incorporating drivers’ reaction time is developed to explore the effect of reaction time on link capacity and backward wave speed. Their works provide valuable insights into the wide applications of AVs in avoiding collision by a smaller reaction time compared with human drivers.

\begin{table}[H]
    \centering
    \resizebox{1.02\textwidth}{!}{
    \begin{tabular}{lllll}
        \toprule
        Reference      & Type  &  HV & AV & Description \\
        \midrule
        \cite{LEVIN2016103} &  Dynamic  & DUE & DUE 
        & A multi-class CTM model for mixed traffic \\
        \cite{CHEN2016143}  & Static & UE & UE   
        & An optimal AV lanes control framework for mixed traffic\\
        \cite{Bagloee} & Static & UE & SO 
        & The mixed traffic network with elastic demand \\
        \cite{CHEN201744} & Static & UE & SO 
        & Optimal design of AV zones for mixed traffic \\
        \cite{ZHANG201875} & Static & UE & SO 
        & A route control scheme to balance efficiency and control \\
        \cite{WANG2019139} & Static & CNL & UE 
        & A multi-class traffic assignment model for mixed traffic \\
        \cite{WANG2020227} & Static & SUE  & SO
        & A mixed equilibrium model with mode choice \\
        \cite{chen2020path} & Static & UE  & SO
        & Minimum control ratio of AVs for SO in mixed traffic\\
        \cite{NGODUY202156} & Dynamic & SO-DTA & SO-DTA 
        & Impact of AVs in traffic network by multi-class SO-DTA \\
        \cite{9408374} & Dynamic & \textbackslash & SO-DTA
        & A unified optimization strategy framework for shared AVs \\
        {\bf This paper} & Dynamic & \textbackslash & SO-DTA 
        & Maximin AV headway control for SO-DTA on general networks \\
        \bottomrule
    \end{tabular}
    }
    \caption{Literature summary of traffic assignment with AVs.}
    \label{literature}
\end{table}

In summary, the work related to the traffic assignment problem of AVs is mainly static and there is few work exploring how AV headway control would affect the traffic states in whole general networks as a SO-DTA problem. 
There are few studies discussing the roles of headway in the control of  AVs. 
Therefore, this paper is designed to propose an optimal AV headway control framework for SO-DTA in general networks. 


\subsection{Uniqueness of SO-DTA}

The solution of SO-DTA provides valuable insights for decision-makers to develop traffic management strategies \citep{van2016user}. Most of the related works focus on the solution algorithm of SO-DTA \citep{chow2009properties,zhang2020path,qian2012system,long2019link}, and there are a few works discussing the non-uniqueness and properties of the solutions of SO-DTA. For example, \citet{SHEN20141} proves that SO-DTA may have multiple solutions with the same objective but different queue locations in networks. To explore the optimal queue distribution for the solution of SO-DTA, \citet{MA201498} proposes the solution algorithm and discusses the existence of the free-flow optimal solution of SO-DTA that all vehicles are only allowed to wait in the origin nodes, and \citet{ngoduy2016optimal} optimizes the heterogeneity of the queue lengths and the total queue lengths in traffic networks for all solutions of SO-DTA. Though both of the above studies discuss the non-uniqueness of the solutions of SO-DTA and propose the criteria to ``pick up'' the desired solution of SO-DTA, the criteria of both works are mainly related to congestion patterns. In reality, decision-makers could develop traffic management strategies considering different perspectives.
Therefore, this paper discusses the ``safest'' solution among the non-unique solutions of SO-DTA and proposes the maximin headway in this problem as the largest headway solution in SO-DTA.

\section{Modeling headway-dependent flow dynamics}
\label{sec:model}

In this section,  we propose the continuous-time form of the headway-dependent dynamic traffic flow. The continuous-time formulation is generally used in DNL, such as the continuous-time form of LTM \citep{han2016continuous,jin2015continuous}, continuous DQ \citep{MA201498} continuous-time point queue (PQ) model \citep{ban2012continuous}. To this end, we first present the headway-dependent fundamental diagram (HFD) to capture the effects of time headway on FD. Then we develop the headway-dependent double queue (HDQ) model and prove that HDQ is a generalized form of continuous-time DQ \citep{MA201498}. Lastly, the network-wide flow dynamics are modeled through a series of linear constraints.

\subsection{Notations}
The road network is represented as a directed graph \textit{G}(\textit{V},\textit{E}), where \textit{V} is the set of all nodes including dummy nodes and \textit{E} is the set of all links including O-D connectors. We denote $\widetilde{R}$ as the set of dummy origins and $\widetilde{S}$ as the set of dummy destinations. Naturally, we have $\widetilde{R}\subset\textit{V}$ and $\widetilde{S}\subset\textit{V}$. The frequently used notations are stated in Table \ref{notation_con} in \ref{app:notations}.

\subsection{Headway-dependent fundamental diagram}
\label{sec:HFD}

This section explores how different headways would affect the fundamental diagram.
Without loss of generality, it is assumed that the headway and length of AV are homogeneous. To be specific, headway can be categorized into space and time headway. The space headway refers to the distance between the rear bumper of one vehicle and the front bumper of the vehicle following it; time headway refers to the time interval that elapses from the moment the rear of one vehicle passes a fixed point until the front bumper of following vehicle passes the same point \citep{guo2012autonomous,zhu2018distributed}. The space headway is not sufficient to represent the safety between AVs, while the time headway has been widely used in the literature regarding the AV control \citep{chen2020connected,becker2022driver}. The time headway in this study is capable of measuring the probability of accidents by considering both the spacing and speed of AVs, which serves as a desirable control variable that trades off efficiency and safety.
Then we present the fundamental diagram for AVs based on the results by \citet{ZHOU2020102614}. 
Considering link $(i,j)$, we denote $\Delta_{x_{i,j}}$ as the spacing of AVs in link $(i,j)$, $\textit{v}_{i,j}$ as the speed of AVs on link $(i,j)$, $\textit{h}_{i,j}$ as the headway of AVs in link $(i,j)$ and \textit{L} as the length of each AV. The relationship between the headway $\textit{h}_{i,j}$ and spacing $\Delta_{x_{i,j}}$ can be described as follows:

\begin{equation}
\Delta_{x_{i,j}}=\textit{h}_{i,j}\textit{v}_{i,j}+\textit{L}.
\nonumber
\end{equation}

Besides, the relationship between the spacing $\Delta_{x_{i,j}}$ and density $\rho_{i,j}$ is shown as follows:
\begin{equation}
\Delta_{x_{i,j}}=\frac{1}{\rho_{i,j}}.
\nonumber
\end{equation}



The speed $\textit{v}_{i,j}$ should be limited within free-flow speed $\textit{v}_{i,j}^f$ as follows:

\begin{equation}
0\leq\textit{v}_{i,j}\leq\textit{v}_{i,j}^f. \nonumber
\end{equation}

Combining the above equations, we have Equation \ref{eq:FD} to represent the headway-dependent fundamental diagram (HFD) of AVs for $(i,j)\in E\setminus(L_R\cup L_S)$ and $t\in[0,T]$, where $L_R$ and $L_S$  represent the set of origin and destination connectors, respectively.

\begin{eqnarray}
\textit{f}_{i,j}(t) &=& \left\{
\begin{array}{lcl}
\textit{v}_{i,j}^{f}{~} \rho_{i,j}(t),  & &   {0\leq\rho_{i,j}(t)  \leq \frac{1}{\textit{h}_{i,j}(t)\textit{v}_{i,j}^f+\emph{L}}} \\
\frac{1-\rho_{i,j}(t)\emph{L}}{\textit{h}_{i,j}(t)}, &   &{\frac{1}{\textit{h}_{i,j}(t)\textit{v}_{i,j}^f+\emph{L}}   \leq \rho_{i,j}(t)\leq\frac{1}{\emph{L}}}
\end{array} \right. \nonumber\\
&=& \min\left\{\textit{v}_{i,j}^{f}{~} \rho_{i,j}(t),~ \frac{1-\rho_{i,j}(t)\emph{L}}{\textit{h}_{i,j}(t)} \right\}, \quad\quad 0 \leq \rho_{i,j}(t) \leq \frac{1}{L}
\label{eq:FD}
\end{eqnarray}
where $\rho_{i,j}(t)$ and $\textit{h}_{i,j}(t)$ are the density of the flow area and headway of link $(i,j)$ at time $t$, respectively. Figure \ref{fig:FD} presents the effect of headway on flow by HFD based on Equation~\ref{eq:FD}, where $h^{1}<h^{2}<h^{3}<h^{4}<h^{5}$. This figure is also validated by the experimental results from \citet{SHI2021279}, which empirically validates the fundamental diagram by collecting GPS trajectories of a platoon of AVs.

\begin{figure}[h]
    \centering
    \includegraphics[width=\linewidth]{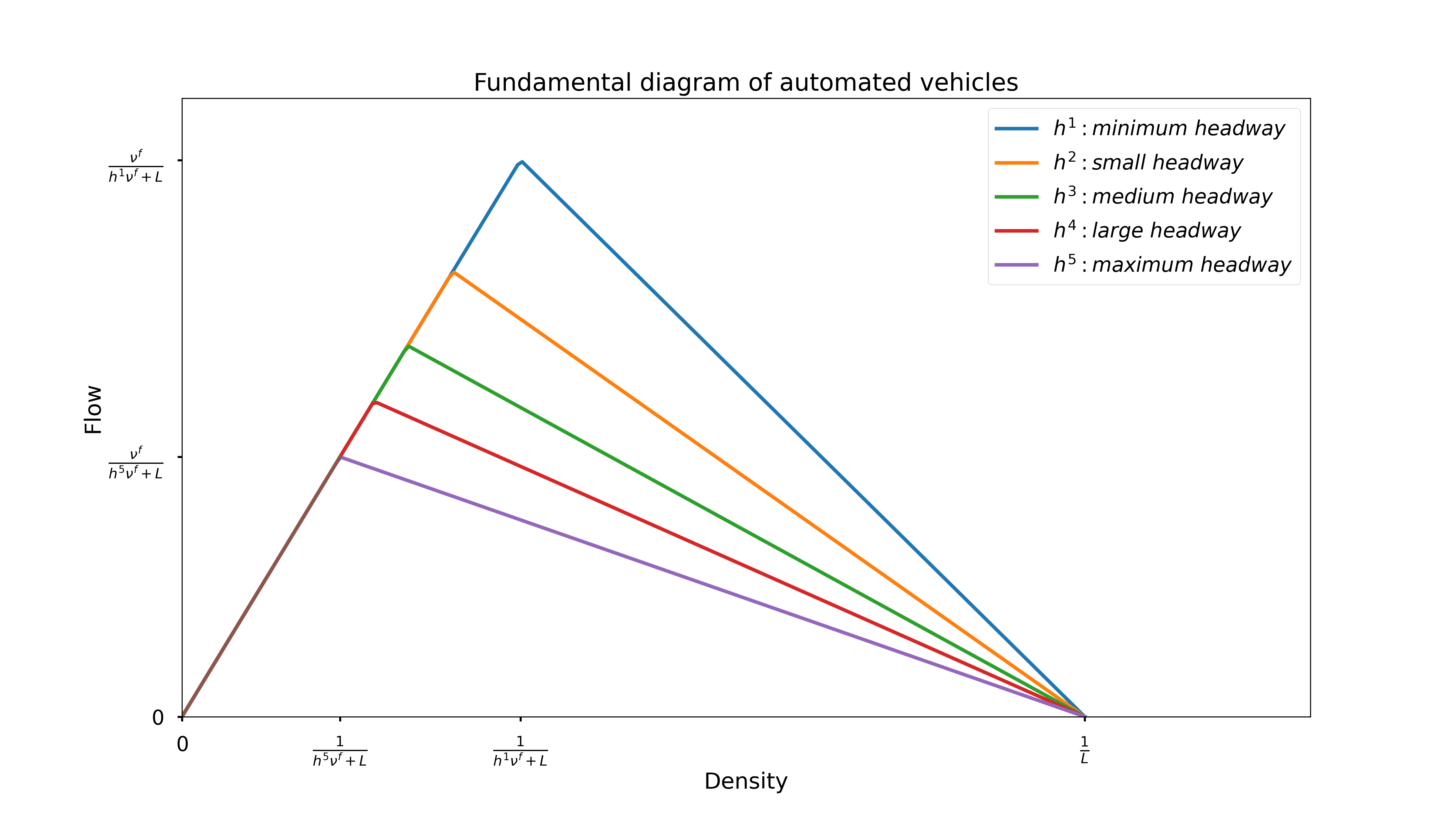}
    \caption{Fundamental diagram of AVs with different headway.}
    \label{fig:FD}
\end{figure}

As shown in Figure \ref{fig:FD}, headway could affect the HFD of AVs in the following aspects:

\begin{itemize}
    \item Scope of the free-flow region: a smaller headway corresponds to a larger scope of the free-flow region.
    \item Upper bound of flow: a smaller headway corresponds to a larger upper bound of flow.
    \item Wave travel speed in congested states: a smaller headway corresponds to a higher wave travel speed in congested states.
\end{itemize}


The wave travel time in congested states is to capture how the traffic conditions in the downstream link boundary would influence traffic conditions in the upstream link boundary \citep{LTM}.
Suppose $\tau_{i,j}^{w}(t)$ denotes the wave travel time in congested states on link $(i,j)$ at time $t$, representing the used time traveling from the exit to entrance when reaching the entrance at time $t$. If the fundamental diagram is fixed, we would have a constant wave travel time in congested states $\tau_{i,j}^{w}=\frac{L_{i,j}}{\textit{v}_{i,j}^w}$, where $\textit{v}_{i,j}^w$ is the wave travel speed in congested states on link $(i,j)$ and $L_{i,j}$ is the length of link $(i,j)$. However, in HFD, Equation \ref{eq:FD} indicates that the wave travel speed in congested states $v_{i,j}^{w}(t)$ in HFD is time-dependent given a dynamic headway $h_{i,j}(t)$ and we formulate $v_{i,j}^{w}(t)$ in the following equation as the slope of HFD in congested states. 

\begin{equation}
    v_{i,j}^{w}(t) = \frac{L}{h_{i,j}(t)}
    \nonumber
\end{equation}

Then Equation \ref{eq:shockwave_con} indicates that the wave travel time in congested states should satisfy Equation~\ref{eq:shockwave_con} for $(i,j)\in E\setminus(L_R\cup L_S),~ s^{\prime}\in \widetilde{S}$ and $t\in[0,T]$.

\begin{equation}
\int_{\textit{t}-\tau_{i,j}^{w}(t)}^{\textit{t}} v_{i,j}^{w}(\hat{t})~d\hat{t} = 
\int_{\textit{t}-\tau_{i,j}^{w}(t)}^{\textit{t}}\frac{\textit{L}}{\textit{h}_{i,j}(\hat{t})}~d\hat{t} =L_{i,j},
\label{eq:shockwave_con}
\end{equation}
where $L_{i,j}$ is the length of link $(i,j)$.

\subsection{Headway-dependent double queue model}
\label{sec:HDQ}

This section proposes a dynamic link model considering time-dependent AV headway, namely the headway-dependent double queue (HDQ) model,  We further prove that HDQ is a generalized form of the double queue (DQ) model.

In HDQ, each link is divided into two areas: flow area and buffer area. The flow area depicts the traffic flow dynamics behavior and captures the upstream queue in the link, and we suppose the traffic flow would propagate based on HFD. The buffer area captures the downstream queue in the link and the maximum number of queued vehicles is limited based on the holding capacity in the buffer area \citep{GUO202087}. An illustration of the HDQ is shown in Figure~\ref{fig:link}.  
One can see that the HDQ is inspired by the MTTM and DQ model, and it shares similarities with the Spatial Queue model \citep{gawron1998iterative}.

\begin{figure}[h]
    \centering
    \includegraphics[width=0.7\linewidth]{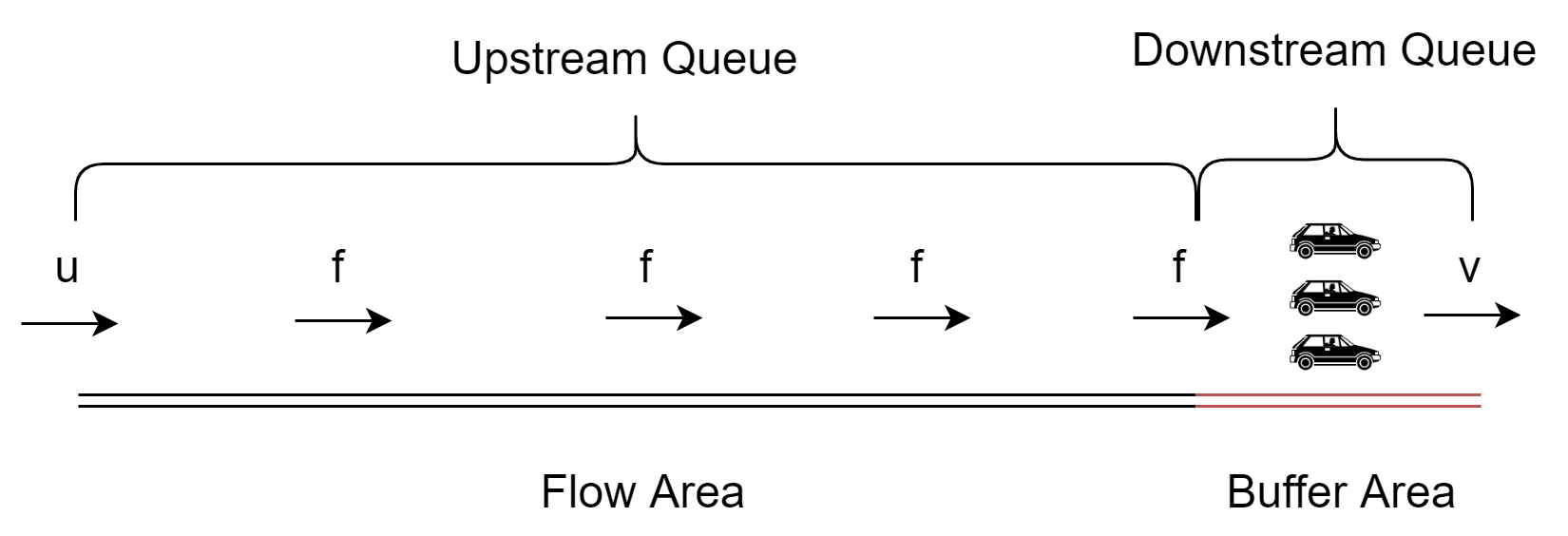}
    \caption{Flow area and buffer area in a link.}
    \label{fig:link}
\end{figure}

The HDQ has the following characteristics that are different from the DQ and MTTM: 
\begin{itemize}
\item The MTTM overlooks the possible congestion in the flow area, while the HDQ considers both upstream and downstream queues.
\item Comparing to the DQ, the HDQ makes use of the variable $f$ to represent the effect of headway control on the traffic states and network conditions.
\end{itemize}

The cumulative flow for $ (i,j)\in E,~s^{\prime}\in \widetilde{S}$ and $t\in[0,T]$ is defined as follows:
\begin{equation}
\textit{U}_{i,j}^{s^{\prime}}(t) = \int_{0}^{\textit{t}}\textit{u}_{i,j}^{s^{\prime}}(\hat{t})~d\hat{t}, \quad 
\textit{V}_{i,j}^{s^{\prime}}(t) =\int_{0}^{\textit{t}}\textit{v}_{i,j}^{s^{\prime}}(\hat{t})~d\hat{t}, \quad
\textit{F}_{i,j}^{s^{\prime}}(t) ~= ~\int_{0}^{\textit{t}}\textit{f}_{i,j}^{~s^{\prime}}(\hat{t})~d\hat{t},
\nonumber
\end{equation}
where $\textit{U}_{i,j}^{s^{\prime}}(t)$, $\textit{V}_{i,j}^{s^{\prime}}(t)$ and $\textit{F}_{i,j}^{s^{\prime}}(t)$ are the cumulative inflow of the flow area, the cumulative outflow of the buffer area and cumulative inflow at the boundary of the flow area and buffer area of link $(i,j)$ with destination $s^{\prime}$ at time $t$, respectively. $\textit{u}_{i,j}^{s^{\prime}}(t)$, $\textit{v}_{i,j}^{s^{\prime}}(t)$ and $\textit{f}_{i,j}^{~s^{\prime}}(t)$ are the inflow of the flow area, outflow of the buffer area and inflow at the boundary of the flow area and buffer area with destination $s^{\prime}$ of link $(i,j)$ at time $t$, respectively.

Then the downstream queue and upstream queue in HDQ for $ (i,j)\in E,~s^{\prime}\in \widetilde{S}$ and $t\in[0,T]$ are presented as follows:
\begin{eqnarray}
\textit{q}_{i,j}^{\mathcal{D},s^{\prime}}(t) &=& \int_{0}^{\textit{t}} \textit{f}_{i,j}^{~ s^{\prime}}(\hat{t}) ~d\hat{t} - \int_{0}^{\textit{t}} \textit{v}_{i,j}^{s^{\prime}}(\hat{t}) ~d\hat{t} + \textit{q}_{i,j}^{\mathcal{D},s^{\prime}}(0) = \textit{F}_{i,j}^{s^{\prime}}(t)-\textit{V}_{i,j}^{s^{\prime}}(t), 
\label{eq:dq_con} \\
\textit{q}_{i,j}^{\mathcal{U},s^{\prime}}(t) &=& \int_{0}^{\textit{t}} \textit{u}_{i,j}^{~ s^{\prime}}(\hat{t}) ~d\hat{t} - \int_{0}^{\textit{t}-\tau_{i,j}^{w}(t)} \textit{f}_{i,j}^{~ s^{\prime}}(\hat{t}) ~d\hat{t} + \textit{q}_{i,j}^{\mathcal{U},s^{\prime}}(0) = \textit{U}_{i,j}^{s^{\prime}}(t)-\textit{F}_{i,j}^{s^{\prime}}(t-\tau_{i,j}^{w}(t)), 
\label{eq:uq_con} 
\end{eqnarray}
where $\textit{q}_{i,j}^{\mathcal{D},s^{\prime}}(t)$ and $\textit{q}_{i,j}^{\mathcal{U},s^{\prime}}(t)$ are the downstream queue and upstream queue with destination $s^{\prime}$ of link $(i,j)$ at time $t$, respectively.

Proposition \ref{prop:HDQ&DQ} presents the relationship between the DQ and HDQ, indicating that HDQ could be reduced to DQ under its assumption.

\begin{proposition}[Reduction of HDQ to DQ]
HDQ is a generalized form of DQ by incorporating the influence of headway; HDQ could be reduced to the DQ under the assumption of DQ.
\label{prop:HDQ&DQ}
\end{proposition}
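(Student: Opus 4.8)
The plan is to treat this as a \emph{reduction under assumption} result: identify the structural hypothesis that singles out the classical double queue (DQ) model, impose it on each headway-dependent object in HDQ, and verify that every such object collapses to its DQ counterpart. The defining feature of DQ is a \emph{fixed} (time-invariant) fundamental diagram, which in the HFD framework is precisely the statement that the headway is constant on each link, $\textit{h}_{i,j}(t)\equiv \textit{h}_{i,j}$. I would make this explicit at the outset so that ``the assumption of DQ'' is unambiguous, and then work through the three headway-dependent ingredients of HDQ in turn: the fundamental diagram \eqref{eq:FD}, the shockwave travel time \eqref{eq:shockwave_con}, and the queue definitions \eqref{eq: down_HDQ} and \eqref{eq: up_HDQ}.

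First I would substitute $\textit{h}_{i,j}(t)\equiv \textit{h}_{i,j}$ into \eqref{eq:FD}. The critical density $1/(\textit{h}_{i,j}\textit{v}_{i,j}^f+\emph{L})$ then becomes a constant, the free-flow branch $\textit{v}_{i,j}^f\rho_{i,j}(t)$ is unchanged, and the congested branch $(1-\rho_{i,j}(t)\emph{L})/\textit{h}_{i,j}$ becomes a fixed affine function of density with slope $-\emph{L}/\textit{h}_{i,j}$. This is exactly the static triangular fundamental diagram underlying LTM and DQ, so the first reduction is immediate. Next I would integrate \eqref{eq:shockwave_con} under constant headway, obtaining $(\emph{L}/\textit{h}_{i,j})\,\tau_{i,j}^{w}(t)=L_{i,j}$, i.e. $\tau_{i,j}^{w}(t)=L_{i,j}\textit{h}_{i,j}/\emph{L}$, which is constant. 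Reading the slope of the congested branch shows the backward wave speed is $\textit{v}_{i,j}^w=\emph{L}/\textit{h}_{i,j}$, so this recovers the fixed DQ shockwave delay $\tau_{i,j}^{w}=L_{i,j}/\textit{v}_{i,j}^w$ and confirms internal consistency.

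Finally, I would substitute the now-constant $\tau_{i,j}^{w}$ into the HDQ queues \eqref{eq: down_HDQ} and \eqref{eq: up_HDQ}. The one extra ingredient of HDQ relative to DQ is the internal boundary flow $\textit{f}_{i,j}$ with cumulative $\textit{F}_{i,j}$; to match the DQ two-curve formulation I would use the free-flow propagation rule encoded in \eqref{eq:f_con}, which forces the boundary flow to be the inflow delayed by the free-flow travel time $\tau_{i,j}=L_{i,j}/\textit{v}_{i,j}^f$, so that $\textit{F}_{i,j}(t)=\textit{U}_{i,j}(t-\tau_{i,j})$ in the free-flow regime. Substituting this identity into $\textit{q}_{i,j}^{\mathcal{D}}(t)=\textit{F}_{i,j}(t)-\textit{V}_{i,j}(t)$ reproduces the DQ downstream queue via the forward delay, while the congested regime is carried by the backward (shockwave) delay in $\textit{q}_{i,j}^{\mathcal{U}}(t)=\textit{U}_{i,j}(t)-\textit{F}_{i,j}(t-\tau_{i,j}^{w})$, recovering the DQ upstream queue. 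The main obstacle will be this last step: carefully eliminating the auxiliary curve $\textit{F}_{i,j}$ while tracking both the forward free-flow delay and the backward shockwave delay, and confirming that the free-flow/congested case split in \eqref{eq:f_con} is exactly what makes the boundary flow coincide with the delayed inflow, so that no information carried by the separate flow/buffer partition is lost in the collapse to DQ.
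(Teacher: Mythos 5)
Your first two steps are sound and, if anything, more explicit than the paper's own treatment: freezing $\textit{h}_{i,j}(t)\equiv \textit{h}_{i,j}$ does collapse the HFD to a fixed triangular diagram, and integrating the shockwave condition does give the constant delay $\tau_{i,j}^{w}=L_{i,j}\textit{h}_{i,j}/\textit{L}$ with wave speed $\textit{v}_{i,j}^{w}=\textit{L}/\textit{h}_{i,j}$ (the paper's proof does not even bother with this part, going straight to the queue formulas). Your downstream-queue step also coincides with the paper's: under the DQ free-flow propagation assumption $\textit{u}_{i,j}(t-\tau_{i,j}^{f})=\textit{f}_{i,j}(t)$ one has $\textit{F}_{i,j}(t)=\textit{U}_{i,j}(t-\tau_{i,j}^{f})$, hence $\textit{q}_{i,j}^{\mathcal{D}}(t)=\textit{F}_{i,j}(t)-\textit{V}_{i,j}(t)=\textit{U}_{i,j}(t-\tau_{i,j}^{f})-\textit{V}_{i,j}(t)=\textit{q}_{i,j}^{d}(t)$, which is exactly the paper's add-and-subtract identity $\textit{q}_{i,j}^{d}(t)=\textit{q}_{i,j}^{\mathcal{D}}(t)+\textit{U}_{i,j}(t-\tau_{i,j}^{f})-\textit{F}_{i,j}(t)$ with the residual killed by the assumption.

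The genuine gap is the upstream queue, precisely the step you flagged as ``the main obstacle'' but then asserted through. Your own substitution rule gives $\textit{q}_{i,j}^{\mathcal{U}}(t)=\textit{U}_{i,j}(t)-\textit{F}_{i,j}(t-\tau_{i,j}^{w})=\textit{U}_{i,j}(t)-\textit{U}_{i,j}(t-\tau_{i,j}^{w}-\tau_{i,j}^{f})$, whereas the DQ upstream queue is $\textit{q}_{i,j}^{u}(t)=\textit{U}_{i,j}(t)-\textit{V}_{i,j}(t-\tau_{i,j}^{w})$. Since $\textit{V}_{i,j}=\textit{F}_{i,j}-\textit{q}_{i,j}^{\mathcal{D}}$, the two differ by exactly the delayed buffer queue, which is the identity the paper actually proves:
\begin{equation}
\textit{q}_{i,j}^{u}(t)=\textit{q}_{i,j}^{\mathcal{U}}(t)+\textit{q}_{i,j}^{\mathcal{D}}(t-\tau_{i,j}^{w}).
\nonumber
\end{equation}
This residual term is nonzero precisely in the congested situations the upstream queue exists to describe, so constant headway plus free-flow propagation cannot make it vanish; no amount of ``carefully eliminating $\textit{F}_{i,j}$'' closes it as an identity. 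The paper closes the argument differently, by a modeling identification rather than algebra: DQ ties the backward propagation of released space to the \emph{exit} outflow $\textit{v}$, while HDQ ties it to the \emph{boundary} flow $\textit{f}$ (the flow entering the buffer to fill vacated space); under the DQ assumption one ignores this distinction, i.e.\ replaces $\textit{F}_{i,j}$ by $\textit{V}_{i,j}$ in the upstream-queue formula, and it is in this sense that $\textit{q}_{i,j}^{\mathcal{U}}$ reduces to $\textit{q}_{i,j}^{u}$ (equivalently, DQ's upstream queue aggregates HDQ's upstream queue with the $\tau_{i,j}^{w}$-delayed downstream queue). Relatedly, your reading of ``the assumption of DQ'' as constant headway alone is too narrow: the operative DQ assumptions are the two propagation rules (free-flow forward, outflow-driven backward), of which you invoked only the first.
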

\begin{proof}
See \ref{app:generality}.
\end{proof}



We further have Equations \ref{eq:sum_1_con} and \ref{eq:sum_2_con} hold, meaning that traffic flow dynamics variables are the summation of their destination-specified variables for $ (i,j)\in E,~ s^{\prime}\in \widetilde{S}$ and $t\in[0,T]$.

\begin{alignat}{4}
\textit{q}_{i,j}^{\mathcal{D}}(t) &=\sum_{s^{\prime}}\textit{q}_{i,j}^{\mathcal{D},s^{\prime}}(t),   &&\quad
\textit{q}_{i,j}^{\mathcal{U}}(t) = \sum_{s^{\prime}} \textit{q}_{i,j}^{\mathcal{U},s^{\prime}}(t), &&&\quad
\rho_{i,j}(t) = \sum_{s^{\prime}} \rho_{i,j}^{s^{\prime}}(t),  
\label{eq:sum_1_con}\\
\textit{u}_{i,j}(t) &= \sum_{s^{\prime}} \textit{u}_{i,j}^{s^{\prime}}(t), &&\quad \textit{f}_{i,j}(t) = \sum_{s^{\prime}} \textit{f}_{i,j}^{~ s^{\prime}}(t), &&&\quad
\textit{v}_{i,j}(t) = \sum_{s^{\prime}} \textit{v}_{i,j}^{s^{\prime}}(t), 
\label{eq:sum_2_con}
\end{alignat}
where $\textit{q}_{i,j}^{\mathcal{D}}(t)$ and $\textit{q}_{i,j}^{\mathcal{U}}(t)$ are the downstream queue and upstream queue of link $(i,j)$ at time $t$, respectively. $\textit{u}_{i,j}(t)$, $\textit{v}_{i,j}(t)$ and $\textit{f}_{i,j}(t)$ are the inflow of the flow area, outflow of the buffer area and inflow at the boundary of the flow area and buffer area of link $(i,j)$ at time $t$, respectively. $\rho_{i,j}(t)$  and $\rho_{i,j}^{s^{\prime}}(t)$ are the density of the flow area and the density of the flow area with destination $s^{\prime}$ of link $(i,j)$ at time $t$, respectively.

Assuming the density is homogeneous in flow area,  the change rate of density is presented in Equation \ref{eq:density_con} for $(i,j)\in E\setminus(L_R\cup L_S),~ s^{\prime}\in \widetilde{S}$ and  $t\in[0,T]$.

\begin{equation}
\dot{\rho}_{i,j}^{s^{\prime}}(t)=\frac{\textit{u}_{i,j}^{s^{\prime}}(t)-\textit{f}_{i,j}^{~s^{\prime}}(t)}{L_{i,j}}
\label{eq:density_con}
\end{equation}

Additionally, Equation \ref{eq:ineq_1_con} and \ref{eq:ineq_2_con} indicate that the downstream queue, upstream queue, inflow rate, and outflow rate are bounded by their respective capacity for $ (i,j)\in E\setminus(L_{R}\cup L_{S}),~ s^{\prime}\in \widetilde{S}$ and $t\in[0,T]$.

\begin{alignat}{4}
\textit{q}_{i,j}^{\mathcal{D}}(t) &\leq \bar{Q}_{i,j}^{\mathcal{D}},   \quad
\textit{q}_{i,j}^{\mathcal{U}}(t) &&\leq \bar{Q}_{i,j}^{\mathcal{U}},
\label{eq:ineq_1_con}\\
\textit{u}_{i,j}(t) &\leq \bar{C}_{i,j}^{\mu}, \quad 
\textit{v}_{i,j}(t) &&\leq \bar{C}_{i,j}^{\nu},
\label{eq:ineq_2_con}
\end{alignat}
where $\bar{Q}_{i,j}^{\mathcal{D}}$ and $\bar{Q}_{i,j}^{\mathcal{U}}$ are the downstream queue capacity and upstream queue capacity, respectively; $\bar{C}_{i,j}^{\mu}$ and $\bar{C}_{i,j}^{\nu}$ are the inflow and outflow capacity of link $(i,j)$ at time $t$, respectively.
We further have Proposition \ref{prop:property} indicating that the impact to the maximal flow rate of the link is dependent on the HDQ.

\begin{proposition}[Headway-dependent flow rate bounds] The upper bounds of the inflow rate $\textit{u}_{i,j}(t)$ and outflow rate $\textit{v}_{i,j}(t)$ are dependent on the upstream queue $\textit{q}^{\mathcal{U}}_{i,j}(t)$ and downstream queue $\textit{q}^{\mathcal{D}}_{i,j}(t)$, respectively, and their relation is shown as follows.
\begin{equation}
\begin{array}{ccccc}
\qquad\qquad~\textit{u}_{i,j}(t) \leq \begin{cases}
\bar{C}_{i,j}^{\mu}, & \textit{q}_{i,j}^{\mathcal{U}}(t)<\bar{Q}_{i,j}^{u} \\
\min \left\{ \textit{f}_{i,j}(t-\tau_{i,j}^{w}(t)),~ \bar{C}_{i,j}^{\mu} \right\}, & \textit{q}_{i,j}^{\mathcal{U}}(t)=\bar{Q}_{i,j}^{u}
\end{cases} \\
\textit{v}_{i,j}(t) \leq \begin{cases}
\bar{C}_{i,j}^{\nu}, & \textit{q}_{i,j}^{\mathcal{D}}(t)>0 \\
\min \left\{ \textit{f}_{i,j}(t),~ \bar{C}_{i,j}^{\nu} \right\}, &
\textit{q}_{i,j}^{\mathcal{D}}(t)=0
\end{cases}
\end{array}  
\nonumber
\end{equation}
\label{prop:property}
\end{proposition}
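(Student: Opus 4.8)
The plan is to establish each of the four cases separately by a short case analysis on the state of the relevant queue, combining the physical non-negativity and capacity bounds (Equation~\ref{eq:ineq_1_con}) with the queue dynamics obtained by differentiating Equations~\ref{eq: down_HDQ} and~\ref{eq: up_HDQ}. The two \emph{interior} cases are immediate: whenever $q_{i,j}^{\mathcal{U}}(t)<\bar{Q}_{i,j}^{u}$, the asserted bound $u_{i,j}(t)\le\bar{C}_{i,j}^{\mu}$ is nothing but the inflow-capacity constraint in Equation~\ref{eq:ineq_2_con}, and likewise whenever $q_{i,j}^{\mathcal{D}}(t)>0$ the bound $v_{i,j}(t)\le\bar{C}_{i,j}^{\nu}$ is the outflow-capacity constraint. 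So only the two boundary cases require real work.

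For the empty downstream-queue case I would use the fact that $q_{i,j}^{\mathcal{D}}(t)\ge 0$ for all $t$, since it is a physical queue. If $q_{i,j}^{\mathcal{D}}(t)=0$ at the instant $t$, then the queue sits at its lower boundary and cannot decrease, so its right derivative satisfies $\dot{q}_{i,j}^{\mathcal{D}}(t)\ge 0$. Differentiating Equation~\ref{eq: down_HDQ} gives $\dot{q}_{i,j}^{\mathcal{D}}(t)=f_{i,j}(t)-v_{i,j}(t)$, whence $v_{i,j}(t)\le f_{i,j}(t)$. Intersecting this with the outflow-capacity bound yields $v_{i,j}(t)\le\min\{f_{i,j}(t),\,\bar{C}_{i,j}^{\nu}\}$, as claimed. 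This direction is clean because no time delay enters the downstream dynamics.

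The upstream-queue case is analogous but is where the main difficulty lies. Here I would use $q_{i,j}^{\mathcal{U}}(t)\le\bar{Q}_{i,j}^{u}$ (Equation~\ref{eq:ineq_1_con}): when the upstream queue is saturated, $q_{i,j}^{\mathcal{U}}(t)=\bar{Q}_{i,j}^{u}$, it sits at its upper boundary and cannot increase, so $\dot{q}_{i,j}^{\mathcal{U}}(t)\le 0$. Differentiating Equation~\ref{eq: up_HDQ} and applying the chain rule to the delayed cumulative term gives $\dot{q}_{i,j}^{\mathcal{U}}(t)=u_{i,j}(t)-f_{i,j}\!\left(t-\tau_{i,j}^{w}(t)\right)\bigl(1-\dot{\tau}_{i,j}^{w}(t)\bigr)$. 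The hard part is the factor $1-\dot{\tau}_{i,j}^{w}(t)$: because the shockwave travel time is itself headway-dependent and time-varying, this factor need not equal one in general. To recover the clean bound $u_{i,j}(t)\le f_{i,j}(t-\tau_{i,j}^{w}(t))$ stated in the proposition, I would differentiate the shockwave-consistency identity Equation~\ref{eq:shockwave_con} by the Leibniz rule to express $\dot{\tau}_{i,j}^{w}(t)$ through the ratio of the headways at times $t$ and $t-\tau_{i,j}^{w}(t)$, and then argue that over the relevant shockwave window the headway is locally constant (consistent with the homogeneity assumption underlying the HFD), so that $1-\dot{\tau}_{i,j}^{w}(t)=1$; equivalently, one may work directly in the discrete-time implementation, where the delayed term is a lookup and the Jacobian factor does not appear. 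Under either reading $u_{i,j}(t)\le f_{i,j}(t-\tau_{i,j}^{w}(t))$, and intersecting with the inflow-capacity bound gives $u_{i,j}(t)\le\min\{f_{i,j}(t-\tau_{i,j}^{w}(t)),\,\bar{C}_{i,j}^{\mu}\}$, completing the proof.
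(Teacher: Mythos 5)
Your proof is correct, but it reaches the bounds by a genuinely different route than the paper. The paper's proof (\ref{app:HDQ}) never differentiates the queue dynamics: it starts from the LTM sending- and receiving-flow constraints of \cite{LTM}, namely $S_{i,j}(t)\le F_{i,j}(t+\Delta_{t})-V_{i,j}(t)$ and $R_{i,j}(t)\le \bar{Q}_{i,j}^{u}+F_{i,j}(t+\Delta_{t}-\tau_{i,j}^{w}(t))-U_{i,j}(t)$ (Equations~\ref{app:eq_send_HDQ} and~\ref{app:eq_receive_HDQ}), divides by $\Delta_{t}$, and lets $\Delta_{t}\to 0$; the case split falls out because the terms $q_{i,j}^{\mathcal{D}}(t)/\Delta_{t}$ and $\bigl(\bar{Q}_{i,j}^{u}-q_{i,j}^{\mathcal{U}}(t)\bigr)/\Delta_{t}$ diverge when strictly positive and vanish at the boundary, leaving $\min\{f_{i,j}(t),\bar{C}_{i,j}^{\nu}\}$ and $\min\{f_{i,j}(t-\tau_{i,j}^{w}(t)),\bar{C}_{i,j}^{\mu}\}$ exactly in the boundary cases. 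Your viability argument --- sign conditions on one-sided derivatives forced by $q_{i,j}^{\mathcal{D}}\ge 0$ and $q_{i,j}^{\mathcal{U}}\le\bar{Q}_{i,j}^{u}$, combined with differentiating Equations~\ref{eq: down_HDQ} and~\ref{eq: up_HDQ} --- is the differential shadow of the same computation: your $\dot{q}_{i,j}^{\mathcal{D}}(t)\ge 0$ at $q_{i,j}^{\mathcal{D}}(t)=0$ plays the role of the paper's divergent $q_{i,j}^{\mathcal{D}}(t)/\Delta_{t}$ term. The one substantive difference is the upstream delay: because the paper freezes the delay at $\tau_{i,j}^{w}(t)$ inside the receiving-flow bound (and the discretization, Equations~\ref{eq:shockwave_dis1}--\ref{eq:shockwave_dis2}, likewise makes $n_{i,j}^{w}(k)$ depend only on $h_{i,j}(k)$), the chain-rule factor you worry about never appears in its computation at all. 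Your derivation surfaces it explicitly --- differentiating Equation~\ref{eq:shockwave_con} indeed gives $1-\dot{\tau}_{i,j}^{w}(t)=h_{i,j}(t-\tau_{i,j}^{w}(t))/h_{i,j}(t)$ --- and your local-constancy (or discrete-time lookup) resolution is precisely the assumption the paper makes implicitly; without it the tight bound would carry the headway ratio, $u_{i,j}(t)\le f_{i,j}(t-\tau_{i,j}^{w}(t))\,h_{i,j}(t-\tau_{i,j}^{w}(t))/h_{i,j}(t)$. What the paper's route buys is inheritance from established LTM theory and freedom from differentiability caveats, since cumulative curves are only evaluated, never differentiated; what yours buys is a self-contained elementary argument that makes the hidden frozen-delay approximation visible. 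One minor polish for your version: since $q_{i,j}^{\mathcal{D}}$ and $q_{i,j}^{\mathcal{U}}$ are only absolutely continuous, phrase the boundary sign conditions via right Dini derivatives (or assume right-continuity of the flow rates), as pointwise derivatives need not exist at the instant in question.
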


\begin{proof}
See \ref{app:HDQ}.
\end{proof}

\subsection{Dynamic network traffic model}
\label{sec:Autonomous_vehicle_Headway_Control}
This section models the network-wide effects of AV headway control by developing a dynamic network model. 

The flow conservation at nodes is stated in Equation \ref{eq:flow_conser_con} for $i \in N\setminus(\widetilde{R}\cup \widetilde{S}),~ s^{\prime}\in \widetilde{S}$ and $t\in[0,T]$. 

\begin{equation}
\sum_{p}\textit{v}_{p,i}^{s^{\prime}}(t)=\sum_{j}\textit{u}_{i,j}^{s^{\prime}}(t)
\label{eq:flow_conser_con}
\end{equation}

As the node model is not the main contribution of this paper, we adopt the simplified model that only considers the conservation law, while other advanced node models can be also incorporated if necessary.
For example, node models could be developed in macroscopic levels to reduce congestion combined with AVs \citep{li2023optimal}, prevent the conflicting movement at intersections \citep{zhu2015linear} and eliminate the traffic-holding phenomenon \citep{jabari2016node}. Similarly, our proposed framework could also incorporate the node model and apply corresponding strategies at intersections to avoid conflicts, such as traffic signal control \citep{yu2018optimal}.

The link conservation is ensured through the HDQ model. Equation \ref{ineq:nonnega_con} presents the non-negative constraint of all the variables involved for $(i,j)\in E,~ s^{\prime}\in \widetilde{S}$ and $t\in[0,T]$.

\begin{equation}
\rho_{i,j}^{s^{\prime}}(t),~ \textit{u}_{i,j}^{s^{\prime}}(t), ~ \textit{f}_{i,j}^{~s^{\prime}}(t), ~ \textit{v}_{i,j}^{s^{\prime}}(t), ~ \textit{q}_{i,j}^{\mathcal{D},s^{\prime}}(t), ~
\textit{q}_{i,j}^{\mathcal{U},s^{\prime}}(t)\geq0
\label{ineq:nonnega_con}
\end{equation}

Besides, the number of AVs in link $(i,j)$  with destination $s^{\prime}$ at time $t$ could be represented as the sum of the number of AVs in the flow area and buffer area, as shown in Equation~\ref{eq:sum}.

\begin{equation}
\label{eq:sum}
\begin{array}{clll}
 \textit{N}_{i,j}^{s^{\prime}}(t) &=&  \int_{\textit{0}}^{\textit{t}}\textit{u}_{i,j}^{s^{\prime}}(\hat{t}) ~d\hat{t} - \int_{\textit{0}}^{\textit{t}}\textit{v}_{i,j}^{s^{\prime}}(\hat{t}) ~d\hat{t}  \\
&=& \int_{\textit{0}}^{\textit{t}}\textit{u}_{i,j}^{s^{\prime}}(\hat{t}) ~d\hat{t} -
\int_{\textit{0}}^{\textit{t}}\textit{f}_{i,j}^{~s^{\prime}}(\hat{t}) ~d\hat{t} +
\int_{\textit{0}}^{\textit{t}}\textit{f}_{i,j}^{~s^{\prime}}(\hat{t}) ~d\hat{t} -
\int_{\textit{0}}^{\textit{t}}\textit{v}_{i,j}^{s^{\prime}}(\hat{t}) ~d\hat{t} \\ &=& L_{i,j} ~\rho_{i,j}^{s^{\prime}}(t)+\textit{q}_{i,j}^{\mathcal{D},s^{\prime}}(t)
\end{array} 
\end{equation}

The initial condition of the network is stated in Equation \ref{eq:init_con} for $(i,j)\in E, ~ s^{\prime}\in \widetilde{S}$ and $t\in[0,T]$, which guarantees the whole traffic network is empty before loading.

\begin{equation}
\rho_{i,j}^{s^{\prime}}(0), ~
\textit{q}_{i,j}^{\mathcal{D},s^{\prime}}(0)=0
\label{eq:init_con}
\end{equation}

The end condition constraint is presented in Equation \ref{ineq:end_con} for $(i,j)\in E, ~ s^{\prime}\in \widetilde{S}$ and $t\in[0,T]$, which guarantees the whole traffic network is empty in the end. 
\begin{equation}
\textit{q}_{i,j}^{\mathcal{D},s^{\prime}}(T)=0, ~
\rho_{i,j}(T) \leq \frac{1}{L_{i,j}}
\label{ineq:end_con}
\end{equation}

Importantly, the change rate of link density can be represented by $\dot{\rho}_{i,j}(t)=-\frac{\textit{v}_{i,j}^{f}\,\rho_{i,j}(t)}{L_{i,j}}, t\geq t_{0}$ based on Equations~\ref{eq:sum_1_con} to \ref{eq:density_con} when $\textit{u}_{i,j}(t) = 0$. This represents the scenario in which no vehicles will enter the link 
when $\rho_{i,j}(t_{0})>0$. $\rho_{i,j}(t)$ has the shape of the negative exponential function, and hence it will take infinite time to make $\rho_{i,j}(t)=0$. In order to avoid this situation, we set the end constraint of the flow area in link $(i,j)$ as $\rho_{i,j}(T) \leq \frac{1}{L_{i,j}}$, which means that we suppose the flow area is empty when the number of vehicles in flow area is smaller than 1 at the end time $T$. Furthermore, we could also set a large positive constant $c$ as $\rho_{i,j}(T) \leq \frac{1}{cL_{i,j}}$ to ensure the vehicles in the total network is close to 0 at end time $T$.





Equation \ref{eq:demand_con} depicts the number of AVs waiting at origins for $r^{\prime} \in \widetilde{R},~ s^{\prime}\in \widetilde{S}$ and $t \in [0,T]$:

\begin{equation}
\textit{w}_{r^{\prime},r}^{~s^{\prime}}(t)= \int_{\textit{0}}^{\textit{t}}\textit{d}_{r^{\prime},s^{\prime}}(\hat{t}) ~d\hat{t} - \int_{\textit{0}}^{\textit{t}}\textit{v}_{r^{\prime},r}^{s^{\prime}}(\hat{t}) ~d\hat{t}
\label{eq:demand_con}
\end{equation}
where $\textit{w}_{r^{\prime},r}^{~s^{\prime}}(t)$ represents the number of AVs waiting at origins in the origin connectors  $(r^{\prime},r)$ with destination $s^{\prime}$ at time $t$.  $\textit{v}_{r^{\prime},r}^{s^{\prime}}(t)$ is the outflow of origin connectors  $(r^{\prime},r)$ with destination $s^{\prime}$ at time $t$. $\textit{d}_{r^{\prime},s^{\prime}}(t)$ represents the travel demand of O-D pair $(r^{\prime},s^{\prime})$ at time $t$. The demand is restricted in the time interval $[O,T_{1}]$, so we have $\textit{d}_{r^{\prime},s^{\prime}}(t)=0$ for $T_{1}<t \leq T$.
 

Additionally, Equation \ref{ineq:headway_con} limits the minimum and maximum headway for $(i,j)\in E\setminus(L_R\cup L_S),~ s^{\prime}\in \widetilde{S}$ and $t\in[0,T]$.

\begin{equation}
\textit{h}_{i,j;t}^{min}\leq\textit{h}_{i,j}(t)\leq
\textit{h}_{i,j;t}^{max}
\label{ineq:headway_con}
\end{equation}
where $\textit{h}_{i,j;t}^{min}$ and $\textit{h}_{i,j;t}^{max}$ are the minimum and maximum headway of link $(i,j)$ at time $t$. $\textit{h}_{i,j;t}^{min}$ represents the tolerance of safety. If the headway ${h}_{i,j}(t)$ is less than $\textit{h}_{i,j;t}^{min}$, the driving condition in a link would be highly dangerous. $\textit{h}_{i,j;t}^{max}$ depicts the bound of V2V communication. If the headway ${h}_{i,j}(t)$ is larger than $\textit{h}_{i,j;t}^{max}$, the V2V communication may be damaged.

\section{Optimal headway control}
\label{sec_system_optimal_headway_formulation}



In this section, we first formulate the continuous and discretized headway control for SO-DTA.
Then we prove that the minimum headway control could achieve SO-DTA, and it is not the unique solution. 

To search for the SO-DTA, there are also two types of formulations in the community of dynamic traffic assignment (DTA): path-based formulation and link-based formulation. The two formulations are actually equivalent, and we can solve either one to obtain the SO-DTA conditions. Specifically, the link flow and path flow can be converted to each other through the path-link incidence matrix or Dynamic Assignment Ratio (DAR) matrix \citep{ma2018estimating}. Both formulations consider the utility changes due to the change of link-specific headway. In this paper, we adopt the link-based formulation as it can facilitate analytical analysis and solution algorithms, where the headway control is directly applied in each link, then the link flow/cost would be influenced, thus each path flow/cost is also impacted under the link-specific headway control. Therefore, the network-wide effect of link-specific time headway control could be analyzed.

\subsection{Continuous-time formulation}
\label{sec:contin_SO-DTA}

The objective of the SO-DTA aims to minimize total travel time (TTT), which equals to the sum of the travel time on road networks and the waiting time at origins, as represented by Equation~\ref{eq:obj_con}. The decision variables include traffic dynamics $\textbf{x}$ and headway $\textbf{h}$. 


\begin{equation}
\min_{\textbf{x}, \textbf{h}} ~TTT= \int_{\textit{0}}^{\textit{T}}\left(\sum_{r^{\prime}}\int_{\textit{0}}^{\textit{t}}\textit{v}_{r^{\prime},r}(\hat(t)) ~d\hat(t) - \sum_{s^{\prime}}\int_{\textit{0}}^{\textit{t}}\textit{u}_{s,s^{\prime}}(\hat(t)) ~d\hat(t)\right) ~dt + \int_{0}^{T}\sum\limits_{(r^{\prime},s^{\prime})}\textit{w}_{r^{\prime},\textit{r}}^{s^{\prime}}(t) ~dt
\label{eq:obj_con}
\end{equation}
subject to:
\begin{itemize}
\item Headway-dependent fundamental diagram constraints: \textit{Eqs}. (\ref{eq:FD})-(\ref{eq:shockwave_con})
\item Headway-dependent double queue constraints: \textit{Eqs}.  (\ref{eq:dq_con})-(\ref{eq:ineq_2_con})
\item Flow conservation constraints: \textit{Eq}. (\ref{eq:flow_conser_con})
\item Non-negative constraints: \textit{Eq}. (\ref{ineq:nonnega_con})
\item Initial constraints: \textit{Eq}. (\ref{eq:init_con})
\item End constraints: \textit{Eq}. (\ref{ineq:end_con})
\item Other constraints: \textit{Eqs}. (\ref{eq:demand_con})-(\ref{ineq:headway_con})
\end{itemize}




\subsection{Discretized-time formulation}

This section discretizes the formulation proposed in Section~\ref{sec:contin_SO-DTA}. 
We set $\textit{T}=\textit{N}\Delta_{t}$, where $\textit{N}$  represents the number of time intervals and $\Delta_{t}$ represents the length of a time interval. A summary of the notations is shown in Table \ref{notation_dis}.

\begin{table}[H]
    \centering
    \resizebox{1.02\textwidth}{!}{
    \begin{tabular}{lll}
        \toprule
        Notation   & Description \\
        \midrule
         $\rho_{i,j}(k)$ & Density of the flow area of link $(i,j)$ in the  $\textit{k}$th time interval \\
         $\rho_{i,j}^{s^{\prime}}(k)$ & Density of the flow area of link $(i,j)$ with destination $s^{\prime}$ in the  $\textit{k}$th time interval \\
        $\textit{q}_{i,j}^{\mathcal{D}}(k)$ & Downstream queue of link $(i,j)$ in the  $\textit{k}$th time interval  \\
        $\textit{q}_{i,j}^{\mathcal{D},s^{\prime}}(k)$ & Downstream queue of link $(i,j)$ with destination $s^{\prime}$ in the  $\textit{k}$th time interval  \\
        $\textit{q}_{i,j}^{\mathcal{U}}(k)$ & Upstream queue of link $(i,j)$ in the  $\textit{k}$th time interval  \\
        $\textit{q}_{i,j}^{\mathcal{U},s^{\prime}}(k)$ & Upstream queue of link $(i,j)$ with destination $s^{\prime}$ in the  $\textit{k}$th time interval  \\
        $\textit{u}_{i,j}(k)$ & Inflow of the flow area of link $(i,j)$ in the  $\textit{k}$th time interval  \\
        $\textit{u}_{i,j}^{s^{\prime}}(k)$ & Inflow of the flow area of link $(i,j)$ with destination $s^{\prime}$ in the  $\textit{k}$th time interval  \\
        $\textit{f}_{i,j}(k)$ & Inflow at the boundary of the flow area and  buffer area of link $(i,j)$ in the  $\textit{k}$th time interval  \\
        $\textit{f}_{i,j}^{~s^{\prime}}(k)$ & Inflow at the boundary of the flow area and buffer area of link $(i,j)$ with destination $s^{\prime}$ in the $\textit{k}$th time interval \\
        $\textit{v}_{i,j}(k)$ & Outflow of the buffer area of link $(i,j)$ in the  $\textit{k}$th time interval  \\
        $\textit{v}_{i,j}^{s^{\prime}}(k)$ & Outflow of the buffer area of link $(i,j)$ with destination $s^{\prime}$ in the  $\textit{k}$th time interval  \\
        $\textit{h}_{i,j}(k)$ & Headway of automated vehicles on link $(i,j)$ in the $\textit{k}$th time interval  \\
        $\textit{w}_{r^{\prime},r}^{s^{\prime}}(k)$ & Number of automated vehicles waiting in the origin connectors $(r^{\prime},r)$ with destination $s^{\prime}$ in the $\textit{k}$th time interval  \\
        $n_{i,j}^{w}(k)$ & Wave travel time in congested states on link $(i,j)$ in the $\textit{k}$th time interval \\
        \bottomrule
    \end{tabular}
    }
    \caption{Notations in the discretized formulation.}
    \label{notation_dis}
\end{table}

We are now ready to present the discretized version of the headway-dependent SO-DTA formulation as follows: 
\begin{equation}
\min_{\textbf{x},\textbf{h}} ~TTT=  \quad \sum_{r^{\prime}}\sum_{\textit{k}=1}^{\textit{N}}\Delta_{t}\left(\sum_{\textit{l}=1}^{\textit{k}}\Delta_{t}\textit{v}_{r^{\prime},r}(l)\right)-\sum_{s^{\prime}}\sum_{\textit{k}=1}^{\textit{N}}\Delta_{t}\left(\sum_{\textit{l}=1}^{\textit{k}}\Delta_{t}\textit{u}_{s,s^{\prime}}(l)\right)+\sum_{(r^{\prime},s^{\prime})}\sum_{\textit{k}=1}^{\textit{N}}\Delta_{t}\textit{w}_{r^{\prime},\textit{r}}^{s^{\prime}}(k)
\label{eq:obj_dis}
\end{equation}

subject to

1. Headway-dependent fundamental diagram constraints for $(i,j)\in E\setminus(L_R\cup L_S),~ s^{\prime}\in \widetilde{S}$ and $1\leq k \leq N$. 

\begin{equation}
\textit{f}_{i,j}(k) = \min\left\{\textit{v}_{i,j}^{f}{~} \rho_{i,j}(k),~ \frac{1-\rho_{i,j}(k)\emph{L}}{\textit{h}_{i,j}(k)} \right\}, \quad\quad 0 \leq \rho_{i,j}(k) \leq \frac{1}{L}
\label{eq:FD_dis}
\end{equation}




The discretized form of Equation \ref{eq:shockwave_con} representing the wave travel time in congested states is shown as follows.

\begin{eqnarray}
\sum\limits_{l=k-n_{i,j}^{w}(k)}^{k} \Delta_{t} \frac{\textit{L}}{\textit{h}_{i,j}(l)}  &\leq& L_{i,j} 
\label{eq:shockwave_dis3}  \\
\sum\limits_{l=k-n_{i,j}^{w}(k)-1}^{k} \Delta_{t} \frac{\textit{L}}{\textit{h}_{i,j}(l)}  & \geq& L_{i,j} 
\label{eq:shockwave_dis4}
\end{eqnarray}

For the convenience of searching the maximin headway defined in Section \ref{sec_Optimal_maximin_Headway_Formulation}, we assume that wave travel time in congested states $n_{i,j}^{w}(k)$ is only determined by the headway $\textit{h}_{i,j}(k)$ in Equation \ref{eq:shockwave_dis1} and \ref{eq:shockwave_dis2}. This relaxation would not influence the properties and definition of both optimal headway and maximin headway. In the detailed proof, Propositions \ref{prop:feasibility}  to \ref{prop:unique} could also be achieved under Equations \ref{eq:shockwave_dis3} and \ref{eq:shockwave_dis4}. The relaxation is only designed to convert the proposed SO-DTA problem into two LPs to solve the maximin headway in Algorithm \ref{algo:optimal}.
\begin{eqnarray}
n_{i,j}^{w}(k) \Delta_{t} \frac{\textit{L}}{\textit{h}_{i,j}(k)}  & \leq & L_{i,j} 
\label{eq:shockwave_dis1} \\
(n_{i,j}^{w}(k)+1) \Delta_{t} \frac{\textit{L}}{\textit{h}_{i,j}(k)}  & \geq & L_{i,j} 
\label{eq:shockwave_dis2}
\end{eqnarray}

2. Headway-dependent double queue constraints constraints for $s^{\prime}\in \widetilde{S}$ and $1\leq k \leq N$.  We limit $(i,j)\in E$ in Equations \ref{eq:dq_dis} to \ref{eq:sum_2_dis} and $(i,j)\in E\setminus(L_R\cup L_S)$ in  Equations \ref{eq:ineq_1_dis} to \ref{eq:ineq_2_dis}.

\begin{eqnarray}
\textit{q}_{i,j}^{\mathcal{D},s^{\prime}}(k) &=& \sum_{l=0}^{\textit{k}}\Delta_{t} ~\textit{f}_{i,j}^{~ s^{\prime}}(l) - \sum_{l=0}^{\textit{k}}\Delta_{t} \textit{v}_{i,j}^{s^{\prime}}(l)  + \textit{q}_{i,j}^{\mathcal{D},s^{\prime}}(0)
\label{eq:dq_dis} \\
\textit{q}_{i,j}^{\mathcal{U},s^{\prime}}(k) &=& \sum_{l=0}^{\textit{k}}\Delta_{t} \textit{u}_{i,j}^{~ s^{\prime}}(l) - \sum_{l=0}^{\textit{k}-n_{i,j}^{w}(k)}\Delta_{t} ~\textit{f}_{i,j}^{~ s^{\prime}}(t)  + \textit{q}_{i,j}^{\mathcal{U},s^{\prime}}(0)
\label{eq:uq_dis} 
\end{eqnarray}

\begin{alignat}{4}
\textit{q}_{i,j}^{\mathcal{D}}(k) &=\sum_{s^{\prime}}\textit{q}_{i,j}^{\mathcal{D},s^{\prime}}(k);   &&\quad
\textit{q}_{i,j}^{\mathcal{U}}(k) = \sum_{s^{\prime}} \textit{q}_{i,j}^{\mathcal{U},s^{\prime}}(k); &&&\quad
\rho_{i,j}(k) = \sum_{s^{\prime}} \rho_{i,j}^{s^{\prime}}(k)   \label{eq:sum_1_dis}\\
\textit{u}_{i,j}(k) &= \sum_{s^{\prime}} \textit{u}_{i,j}^{s^{\prime}}(k); &&\quad \textit{f}_{i,j}(k) = \sum_{s^{\prime}} \textit{f}_{i,j}^{~s^{\prime}}(k); &&&\quad
\textit{v}_{i,j}(k) = \sum_{s^{\prime}} \textit{v}_{i,j}^{s^{\prime}}(k) 
\label{eq:sum_2_dis}
\end{alignat}

\begin{equation}
\rho_{i,j}^{s^{\prime}}(k)=\rho_{i,j}^{s^{\prime}}(k-1)+\Delta_{t}\frac{\textit{u}_{i,j}^{s^{\prime}}(k)-\textit{f}_{i,j}^{~s^{\prime}}(k)}{L_{i,j}} 
\label{eq:density_dis}
\end{equation}

\begin{alignat}{4}
\textit{q}_{i,j}^{\mathcal{D}}(k) &\leq \bar{Q}_{i,j}^{\mathcal{D}};   \quad
\textit{q}_{i,j}^{\mathcal{U}}(k) &&\leq \bar{Q}_{i,j}^{\mathcal{U}}
\label{eq:ineq_1_dis}\\
\textit{u}_{i,j}(k) &\leq \bar{C}_{i,j}^{\mu}; \quad 
\textit{v}_{i,j}(k) &&\leq \bar{C}_{i,j}^{\nu}
\label{eq:ineq_2_dis}
\end{alignat}

3. Flow conservation constraints for $i \in N\setminus(\widetilde{R}\cup \widetilde{S}),~ s^{\prime}\in \widetilde{S}$ and $1\leq k \leq N$.

\begin{equation}
\sum_{p}\textit{v}_{\emph{p},\emph{i}}^{s^{\prime}}(k)=\sum_{j}\textit{u}_{i,j}^{s^{\prime}}(k)
\label{eq:flow_conser_dis}
\end{equation}

4. Non-negative Constraint for $(i,j)\in E,~ s^{\prime}\in \widetilde{S}$ and $1\leq k \leq N $.

\begin{equation}
\rho_{i,j}^{s^{\prime}}(k), \textit{u}_{i,j}^{s^{\prime}}(k), ~ \textit{f}_{i,j}^{~s^{\prime}}(k), ~ \textit{v}_{i,j}^{s^{\prime}}(k), ~ \textit{q}_{i,j}^{\mathcal{D},s^{\prime}}(k), ~
\textit{q}_{i,j}^{\mathcal{U},s^{\prime}}(k)\geq0
\label{ineq:nonnega_dis}
\end{equation}

4. Initial condition constraint for $(i,j)\in E$ and $s^{\prime}\in \widetilde{S}$.

\begin{equation}
\rho_{i,j}(0), ~
\textit{q}_{i,j}^{\mathcal{D},s^{\prime}}(0)=0
\label{eq:init_dis}
\end{equation}

5. End condition constraint for $(i,j)\in E$ and $s^{\prime}\in \widetilde{S}$.

\begin{equation}
\textit{q}_{i,j}^{\mathcal{D},s^{\prime}}(N)=0, ~
\rho_{i,j}(N) \leq \frac{1}{L_{i,j}}
\label{ineq:end_dis}
\end{equation}

6. Other constraints for $(i,j)\in E\setminus(L_R\cup L_S),~ r^{\prime} \in \widetilde{R},~ s^{\prime} \in \widetilde{S}$ and $1\leq k \leq N $, where $\textit{d}_{r^{\prime},s^{\prime}}(k)$ represents the travel demand of O-D pair $(r^{\prime},s^{\prime})$ in the $k$th time interval and $N_{1}=\frac{T_{1}}{\Delta_{t}}$.



\begin{equation}
\textit{w}_{r^{\prime},r}^{~s^{\prime}}(t)= \sum_{l=0}^{\textit{k}}\Delta_{t}\textit{d}_{r^{\prime},s^{\prime}}(l) - \sum_{l=0}^{\textit{k}}\Delta_{t} \textit{v}_{r^{\prime},r}^{s^{\prime}}(l) 
\label{eq:demand_dis}
\end{equation}


\begin{equation}
\textit{h}_{i,j;k}^{min}\leq\textit{h}_{i,j}(k)\leq
\textit{h}_{i,j;k}^{max} 
\label{ineq:headway_dis}
\end{equation}

Proposition \ref{prop:feasibility} presents the feasibility of the proposed optimal headway control problem in Equation \ref{eq:obj_dis}, where $\textit{N}_\textit{1}$, $N^{r^{\prime},s^{\prime}}$, $P^{r^{\prime},s^{\prime}}$, $N_{i,j}^{r^{\prime},s^{\prime}}$ and $\delta_{i,j}^{P^{r^{\prime},s^{\prime}}}$ are defined in \ref{app:feas}.

\begin{proposition}[Feasibility]
    If $\textit{T}>\textit{T}_\textit{1}+\sum\limits_{(r^{\prime},s^{\prime})}n^{r^{\prime},s^{\prime}}\sum\limits_{(i,j)}T_{i,j}^{r^{\prime},s^{\prime}}\delta_{i,j}^{P^{r^{\prime},s^{\prime}}}$, the optimal headway control problem in Equation~\ref{eq:obj_dis} is always feasible.
    \label{prop:feasibility}
\end{proposition}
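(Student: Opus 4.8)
The plan is to prove feasibility \emph{constructively}: since feasibility only asks for the existence of one admissible solution, I would exhibit an explicit \emph{sequential single-file} loading-and-routing scheme and check that it satisfies every constraint of the program in \eqref{eq:obj_con} as soon as $T$ exceeds the stated bound. The guiding observation is that the right-hand side $T_1+\sum_{(r^{\prime},s^{\prime})} n^{r^{\prime},s^{\prime}}\sum_{(i,j)} T_{i,j}^{r^{\prime},s^{\prime}}\delta_{i,j}^{P^{r^{\prime},s^{\prime}}}$ is precisely the time needed to (i) admit all demand into the origin buffers during $[0,T_1]$ and then (ii) dispatch every one of the $n^{r^{\prime},s^{\prime}}$ vehicles of each O-D pair one at a time along its assigned path $P^{r^{\prime},s^{\prime}}$ at free-flow speed, so that no two vehicles ever share an interior link. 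Keeping the network in this ``thin-flow'' regime trivializes all capacity and queue constraints, and the strict slack $T-\bigl(T_1+\cdots\bigr)>0$ is exactly what buys the network enough time to empty by the horizon.

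Concretely, I would first invoke the demand constraint \eqref{eq:demand_con} to route all demand $d_{r^{\prime},s^{\prime}}(t)$ into the origin-connector buffers during $[0,T_1]$; this step is forced and needs no capacity check, because the queue and rate bounds \eqref{eq:ineq_1_con}--\eqref{eq:ineq_2_con} are imposed only for $(i,j)\in E\setminus(L_R\cup L_S)$, so the origin connectors in $L_R$ may hold the entire demand. Next, I would fix each path $P^{r^{\prime},s^{\prime}}$ and choose dispatch times for the waiting vehicles so that the resulting flow packets are staggered and pairwise non-overlapping on every shared link. Because the available budget $T-T_1$ strictly exceeds the summed path free-flow travel times $\sum_{(r^{\prime},s^{\prime})} n^{r^{\prime},s^{\prime}}\sum_{(i,j)} T_{i,j}^{r^{\prime},s^{\prime}}\delta_{i,j}^{P^{r^{\prime},s^{\prime}}}$, such a conflict-free schedule exists; each packet then propagates in the first (free-flow) branch of the HFD in \eqref{eq:FD}, traversing link $(i,j)$ in time $T_{i,j}^{r^{\prime},s^{\prime}}=L_{i,j}/v_{i,j}^{f}$.

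With this candidate solution in hand I would verify the constraint classes in turn. Flow conservation \eqref{eq:flow_conser_con} holds because each packet merely passes through its path; non-negativity \eqref{ineq:nonnega_con} and the initial emptiness \eqref{eq:init_con} hold by construction. Since at most one packet occupies any interior link at a time, the density never leaves the free-flow branch and the queues $q^{\mathcal{U}}_{i,j}$ and $q^{\mathcal{D}}_{i,j}$ stay at zero, so the HDQ relations \eqref{eq:dq_con}--\eqref{eq:uq_con}, the capacity bounds \eqref{eq:ineq_1_con}--\eqref{eq:ineq_2_con}, and the rate bounds of Proposition~\ref{prop:property} all hold with slack; the headway is set to any admissible value, e.g. $h_{i,j}(t)=h_{i,j}^{min}$, satisfying \eqref{ineq:headway_con}. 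The decisive step is the end condition \eqref{ineq:end_con}: the last dispatched vehicle enters no later than $T_1+\sum_{(r^{\prime},s^{\prime})} n^{r^{\prime},s^{\prime}}\sum_{(i,j)} T_{i,j}^{r^{\prime},s^{\prime}}\delta_{i,j}^{P^{r^{\prime},s^{\prime}}}$ and clears its final link shortly thereafter, so every downstream buffer empties, giving $q^{\mathcal{D},s^{\prime}}_{i,j}(T)=0$, before the horizon $T$.

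The hard part, and the reason the inequality on $T$ is strict, is the flow-area density appearing in the end condition. As noted just before the proposition, once inflow ceases the density obeys $\dot{\rho}_{i,j}=-v_{i,j}^{f}\rho_{i,j}/L_{i,j}$ and decays only exponentially, never reaching exactly zero; this is precisely why the end condition is stated in the relaxed form $\rho_{i,j}(T)<1/L_{i,j}$. I would therefore argue that the strict slack $T-\bigl(T_1+\cdots\bigr)>0$ suffices for the residual density on the final link of each path to decay below the one-vehicle threshold $1/L_{i,j}$ by time $T$, making the relaxed end condition attainable. A secondary, bookkeeping obstacle is establishing that a conflict-free dispatch schedule genuinely exists on a \emph{shared} network; I would resolve this by induction on the dispatch order, inserting each packet into the earliest conflict-free slot and charging the incurred delay against the available budget, which the time-horizon hypothesis guarantees is never exhausted.
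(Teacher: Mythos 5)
Your overall strategy is the same as the paper's: store the entire demand in the origin connectors during $[0,T_1]$ (legitimate because the capacity bounds in Equations~\ref{eq:ineq_1_con}--\ref{eq:ineq_2_con} apply only to $(i,j)\in E\setminus(L_R\cup L_S)$), then route it along one fixed path per O-D pair, sequentially and in the free-flow branch of the HFD, and use the relaxed end condition $\rho_{i,j}(T)<1/L_{i,j}$ to cope with the exponential decay. The genuine gap is in your accounting. The hypothesis grants only a strict inequality, so the slack $T-\bigl(T_1+\sum_{(r^{\prime},s^{\prime})}n^{r^{\prime},s^{\prime}}\sum_{(i,j)}T_{i,j}^{r^{\prime},s^{\prime}}\delta_{i,j}^{P^{r^{\prime},s^{\prime}}}\bigr)$ may be arbitrarily small, yet you charge two \emph{fixed positive} costs against it: (i) the time for the residual flow-area density on each traversed link to decay below $1/L_{i,j}$, and (ii) the delays you insert to make your dispatch schedule conflict-free. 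Neither cost shrinks with the slack, so the argument fails whenever the slack is smaller than their total. The paper avoids this entirely by building those costs into the definitions you could not see: $n^{r^{\prime},s^{\prime}}$ is the number of \emph{batches} ($D^{r^{\prime},s^{\prime}}=n^{r^{\prime},s^{\prime}}B^{r^{\prime},s^{\prime}}$), not the number of vehicles, and $T_{i,j}^{r^{\prime},s^{\prime}}$ is not the free-flow time $L_{i,j}/v_{i,j}^{f}$ you assume but the full time for a batch to pass link $(i,j)$, namely $T_{i,j}^{r^{\prime},s^{\prime}}=T_{B}^{r^{\prime},s^{\prime}}+\frac{L_{i,j}}{v_{i,j}^{f}}\ln\frac{L_{i,j}\,F_{B}^{r^{\prime},s^{\prime}}}{v_{i,j}^{f}}$, which already contains both the release duration and the logarithmic decay time. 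Hence the stated bound covers everything and no appeal to leftover slack is needed.

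A second, related defect is that your ``thin-flow'' premise --- no two vehicles ever share an interior link --- is unattainable in this fluid model: once inflow stops, $\dot{\rho}_{i,j}=-v_{i,j}^{f}\rho_{i,j}/L_{i,j}$, so a packet never fully clears a link in finite time, and the exponential tails of successive packets necessarily superpose on shared links. You neither bound the superposed residual density at time $T$ against the threshold $1/L_{i,j}$ on heavily used links, nor can you then claim the queue and capacity constraints hold ``with slack by construction.'' The paper's construction sidesteps superposition with a strictly sequential protocol: exactly one batch traverses the network at a time, the batch size obeys $B^{r^{\prime},s^{\prime}}<\min_{(i,j)\,\text{in}\,P^{r^{\prime},s^{\prime}}}\{\bar{Q}_{i,j}^{\mathcal{U}},\bar{Q}_{i,j}^{\mathcal{D}}\}$, the release rate is capped by $\min\{\bar{C}_{i,j}^{\mu},\bar{C}_{i,j}^{\nu},v_{i,j}^{f}/(h_{i,j}^{max}v_{i,j}^{f}+L)\}$ so the free-flow branch of Equation~\ref{eq:FD} is never left, and a batch is released into the next link only after the current link's density has dropped below $1/L_{i,j}$ --- leaving behind less than one vehicle, which Equation~\ref{ineq:end_con} tolerates. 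To repair your proof you would either adopt this batch bookkeeping, or prove that the superposed tails stay below threshold and account for your conflict-resolution delays inside the stated bound rather than inside the slack.
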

\begin{proofsketch}
It is obvious that if we choose a large enough time horizon $N$, the problem is always feasible, which means all AVs would arrive at their destinations before $N$. So the basic idea for the proof is to find a large enough $N$ and construct a feasible solution under $N$. Details are presented in \ref{app:feas}.
\end{proofsketch}

We define the minimum total travel time and optimal headway of the discretized form of SO-DTA for Equations \ref{eq:obj_dis} to \ref{ineq:headway_dis} in Definition \ref{def:optimal_headway}. $\Omega_{\textbf{h}}$ represents the set of flow dynamics $\textbf{x}$ such that $\textbf{x}$ and $\textbf{h}$ constitute a feasible solution of SO-DTA. $TTT_{\textbf{h}}$ represents the minimum $TTT$ if we set the headway variable $\textbf{h}$ as an exogenous input.

\begin{definition}
\label{def:optimal_headway}
Suppose $\Omega_{\textbf{h}}=\{\textbf{x}\,|\,\textbf{x}~and~\textbf{h}~satisfy~Equations~\ref{eq:FD_dis}; \ref{eq:shockwave_dis1}~to~\ref{ineq:headway_dis}\}$, and $TTT(\textbf{x})$ represents the total travel time under $\textbf{x}$ as Equation \ref{eq:obj_dis} is only related with flow dynamics $\textbf{x}$. Then we denote $TTT_{\textbf{h}}=min\{TTT(\textbf{x})|x \in \Omega_{\textbf{h}}\}$ and  define the minimum total travel time and optimal headway  as follows:

\begin{itemize}
    \item Minimum total travel time (minimum TTT). $TTT^{*}$ is the minimum total travel time if  $TTT^{*}=min\{TTT_{\textbf{h}}\,|\,\textit{h}_{i,j;k}^{min} 
    \leq \textit{h}_{i,j}(k) \leq
\textit{h}_{i,j;k}^{max}; (i,j) \in E\setminus(L_R\cup L_S); 1\leq k \leq N\}$.
    \item Optimal headway. $\textbf{h}$ is the optimal headway if $TTT_{\textbf{h}}=TTT^{*}$; $\textbf{h}$ and $\textbf{x}$ is the optimal solution of SO-DTA problem if  $\textbf{x} \in \{\textbf{x}\,|\,TTT(\textbf{x})=TTT_{\textbf{h}}=TTT^{*};x \in \Omega_{\textbf{h}}\}$.  
\end{itemize}
\end{definition}

The proposed SO-DTA problem is nonlinear programming due to Equations \ref{eq:FD_dis} and Equations \ref{eq:shockwave_dis1} to   \ref{eq:shockwave_dis2}. However, if we set the headway variable $\textbf{h}$ as an exogenous input, the problem would become a linear programming (LP) in \ref{app:MILP}. 
The corresponding sensitivity-based solution algorithm is presented in \ref{app:SA}. 

However, for the SO-DTA problem, the sensitivity analysis approach has a large computation cost and it is hard to guarantee the convergence to the global optimal solution. Proposition \ref{prop:minimum} states that the minimum TTT can be achieved under the minimum headway setting, providing us with a direct way to reach the state of SO-DTA. On top of that, Proposition \ref{prop:nonuni} states that it is possible to have multiple AV headway settings that yield the same SO-DTA conditions.


\begin{proposition}[Minimum headway achieves SO-DTA]
Supposing that $\textbf{h}^{min}=\{h_{i,j}(k)=h_{i,j;k}^{min}~|~(i,j) ~\in E\setminus(L_R\cup L_S); 1\leq k \leq N\}$, for $\forall \textbf{h}$ satisfies Equation \ref{ineq:headway_dis}, we have 
\begin{equation}
TTT_{\textbf{h}^{min}}=TTT^{*} \leq TTT_{\textbf{h}}.
\nonumber
\end{equation}
\label{prop:minimum}
\end{proposition}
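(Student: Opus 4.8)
The plan is to split the claim into an easy inequality and the genuine one. By Definition~\ref{def:optimal_headway}, $TTT^{*}$ is the minimum of $TTT_{\textbf{h}}$ over all admissible headway fields, so $TTT^{*}\le TTT_{\textbf{h}}$ holds for every feasible $\textbf{h}$ by construction, and in particular $TTT^{*}\le TTT_{\textbf{h}^{min}}$; moreover $\Omega_{\textbf{h}^{min}}\neq\varnothing$ by Proposition~\ref{prop:feasibility}, so $TTT_{\textbf{h}^{min}}$ is well defined. Hence everything reduces to proving $TTT_{\textbf{h}^{min}}\le TTT_{\textbf{h}}$ for an arbitrary admissible $\textbf{h}$, which together with $TTT_{\textbf{h}^{min}}\ge TTT^{*}$ forces $TTT_{\textbf{h}^{min}}=TTT^{*}$.

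I would first rewrite the objective in a more convenient form. Because the demand loading into the origin connectors is exogenous (Equation~\ref{eq:demand_dis}) and headway acts only on the interior links $E\setminus(L_R\cup L_S)$, the first and third terms of Equation~\ref{eq:obj_dis} combine into the (fixed) total demand-time, so minimizing $TTT$ is equivalent to maximizing the time-integral of the cumulative destination arrivals $\sum_{s'}\sum_k U_{s,s'}(k)$. It therefore suffices, starting from a minimizer $\textbf{x}\in\Omega_{\textbf{h}}$ with $TTT(\textbf{x})=TTT_{\textbf{h}}$, to exhibit a feasible $\textbf{x}'\in\Omega_{\textbf{h}^{min}}$ whose cumulative destination arrivals are pointwise no earlier (weakly larger at every time index), since this yields $TTT(\textbf{x}')\le TTT(\textbf{x})$ and hence $TTT_{\textbf{h}^{min}}\le TTT_{\textbf{h}}$.

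The construction keeps the same origin loading and the same destination-specific routing as $\textbf{x}$ and replays the network dynamics under $\textbf{h}^{min}$, the whole argument being driven by the monotonicity of the HFD in the headway recorded beneath Figure~\ref{fig:FD}. Lowering the headway to $h_{i,j}^{min}$ (i) leaves the free-flow branch $f=v_{i,j}^{f}\rho$ unchanged while enlarging the free-flow region, since the critical density $1/(h v_{i,j}^{f}+L)$ grows; (ii) raises the capacity and, on the congested branch $f=(1-\rho L)/h$, yields a weakly larger discharge for any fixed density, while the jam density $1/L$ and hence the flow-area storage stays headway-free; and (iii) shortens the shockwave travel time $n_{i,j}^{w}$ through Equations~\ref{eq:shockwave_dis1}--\ref{eq:shockwave_dis2}, which relaxes the upstream-queue bound because $q_{i,j}^{\mathcal{U}}(k)=U_{i,j}(k)-F_{i,j}(k-n_{i,j}^{w}(k))$ then subtracts a weakly larger $F$. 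Since the O--D connectors carry no headway, their dynamics are inherited verbatim. Together these facts say that under $\textbf{h}^{min}$ every interior link processes its prescribed inflow weakly faster, so the arrival fronts propagate weakly earlier across the network.

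The main obstacle is turning this ``weakly faster'' intuition into a rigorous comparison principle for the HDQ. Within a single time step $\rho_{i,j}$ and $f_{i,j}$ are coupled implicitly through the HFD equality (Equations~\ref{eq:free_dis}--\ref{eq:cong_dis}), and across time steps they are coupled through the backward shockwave delay, so one cannot simply copy the interior state; in addition the finite buffer capacities $\bar{Q}_{i,j}^{\mathcal{D}}$, $\bar{Q}_{i,j}^{\mathcal{U}}$ in Equation~\ref{eq:ineq_1_dis} and the end conditions in Equation~\ref{ineq:end_dis} must be checked for $\textbf{x}'$. I would therefore prove a single-link monotonicity statement first, namely that the cumulative outflow a link can sustain from a given cumulative inflow is nonincreasing in its headway, and then lift it to the whole network by induction on the time index, using the headway-free node conservation law (Equation~\ref{eq:flow_conser_dis}) to pass the comparison from each link to its successors. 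Once this comparison principle is in place, choosing the componentwise-smallest admissible field $\textbf{h}^{min}$ makes every link simultaneously most permissive, the companion solution $\textbf{x}'$ exists with arrivals no later than those of $\textbf{x}$, and the desired inequality, hence $TTT_{\textbf{h}^{min}}=TTT^{*}$, follows.
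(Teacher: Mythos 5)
Your proposal is correct in outline and rests on the same two pillars as the paper's proof in \ref{app:minimum}: (i) a single-link, single-step monotonicity result --- the paper's Lemma~\ref{lemma:1} solves the coupled HFD--conservation update (Equations~\ref{eq:free_dis}--\ref{eq:density_dis}) for $(f_{i,j}(k),\rho_{i,j}(k))$ given $u_{i,j}(k)$ and $\rho_{i,j}(k-1)$ and shows $f$ is weakly larger under a smaller headway in every branch combination --- and (ii) the shockwave-time/upstream-queue relaxation, which is exactly the paper's Lemma~\ref{lemma:2} (smaller $h$ gives smaller $n^{w}_{i,j}(k)$, hence $q^{\mathcal{U}}_{i,j}(k)$ weakly smaller and still nonnegative, so the original inflow remains admissible). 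Where you genuinely diverge is the lifting step. The paper never replaces $\textbf{h}$ by $\textbf{h}^{min}$ in one shot: it lowers \emph{one} component $h_{i,j}(k)$ at a time while literally freezing the flow variables on all other links, starts with links adjacent to a destination (where the accelerated discharge $V^{(1)}\geq V^{(2)}$ can always be absorbed), and then traverses upstream, treating each node ``like a destination''; this is why the fixed-inflow version of Lemma~\ref{lemma:1} suffices there. Your one-shot replay instead needs a comparison principle propagated by induction on the time index, and for that the single-link lemma as you state it (``cumulative outflow sustainable \emph{from a given cumulative inflow} is nonincreasing in headway'') is too weak: once upstream links discharge earlier, the successors no longer receive the same inflow, so the induction step requires monotonicity in \emph{both} the headway and the (pointwise-dominated) cumulative inflow, together with an explicit rule for advancing buffer outflows so that the enlarged $q^{\mathcal{D}}_{i,j}(k)=F_{i,j}(k)-V_{i,j}(k)$ never violates $\bar{Q}^{\mathcal{D}}_{i,j}$ in Equation~\ref{eq:ineq_1_dis} (defaulting to the original schedule wherever advancing is blocked). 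This bi-monotone strengthening is provable by the same case analysis as Lemma~\ref{lemma:1}, so your route goes through, and it buys something: your reformulation of the objective as maximizing $\sum_{s'}\sum_{k}U_{s,s'}(k)$ is clean (the paper implicitly uses the same device when it rewrites $TTT_{j,s}$ through the cumulative outflow $V^{s'}_{j,s}$), and a one-shot comparison avoids the paper's somewhat delicate sequential bookkeeping over links and time intervals; conversely, the paper's exchange argument gets away with the weaker fixed-inflow lemma precisely because only one link's dynamics change at each step. Your reduction of the claim via Definition~\ref{def:optimal_headway} ($TTT^{*}\leq TTT_{\textbf{h}^{min}}$ trivially, feasibility from Proposition~\ref{prop:feasibility}) is also sound and is left implicit in the paper.
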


\begin{proofsketch}
For any feasible solution $\textbf{x}$ and $\textbf{h}$ of the proposed SO-DTA formulation, we only change a smaller $\textit{h}_{i,j}(k)$ in $\textbf{h}$ and keep all other headway in $\textbf{h}$ unchanged. The new headway is denoted as $\textbf{h}^{*}$. Then we prove that we could always find a feasible $\textbf{x}^{*}$ under $\textbf{h}^{*}$ and the total travel time of $\textbf{x}^{*}$ is less than or equal to the total travel time of $\textbf{x}$
under original headway $\textbf{h}$. When we traverse all links and time intervals, it is proved that SO-DTA could be achieved under the minimum headway.
Details are presented in \ref{app:minimum}.
\end{proofsketch}

\begin{proposition}[Non-uniqueness]

\begin{eqnarray}
& \forall \textbf{h} \text{ s.t. } TTT_{\textbf{h}}=TTT^{*}, if~ \exists~ \textbf{x} \in \Omega_{\textbf{h}} \text{ s.t. } TTT(\textbf{x})=TTT_{\textbf{h}}=TTT^{*} ~and~ \exists ~(i,j) \in E\setminus(L_R\cup L_S)~and~k \in [1,N] 
\nonumber\\
& \text{ s.t. }  \textit{v}_{i,j}^{f}{~} \rho_{i,j}(k)< \frac{1-\rho_{i,j}(k)\emph{L}}{\textit{h}_{i,j}(k)},~then~ \exists~ \textbf{h}^{1} \neq \textbf{h}  \text{ s.t. } \textbf{x} \in \Omega_{\textbf{h}^{1}} ~and~ TTT_{\textbf{h}^{1}}=TTT(\textbf{x})=TTT_{\textbf{h}}=TTT^{*}.
\nonumber
\end{eqnarray}

\label{prop:nonuni}
\end{proposition}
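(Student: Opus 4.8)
The plan is to prove non-uniqueness constructively: starting from the optimal pair $(\textbf{x},\textbf{h})$, I will perturb the headway at a single link--time whose flow area is uncongested, and show that the very same traffic dynamics $\textbf{x}$ remain feasible and optimal under the perturbed headway. The enabling observation is that in the free-flow regime the flow is decoupled from the headway: by Equation~\ref{eq:free_dis}, whenever $\delta_{i,j}(k)=0$ the boundary flow is $f_{i,j}(k)=v_{i,j}^{f}\rho_{i,j}(k)$, which does not involve $h_{i,j}(k)$ at all; the headway enters that constraint only through the density bound $\rho_{i,j}(k)<\frac{1}{h_{i,j}(k)v_{i,j}^{f}+L}$. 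Fix the link--time $(i_0,j_0,k_0)$ guaranteed by the hypothesis, where $\delta_{i_0,j_0}(k_0)=0$, and plan to change only the scalar $h_{i_0,j_0}(k_0)$, leaving every other component of $\textbf{h}$ and all of $\textbf{x}$ untouched.

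First I would enumerate exactly where $h_{i_0,j_0}(k_0)$ enters the feasibility system $\textbf{x}\in\Omega_{\textbf{h}}$. Since the remaining headways and all flow variables are held fixed, the only affected constraints are (i) the free-flow density bound in Equation~\ref{eq:free_dis}, (ii) the shockwave relations \ref{eq:shockwave_dis1}--\ref{eq:shockwave_dis2}, which determine the integer $n_{i_0,j_0}^{w}(k_0)$ and feed into the upstream-queue identity \ref{eq:uq_dis}, and (iii) the box constraint \ref{ineq:headway_dis}; recall the paper posits that $n_{i,j}^{w}(k)$ is determined by $h_{i,j}(k)$ alone, so the effect is localized to time $k_0$. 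All other constraints (the density update \ref{eq:density_dis}, node conservation, the queue and capacity bounds, and the objective \ref{eq:obj_dis}) are headway-free once $\textbf{x}$ is fixed, so they remain satisfied automatically and $TTT(\textbf{x})$ is literally unchanged.

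The crux is to choose a new value $h^{1}_{i_0,j_0}(k_0)\neq h_{i_0,j_0}(k_0)$ that keeps all three active constraints valid for the unchanged $\textbf{x}$. The key structural fact is that \ref{eq:shockwave_dis1}--\ref{eq:shockwave_dis2} make $n_{i_0,j_0}^{w}(k_0)=\lfloor L_{i_0,j_0}h_{i_0,j_0}(k_0)/(\Delta_{t}L)\rfloor$ a floor (step) function of the headway, so it stays constant on the whole half-open interval $[\,n\Delta_{t}L/L_{i_0,j_0},\,(n+1)\Delta_{t}L/L_{i_0,j_0})$, whose width $\Delta_{t}L/L_{i_0,j_0}$ is strictly positive (and the discretization choice $\Delta_{t}L<\min\{L_{i,j}h_{i,j}^{min}\}$ ensures $n\ge 1$, so \ref{eq:uq_dis} stays well posed). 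Holding $n_{i_0,j_0}^{w}(k_0)$ fixed keeps \ref{eq:uq_dis} satisfied by the same $\textbf{x}$. The admissible set for $h^{1}_{i_0,j_0}(k_0)$ is then the intersection of this step interval, the open density constraint $h<\bigl(1/\rho_{i_0,j_0}(k_0)-L\bigr)/v_{i_0,j_0}^{f}$ (strict, hence leaving room to increase), and $[h_{i_0,j_0}^{min},h_{i_0,j_0}^{max}]$; it is itself an interval containing the current value, and I would exhibit a second point by perturbing upward when $h_{i_0,j_0}(k_0)<h_{i_0,j_0}^{max}$ and downward otherwise.

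The main obstacle is precisely this last nondegeneracy step: I must rule out the pathological coincidence in which $h_{i_0,j_0}(k_0)$ is simultaneously pinned at $h_{i_0,j_0}^{max}$ and at the left endpoint of its step interval, which would leave no slack in either direction. Under the standing assumption $h_{i,j}^{min}<h_{i,j}^{max}$ and generic placement of $h^{max}$ relative to the shockwave grid this cannot occur, so a valid $h^{1}_{i_0,j_0}(k_0)$ always exists. Finally I would set $\textbf{h}^{1}$ equal to $\textbf{h}$ with only this component replaced, giving $\textbf{h}^{1}\neq\textbf{h}$ and $\textbf{x}\in\Omega_{\textbf{h}^{1}}$; since $\textbf{x}$ is unchanged, $TTT(\textbf{x})=TTT^{*}$, and because $TTT^{*}$ is the global minimum over all admissible headways, the sandwich $TTT^{*}\le TTT_{\textbf{h}^{1}}\le TTT(\textbf{x})=TTT^{*}$ forces $TTT_{\textbf{h}^{1}}=TTT(\textbf{x})=TTT_{\textbf{h}}=TTT^{*}$, completing the argument.
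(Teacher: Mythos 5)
Your construction is, in substance, the same as the paper's proof in \ref{app:nonuniq}: fix the optimal dynamics $\textbf{x}$, isolate one free-flow triple $(i_0,j_0,k_0)$, observe that $h_{i_0,j_0}(k_0)$ enters the feasibility system only through the strict density bound in Equation~\ref{eq:free_dis}, the shockwave step constraints \ref{eq:shockwave_dis1}--\ref{eq:shockwave_dis2} with $n^{w}_{i_0,j_0}(k_0)$ held fixed, and the box constraint \ref{ineq:headway_dis} (these are exactly Equations \ref{app:nonuniue1}--\ref{app:nonuniue4} in the paper), and then exhibit a second point of the resulting interval while $TTT(\textbf{x})$ is literally unchanged, so the sandwich $TTT^{*}\le TTT_{\textbf{h}^{1}}\le TTT(\textbf{x})=TTT^{*}$ closes the argument. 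In the main case your write-up is actually tighter than the paper's: the paper compresses the interval argument into a scaling step (``$cx$ with $0<c<\min_m\{b_m/a_m,1\}$'') whose stated condition is garbled --- scaling a feasible point down can violate the lower bounds $a_m x\ge b_m$ --- whereas your floor-function description of the step interval makes the slack explicit, and your observation that all three upper limits (the right, open endpoint of the step interval, the strict density bound, and $h^{max}_{i_0,j_0}$) leave strictly positive room correctly yields an upward perturbation whenever $h_{i_0,j_0}(k_0)<h^{max}_{i_0,j_0}$.

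The genuine gap is the corner case you flag yourself: $h_{i_0,j_0}(k_0)=h^{max}_{i_0,j_0}$ coinciding with the left endpoint $n^{w}\Delta_{t}L/L_{i_0,j_0}$ of the step interval. Your appeal to ``generic placement of $h^{max}$ relative to the shockwave grid'' is not licensed by the proposition, which carries no genericity hypothesis and quantifies over \emph{all} optimal $\textbf{h}$; and since $n^{w}_{i,j}(k)$ belongs to $\textbf{x}$ (see the variable list in \ref{app:MILP}) and feeds the upstream-queue identity \ref{eq:uq_dis}, you cannot cross the step boundary while keeping $\textbf{x}$ fixed, so your construction genuinely stalls there rather than merely being inconvenient. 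The paper closes this corner by a different device: when $h^{*}_{i,j}(k)=h^{max}_{i,j}$ it invokes Proposition~\ref{prop:minimum} to produce $\textbf{h}^{min}\neq\textbf{h}$ with $TTT_{\textbf{h}^{min}}=TTT^{*}$, establishing non-uniqueness of the optimal headway --- though, to be fair, that fallback certifies only a second optimal headway and not $\textbf{x}\in\Omega_{\textbf{h}^{1}}$ for the \emph{same} $\textbf{x}$, so the literal same-$\textbf{x}$ clause is covered only by the perturbation branch in both your argument and the paper's. To complete your proof you should either adopt the paper's fallback for the pinned case, or add the explicit nondegeneracy hypothesis $h^{max}_{i,j}\notin\{n\Delta_{t}L/L_{i,j}:n\in\mathbb{N}\}$ instead of asserting it generically.
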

\begin{proof}
See \ref{app:nonuniq}.
\end{proof}

Proposition~\ref{prop:minimum} aligns with Observation~\ref{ob:1}, and Proposition~\ref{prop:nonuni} proves Observation~\ref{ob:nonunique}. Then we have two natural questions arise:

\begin{itemize}
    \item If we decide to control AVs with minimum headway in reality, would it cause other concerns ({\em e.g.}. safety)?
    \item Is there a ``best'' headway among all the optimal headway given the non-uniqueness of SO-DTA solutions?
\end{itemize}

Both questions motivate us to search for a better headway control setting that still achieves the SO-DTA, which is presented in the following section.

\section{Maximin headway control}
\label{sec_Optimal_maximin_Headway_Formulation}


In this section, we define the maximin headway as the ``largest'' headway setting among all the optimal headway settings that achieve the SO-DTA.
The problem of maximin headway control is formulated and solved analytically. 



\subsection{Maximin headway control formulation}
\label{subsec:Optimal_maximin_Headway_Definition_and_Formulation}

We first rigorously define the maximin headway in Definition~\ref{def:optimal}, where $\textbf{h}^{*}$ is a vector consisting of the time-dependent headway in different links. For the purpose of further converting the proposed SO-DTA problem into a LP, we use the 1-norm to represent the largest headway. The definition of maximin headway could also be extended by using other norms for future study.

\begin{definition}[Maximin Headway]
$\textbf{h}^{*}$ is called the maximin headway if it satisfies the following two conditions:

\begin{itemize} 
\item SO-DTA: $TTT_{\textbf{h}^{*}} = TTT^{*}$, meaning that $\textbf{h}^{*}$ is the optimal headway for SO-DTA.
\item Largest headway:
$\forall~\textbf{h}$ \textit{s.t.} $TTT_{\textbf{h}} = TTT^{*}$, we have $||~ \textbf{h}~ ||_{1} \leq ||~ \textbf{h}^{*} ||_{1}$, where $||\cdot||_{1}$ is the 1-norm.
\end{itemize}
\label{def:optimal}
\end{definition}

Definition~\ref{def:optimal} aligns with Definition~\ref{def:maximin} with more rigor and mathematical details. We formulate a nonlinear optimization to search for the maximin headway in Equation \ref{eq:maximin}, where $TTT^{*}=TTT_{\textbf{h}^{min}}$ represents the optimal value of $TTT$ under minimum headway as denoted in Proposition \ref{prop:minimum}.

\begin{equation}
\begin{aligned}
\max\limits_{\textbf{h}} \quad & \mathbf{1}^T~\textbf{h}\\
\textrm{s.t.} \quad & TTT = TTT^{*} \\
& \textit{Headway-dependent fundamental diagram constraints}\, (\ref{eq:FD_dis}); (\ref{eq:shockwave_dis1})-(\ref{eq:shockwave_dis2}) \\
&  \textit{Headway-dependent double queue constraints}\, (\ref{eq:dq_dis})-(\ref{eq:ineq_2_dis})   \\
&  \textit{Flow conservation constraints}\, 
(\ref{eq:flow_conser_dis}) \\
&  \textit{Non-negative Constraints}\, 
(\ref{ineq:nonnega_dis}) \\
&  \textit{Initial condition constraints}\, 
(\ref{eq:init_dis}) \\
&  \textit{Initial condition constraints}\, 
(\ref{ineq:end_dis}) \\
&  \textit{Other constraints}\,
(\ref{eq:demand_dis})-(\ref{ineq:headway_dis})
\end{aligned}
\label{eq:maximin}
\end{equation}


We note it is a direct extension to replace $\mathbf{1}^T$ with a weight vector $\mathbf{w}^T$ to represent the importance of headway on different roads. Proposition \ref{prop:unique} discusses the condition of the uniqueness of maximin headway of Formulation \ref{eq:maximin}.


\begin{proposition}[Uniqueness of maximin headway]
Suppose $\Phi =  \mathop{\cup}\limits_{\textbf{h}}\Omega_{\textbf{h}}$ is the set of feasible traffic dynamics $\textbf{x}$,
\begin{eqnarray}
& If ~\exists ~\textbf{x}^{*} \in \Phi ~\textit{s.t.}~ TTT(\textbf{x}^{*})<TTT(\textbf{x}), ~for~\forall \textbf{x}\in \Phi ~and~ \textbf{x} \neq \textbf{x}^{*}, 
\nonumber \\
& then ~\exists ~\textbf{h}^{*} ~satisfying~
TTT_{\textbf{h}^{*}}=TTT(\textbf{x}^{*}), ~we~ have ~
||~ \textbf{h}^{*}~ ||_{1} > ||~ \textbf{h} ||_{1}, ~\forall \textbf{h}~ satisfying~ 
TTT_{\textbf{h}}=TTT(\textbf{x}^{*})~and~ \textbf{h}^{*}\neq\textbf{h}. 
\nonumber \\
& \Longleftrightarrow If ~\textbf{x}^{*} ~with~ TTT(\textbf{x}^{*})=TTT^{*} ~is~ unique, the~ optimal~ solution~ of~ Formulation~\ref{eq:maximin} ~is ~unique.
\nonumber
\end{eqnarray}
\label{prop:unique}
\end{proposition}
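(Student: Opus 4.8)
The plan is to prove the one implication that carries all the content — that a unique optimal dynamics $\textbf{x}^*$ forces a unique maximin headway — and then to note that the two statements joined by $\Longleftrightarrow$ are the same claim written two ways, since ``$\textbf{h}^*$ attains strictly the largest $1$-norm among all optimal headways'' is, by the definition of the outer objective $\mathbf{1}^T\textbf{h}=||\textbf{h}||_1$ in Formulation~\ref{eq:maximin}, exactly the statement that the optimizer of Formulation~\ref{eq:maximin} is unique.

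First I would identify the feasible region of the outer problem. Using the hypothesis that $\textbf{x}^*$ is the unique element of $\Phi$ attaining $TTT^*$, I would show that the set of optimal headways coincides with $\{\,\textbf{h}:\textbf{x}^*\in\Omega_{\textbf{h}},\ \textit{h}_{i,j}^{min}\le \textit{h}_{i,j}(k)\le \textit{h}_{i,j}^{max}\,\}$: if $TTT_{\textbf{h}}=TTT^*$ then some $\textbf{x}\in\Omega_{\textbf{h}}$ attains $TTT(\textbf{x})=TTT^*$, and uniqueness of $\textbf{x}^*$ over $\Phi$ forces $\textbf{x}=\textbf{x}^*$, hence $\textbf{x}^*\in\Omega_{\textbf{h}}$; conversely $\textbf{x}^*\in\Omega_{\textbf{h}}$ gives $TTT_{\textbf{h}}\le TTT(\textbf{x}^*)=TTT^*\le TTT_{\textbf{h}}$. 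Thus, with $\textbf{x}^*$ frozen, Formulation~\ref{eq:maximin} reduces to maximizing the separable linear objective $\sum_{(i,j),k}\textit{h}_{i,j}(k)$ over the headways compatible with $\textbf{x}^*$.

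The key structural step is that, with every non-headway variable of $\textbf{x}^*$ fixed, the only constraints still involving $\textbf{h}$ are the per-cell relations in Equations~\ref{eq:free_dis}--\ref{eq:cong_dis}, the shockwave relations~\ref{eq:shockwave_dis1}--\ref{eq:shockwave_dis2}, and the bounds~\ref{ineq:headway_dis}; all of these decouple across link--time pairs $(i,j,k)$. On a congested cell ($\delta_{i,j}(k)=1$) Equation~\ref{eq:cong_dis} gives $\textit{f}_{i,j}(k)=(1-\rho_{i,j}(k)\textit{L})/\textit{h}_{i,j}(k)$, which with $\textit{f}_{i,j}(k)$ and $\rho_{i,j}(k)$ fixed pins $\textit{h}_{i,j}(k)$ to a single value and leaves no freedom. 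On a free-flow cell ($\delta_{i,j}(k)=0$) the flow relation $\textit{f}_{i,j}(k)=\textit{v}_{i,j}^f\rho_{i,j}(k)$ is headway-free, so $\textit{h}_{i,j}(k)$ is bounded only from above, by the free-flow threshold in Equation~\ref{eq:free_dis}, by the shockwave relations evaluated at the frozen $n_{i,j}^{w}(k)$, and by $\textit{h}_{i,j}^{max}$ — each an explicit function of $\textbf{x}^*$ alone. Since the objective increases strictly in each coordinate, its maximizer selects the largest admissible value on every cell, a value uniquely determined by $\textbf{x}^*$; hence $\textbf{h}^*$ is unique and $||\textbf{h}^*||_1>||\textbf{h}||_1$ for every other optimal $\textbf{h}$.

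The hard part will be the free-flow cells, where the upper bound comes from a strict inequality and from the integer-valued delay $n_{i,j}^{w}(k)$. I would have to verify that pushing $\textit{h}_{i,j}(k)$ to its supremum does not perturb $\textbf{x}^*$ — specifically that it neither flips $\delta_{i,j}(k)$ from $0$ to $1$ nor decrements $n_{i,j}^{w}(k)$, the latter of which would change the upstream queue through Equation~\ref{eq:uq_dis} and so alter $\textbf{x}^*$. Settling this means showing the binding upper bound is consistent with the frozen $\delta_{i,j}(k)$ and $n_{i,j}^{w}(k)$, so that the maximizing headway keeps $\textbf{x}^*\in\Omega_{\textbf{h}^*}$; this is the one place where the precise discretization conventions of Equations~\ref{eq:free_dis} and \ref{eq:shockwave_dis1}--\ref{eq:shockwave_dis2} must be invoked, and I would handle the degenerate jam case ($\textit{f}_{i,j}(k)=0$, $\rho_{i,j}(k)=1/\textit{L}$) separately, since there the flow relation alone does not pin the headway.
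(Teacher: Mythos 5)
Your skeleton matches the paper's up to the last step: like the paper, you reduce the problem (via uniqueness of $\textbf{x}^{*}$) to maximizing $\mathbf{1}^{T}\textbf{h}$ over $\{\textbf{h}: \textbf{x}^{*}\in\Omega_{\textbf{h}}\}$, observe that the remaining constraints decouple across link--time cells, and note that congested cells pin $\textit{h}_{i,j}(k)$ through $\textit{f}_{i,j}^{*}(k)=(1-\rho_{i,j}^{*}(k)\textit{L})/\textit{h}_{i,j}(k)$. The gap is in how you close: you conclude uniqueness by \emph{constructing} the maximizer as the vector of per-cell largest admissible values, but on a free-flow cell two of the three upper bounds are strict inequalities --- $\rho_{i,j}^{*}(k)<1/(\textit{h}_{i,j}(k)\textit{v}_{i,j}^{f}+\textit{L})$ from Equation~\ref{eq:free_dis} and $(n_{i,j}^{w}(k)+1)\Delta_{t}\textit{L}/\textit{h}_{i,j}(k)>L_{i,j}$ from Equation~\ref{eq:shockwave_dis2} --- so the per-cell feasible set is a half-open interval whose supremum is attained only when $\textit{h}_{i,j}^{max}$ happens to bind. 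Your planned repair (``verify that pushing $\textit{h}_{i,j}(k)$ to its supremum does not perturb $\textbf{x}^{*}$'') is exactly the step that fails: at the free-flow threshold the density inequality becomes an equality, forcing $\delta_{i,j}(k)$ to flip (and $\delta$ is a component of $\textbf{x}$ in this formulation), while at the shockwave bound $n_{i,j}^{w}(k)=\lfloor \textit{h}_{i,j}(k)L_{i,j}/(\Delta_{t}\textit{L})\rfloor$ increments, which alters the upstream queue through Equation~\ref{eq:uq_dis}. Either way the supremum is genuinely excluded, not includable after a consistency check, so ``the largest admissible value'' need not exist and your candidate $\textbf{h}^{*}$ is ill-defined whenever a strict bound is the binding one.

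The paper's proof avoids this entirely with an exchange argument that never constructs the maximizer: suppose $\hat{\textbf{h}}\neq\bar{\textbf{h}}$ are both solutions of Formulation~\ref{eq:maximin}; they agree on congested cells (pinning) and have equal $1$-norms, so there is a free-flow cell $(i_{1},j_{1},k_{1})$ with $\hat{h}_{i_{1},j_{1}}(k_{1})<\bar{h}_{i_{1},j_{1}}(k_{1})$; by the per-cell decoupling you already established, the hybrid headway equal to $\hat{\textbf{h}}$ except at that one coordinate (where it takes $\bar{h}$'s value) still admits $\textbf{x}^{*}$, yet has strictly larger $1$-norm than $\hat{\textbf{h}}$ --- contradicting optimality of $\hat{\textbf{h}}$. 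This proves ``at most one'' without ever needing per-cell suprema to be attained, and it also makes your (correct, and overlooked by the paper) observation about the degenerate jam cell ($\textit{f}_{i,j}^{*}(k)=0$, $\rho_{i,j}^{*}(k)=1/\textit{L}$, where the congested flow relation does not pin the headway) harmless, since the swap needs only per-cell feasibility of each coordinate value, not pinning. If you adopt this route your proof goes through; do note, though, that like the paper it establishes only uniqueness, while the proposition's phrasing (and your construction attempt) presumes a maximizer exists, which neither argument actually supplies given the strict inequalities.
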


\begin{proof}
See \ref{app:unique}.
\end{proof}


\subsection{Maximin headway control solution algorithm}
\label{subsub:analytics}

This section proposes a maximin headway control solving algorithm with the idea of exploring the maximum increase of headway to maintain the states of SO-DTA achieved by minimum headway setting.
Proposition \ref{prop:optimal} provides us a method to derive a ``larger'' optimal headway given the current optimal headway.


\begin{proposition}[Larger headway for SO-DTA]
Suppose $\textbf{h}^{*}$ satisfies Equation \ref{ineq:headway_dis} and $ \textbf{x}^{*} \in \Omega_{\textbf{h}^{*}}$ \textit{s.t.} $TTT(\textbf{x}^{*})=TTT_{\textbf{h}^{*}}=TTT^{*}$. If~ $\textbf{h}$ satisfies following constraints:

\begin{eqnarray}
\rho_{i,j}^{*}(k)\textit{v}_{i,j}^f\textit{h}_{i,j}(k) & \leq& 1-\rho_{i,j}^{*}(k)\textit{L}  \qquad i,j,k \in \left\{(i,j;k)~|~ \textit{v}_{i,j}^{f}{~} \rho_{i,j}^{*}(k)  \leq  \frac{1-\rho_{i,j}^{*}(k)\emph{L}}{\textit{h}_{i,j}^{*}(k)} \right\} \label{eq:new1}\\
L_{i,j}\textit{h}_{i,j}(k) &\geq& \Delta_{t}\textit{L}\textit{n}_{i,j}^{w,*}(k)
\qquad i,j,k \in \left\{(i,j;k)~|~ \textit{v}_{i,j}^{f}{~} \rho_{i,j}^{*}(k)  \leq  \frac{1-\rho_{i,j}^{*}(k)\emph{L}}{\textit{h}_{i,j}^{*}(k)} \right\} \\
 L_{i,j}\textit{h}_{i,j}(k) & \leq &  \Delta_{t}\textit{L}\textit{n}_{i,j}^{w,*}(k)+\Delta_{t}\textit{L} \qquad i,j,k \in \left\{(i,j;k)~|~ \textit{v}_{i,j}^{f}{~} \rho_{i,j}^{*}(k) \leq  \frac{1-\rho_{i,j}^{*}(k)\emph{L}}{\textit{h}_{i,j}^{*}(k)} \right\} \\
 \textit{h}_{i,j}^{*}(k) \leq \textit{h}_{i,j}(k) & \leq&  \textit{h}_{i,j;k}^{max} 
\qquad i,j,k \in \left\{(i,j;k)~|~ \textit{v}_{i,j}^{f}{~} \rho_{i,j}^{*}(k)  \leq  \frac{1-\rho_{i,j}^{*}(k)\emph{L}}{\textit{h}_{i,j}^{*}(k)} \right\} \\
 \textit{h}_{i,j}(k) & =&  \textit{h}_{i,j}^{*}(k) \qquad i,j,k \in \left\{(i,j;k)~|~ \textit{v}_{i,j}^{f}{~} \rho_{i,j}^{*}(k) \geq \frac{1-\rho_{i,j}^{*}(k)\emph{L}}{\textit{h}_{i,j}^{*}(k)} \right\} \label{eq:new2}
\end{eqnarray}

where $\rho_{i,j}^{*}(k)$ and $\textit{n}_{i,j}^{w,*}(k) \in \textbf{x}^{*}$, then we have $\textbf{x}^{*} \in \Omega_{\textbf{h}}$, $TTT_{\textbf{h}}=TTT(\textbf{x}^{*})=TTT^{*}$ and $||~ \textbf{h}^{*}~ ||_{1} < ||~ \textbf{h} ||_{1}$. 

\label{prop:optimal}
\end{proposition}

\begin{proof}
It is obvious that if headway variables $\textbf{h}$ satisfies the Equations \ref{eq:new1} to \ref{eq:new2}, then $\textbf{h}$ and $\textbf{x}^{*}$ would constitute the optimal solution of SO-DTA problem in Equation \ref{eq:obj_dis} to \ref{ineq:headway_dis}, so we have $TTT_{\textbf{h}}=TTT(\textbf{x}^{*})=TTT^{*}$ and $||~ \textbf{h}^{*}~ ||_{1} < ||~ \textbf{h} ||_{1}$.
\end{proof}

Proposition \ref{prop:minimum} and \ref{prop:optimal} inspire us a way to derive the maximin headway control and it mainly consists of the following two steps: 1) derive the flow dynamics $\textbf{x}^{*}$ for SO-DTA under the minimum headway $\textbf{h}^{min}$; 2) explore the ``largest'' headway setting based on $\textbf{x}^{*}$ by remaining the same TTT.


To this end, we propose an algorithm to solve the maximin headway control in Algorithm \ref{algo:optimal} and prove the correctness of Algorithm \ref{algo:optimal} under the condition of the existence of unique optimal flow dynamics in Proposition \ref{prop:correctness}. 
One can see from Algorithm \ref{algo:optimal} that, once the traffic condition $\textbf{x}^{*}$ is determined, we can run a linear programming (LP) to solve for the maximin headway, and hence the solution process is computationally efficient.

\begin{breakablealgorithm}
\caption{Maximin Headway Control Solving Algorithm}
{\textbf{Input:} parameters in Table \ref{notation_con}}.\qquad\qquad\qquad\qquad\qquad\qquad\qquad\qquad~
\begin{algorithmic}[1]
\STATE Solve the discretized SO-DTA formulation (Equations \ref{eq:obj_dis} to \ref{eq:FD_dis}; Equations \ref{eq:shockwave_dis1} to \ref{ineq:headway_dis}) under minimum headway $\textbf{h}^{min}$ to derive the optimal flow dynamics $\textbf{x}^{*}$ with $TTT(\textbf{x}^{*})=TTT_{\textbf{h}^{min}}=TTT^{*}$. To be specific, we can solve Formulation~\ref{eq:milp} to obtain $\textbf{x}^{*}$, and details of the formulation is presented in \ref{app:MILP}.
\begin{eqnarray}
\begin{aligned}
    \textbf{x}^{*} =& \arg\min \limits_{\mathbf{x}} ~~  TTT =\textbf{P}^{T}\textbf{x}\\
    & \, s.t.  \quad \textbf{A}(\textbf{h}^{min})\textbf{x}=\textbf{B}(\textbf{h}^{min}) \\
    & \quad \quad ~~ \textbf{C}(\textbf{h}^{min})\textbf{x}\leq\textbf{D}(\textbf{h}^{min}) \\
\end{aligned}
\label{eq:milp}
\end{eqnarray}
\STATE Solve Formulation~\ref{eq:swap} under the  flow dynamics $\textbf{x}^{*}$ and derive the maximin headway  $\textbf{h}^{*}$, where $\rho_{i,j}^{*}(k)$ and $\textit{n}_{i,j}^{w,*}(k) \in \textbf{x}^{*}$.

\begin{eqnarray}
\max \limits_{\mathbf{h}} ~~  \mathbf{1}^T~\textbf{h} \qquad\qquad \quad &\,&
\label{eq:swap}\\
s.t. \quad  \rho_{i,j}^{*}(k)\textit{v}_{i,j}^f\textit{h}_{i,j}(k) & \leq&  1-\rho_{i,j}^{*}(k)\textit{L}  \qquad i,j,k \in \left\{(i,j;k)~|~ \textit{v}_{i,j}^{f}{~} \rho_{i,j}^{*}(k) \leq \frac{1-\rho_{i,j}^{*}(k)\emph{L}}{\textit{h}_{i,j}^{*}(k)} \right\} \nonumber \\
 L_{i,j}\textit{h}_{i,j}(k) & \geq&  \Delta_{t}\textit{L}\textit{n}_{i,j}^{w,*}(k)
\qquad i,j,k \in \left\{(i,j;k)~|~ \textit{v}_{i,j}^{f}{~} \rho_{i,j}^{*}(k)  \leq  \frac{1-\rho_{i,j}^{*}(k)\emph{L}}{\textit{h}_{i,j}^{*}(k)} \right\} \nonumber \\
 L_{i,j}\textit{h}_{i,j}(k) & \leq&  \Delta_{t}\textit{L}\textit{n}_{i,j}^{w,*}(k)+\Delta_{t}\textit{L} \qquad i,j,k \in \left\{(i,j;k)~|~ \textit{v}_{i,j}^{f}{~} \rho_{i,j}^{*}(k)  \leq  \frac{1-\rho_{i,j}^{*}(k)\emph{L}}{\textit{h}_{i,j}^{*}(k)} \right\} \nonumber \\
 \textit{h}_{i,j}^{*}(k) \leq \textit{h}_{i,j}(k) & \leq&  \textit{h}_{i,j;k}^{max} 
\qquad i,j,k \in \left\{(i,j;k)~|~ \textit{v}_{i,j}^{f}{~} \rho_{i,j}^{*}(k)  \leq  \frac{1-\rho_{i,j}^{*}(k)\emph{L}}{\textit{h}_{i,j}^{*}(k)} \right\} \nonumber \\
 \textit{h}_{i,j}(k) & =&  \textit{h}_{i,j}^{*}(k) \qquad i,j,k \in \left\{(i,j;k)~|~ \textit{v}_{i,j}^{f}{~} \rho_{i,j}^{*}(k) \geq \frac{1-\rho_{i,j}^{*}(k)\emph{L}}{\textit{h}_{i,j}^{*}(k)} \right\} \nonumber
\end{eqnarray}


\end{algorithmic}
{\textbf{Output:} maximin headway $\textbf{h}^{*}$ and optimal flow dynamics $\textbf{x}^{*}$ with $\textbf{x}^{*} \in \Omega_{\textbf{h}^{*}}$ and $TTT_{\textbf{h}^{*}}=TTT(\textbf{x}^{*})=TTT^{*}$}.\quad~~
\label{algo:optimal}
\end{breakablealgorithm}

\begin{proposition}[Correctness of Algorithm~\ref{algo:optimal}]
The output $\textbf{h}^{*}$ of Algorithm~\ref{algo:optimal} is the optimal solution of Formulation~\ref{eq:maximin} if $~\exists ~\textbf{x}^{*} \in \Phi ~\textit{s.t.}~ TTT(\textbf{x}^{*})<TTT(\textbf{x}), \forall \textbf{x}\in \Phi ~and~ \forall  \textbf{x} \neq \textbf{x}^{*}$.
\label{prop:correctness}
\end{proposition}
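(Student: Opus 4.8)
The plan is to show that, under the uniqueness hypothesis, Algorithm~\ref{algo:optimal} maximizes $\mathbf{1}^{T}\textbf{h}$ over \emph{exactly} the inner-level feasible region of the maximin problem~\ref{eq:maximin}, so that its output $\textbf{h}^{*}$ solves~\ref{eq:maximin}. First I would pin down the flow dynamics produced in Step~1. By Proposition~\ref{prop:minimum} we have $TTT_{\textbf{h}^{min}}=TTT^{*}$, so the MILP~\ref{eq:milp} solved under $\textbf{h}^{min}$ attains the global optimum $TTT^{*}$. By hypothesis there is a \emph{unique} $\textbf{x}^{*}\in\Phi$ with $TTT(\textbf{x}^{*})<TTT(\textbf{x})$ for all $\textbf{x}\neq\textbf{x}^{*}$; since $\textbf{x}^{*}\in\Omega_{\textbf{h}^{min}}\subseteq\Phi$, Step~1 necessarily returns this $\textbf{x}^{*}$. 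By Proposition~\ref{prop:unique} the same hypothesis guarantees that the optimal solution of~\ref{eq:maximin} is unique, so it suffices to exhibit $\textbf{h}^{*}$ as \emph{an} optimal solution.

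The core step is to prove that the feasible region of Formulation~\ref{eq:swap} coincides with the set of optimal headways $\mathcal{H}^{*}:=\{\textbf{h}\mid TTT_{\textbf{h}}=TTT^{*}\}$, which is precisely the inner-level feasible set of~\ref{eq:maximin}. I would establish $\mathcal{H}^{*}=\{\textbf{h}\mid \textbf{x}^{*}\in\Omega_{\textbf{h}}\}$ in two directions: if $\textbf{h}\in\mathcal{H}^{*}$ then some $\textbf{x}\in\Omega_{\textbf{h}}$ attains $TTT^{*}$, and uniqueness forces $\textbf{x}=\textbf{x}^{*}$, so $\textbf{x}^{*}\in\Omega_{\textbf{h}}$; conversely if $\textbf{x}^{*}\in\Omega_{\textbf{h}}$ then $TTT_{\textbf{h}}\le TTT(\textbf{x}^{*})=TTT^{*}$, and since $TTT^{*}$ is the global minimum over all admissible headways, $TTT_{\textbf{h}}=TTT^{*}$.

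It then remains to verify that ``$\textbf{x}^{*}\in\Omega_{\textbf{h}}$'' reduces exactly to the constraints in~\ref{eq:swap}. Here I would use that $\textbf{x}^{*}$ already satisfies every constraint of the discretized SO-DTA that does not involve $\textbf{h}$ (flow conservation, non-negativity, the queue relations, initial/end and demand conditions), so only the headway-coupled constraints remain. In a free-flow cell ($\delta^{*}_{i,j}(k)=0$) the flow $v^{f}_{i,j}\rho^{*}_{i,j}(k)$ is independent of $\textbf{h}$, so raising $h_{i,j}(k)$ leaves $\textbf{x}^{*}$ unchanged provided the cell stays sub-critical, i.e. $\rho^{*}_{i,j}(k)v^{f}_{i,j}h_{i,j}(k)<1-\rho^{*}_{i,j}(k)L$, and provided the discretized shockwave count stays equal to $n^{w,*}_{i,j}(k)$, i.e. $\Delta_{t}Ln^{w,*}_{i,j}(k)\le L_{i,j}h_{i,j}(k)<\Delta_{t}L(n^{w,*}_{i,j}(k)+1)$, which are exactly Eqs.~\ref{eq:shockwave_dis1}--\ref{eq:shockwave_dis2} with $n^{w}$ held fixed. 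In a congested cell ($\delta^{*}_{i,j}(k)=1$) the flow $(1-\rho^{*}_{i,j}(k)L)/h_{i,j}(k)$ does depend on $\textbf{h}$, so any change of $h_{i,j}(k)$ would alter a component of $\textbf{x}^{*}$; hence feasibility of $\textbf{x}^{*}$ forces $h_{i,j}(k)=h^{min}_{i,j}(k)$, the last constraint of~\ref{eq:swap}. This is essentially the content of Proposition~\ref{prop:optimal}, which I would invoke to certify that each such $\textbf{h}$ keeps $\textbf{x}^{*}$ feasible and optimal.

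Combining the two reductions, Formulation~\ref{eq:swap} maximizes $\mathbf{1}^{T}\textbf{h}$ over $\mathcal{H}^{*}$, which is exactly the objective and feasible set of~\ref{eq:maximin}; therefore its optimizer $\textbf{h}^{*}$ solves~\ref{eq:maximin}, and by Proposition~\ref{prop:unique} it is \emph{the} optimal solution. The main obstacle I anticipate is the congested-cell argument coupled with the role of uniqueness: without the hypothesis an optimal headway could support a \emph{different} optimal flow pattern, so restricting to the feasibility of the single $\textbf{x}^{*}$ would shrink $\mathcal{H}^{*}$ and might discard headways of larger $1$-norm; the equivalence $\mathcal{H}^{*}=\{\textbf{h}\mid\textbf{x}^{*}\in\Omega_{\textbf{h}}\}$ is precisely where uniqueness is indispensable. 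A secondary technical point is that the free-flow and shockwave constraints in~\ref{eq:swap} are strict inequalities, so I would also confirm that the supremum is attained (e.g. because the binding upper bound at the optimum is the non-strict cap $h^{max}_{i,j}$, or by the discreteness of the integer shockwave levels), ensuring the LP returns a genuine maximizer rather than an unattained supremum.
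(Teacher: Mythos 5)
Your proposal is correct and takes essentially the same route as the paper, whose entire proof is the single sentence that the result follows by combining Propositions~\ref{prop:minimum}, \ref{prop:unique} and \ref{prop:optimal}; your explicit equivalence $\{\textbf{h} \mid TTT_{\textbf{h}}=TTT^{*}\} = \{\textbf{h} \mid \textbf{x}^{*}\in\Omega_{\textbf{h}}\}$, valid only under the uniqueness hypothesis, is precisely the unstated glue in that combination. Your closing remark about attainment of the supremum under the strict inequalities of Formulation~\ref{eq:swap} flags a genuine loose end that the paper itself never addresses, so it strengthens rather than weakens the argument.
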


\begin{proof}
Proposition \ref{prop:correctness} can be directly obtained by combining Proposition \ref{prop:minimum}, \ref{prop:unique}, and \ref{prop:optimal}.
\end{proof}


Moreover, as all the variables in different time intervals are simultaneously obtained by Algorithm \ref{algo:optimal} in an off-line manner, which cannot be used in the real-time control. To address this issue, we further propose an online algorithm to solve the SO-DTA problem incrementally. In each step of the online algorithm, we only obtain the flow dynamic variables and the maximin headway in the current time interval. Details of the online algorithm are presented in \ref{app:online}.

\section{Numerical experiments}
\label{sec:Numerical_Results}

The proposed model and propositions are validated on both a small and a large-scale Hong Kong network in this section. 

\subsection{A small network}

In the small network, we first solve for the optimal headway control and maximin headway control to demonstrate that the maximin headway control achieves the SO-DTA with large headway settings. Additionally, the sensitivity analysis for the maximin headway control is also conducted.


\subsubsection{Settings}

The small network consists of 5 nodes and 6 links, Nodes 1 and 2 are connected to the origins, and Node 5 is connected with the destination.  We further set the total time horizon $\textit{T}$ as 90 minutes and the length of each time interval $\Delta_{t}$ as 5 minutes. It is worth noting that, in simulation-based microscopic models, small time steps (e.g., 5 seconds) could be used  \citep{ben2012dynamic,sumalee2011stochastic,zhang2020path}. However, for the DTA models using dynamic link models, numerous studies opt for a time interval of five minutes or greater \citep{chiu2011dynamic,bellei2005within,janson1991dynamic,chan2021quasi}. Besides, we would conduct the sensitivity analysis for  $\Delta_{t}$ on the proposed maximin headway control framework in \ref{sec:maxmin_small}. The travel demand for each O-D pair is 50 $\textit{vehicles}/\textit{mins}$ and the time horizon of demand $\textit{T}_{1}$ is 40 minutes. Other parameters are illustrated in \ref{app:parameter}.

\begin{figure}[h]
    \centering
    \includegraphics[width=0.6\linewidth]{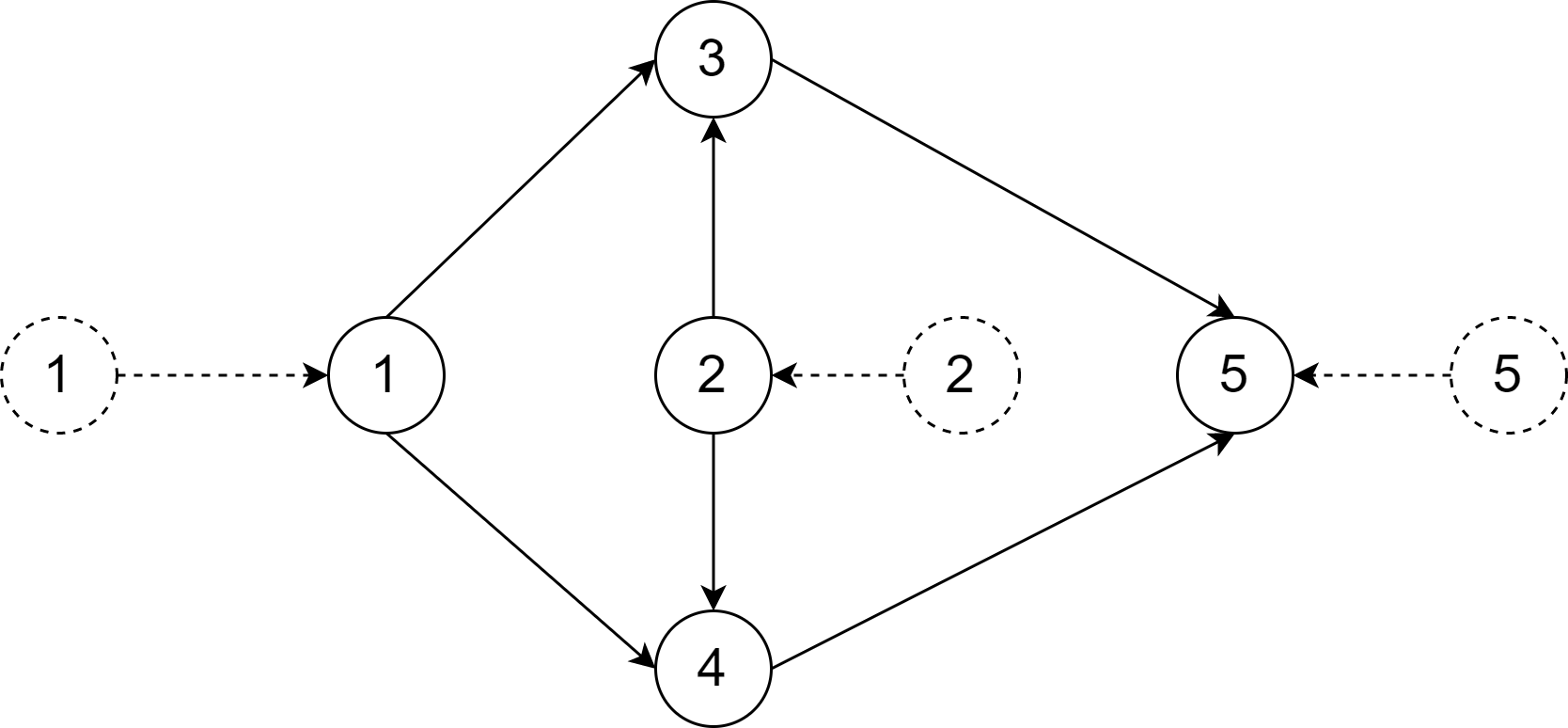}
    \caption{The small network.}
    \label{fig:network}
\end{figure}

\subsubsection{Results for maximin headway control in small network}
\label{sec:maxmin_small}


The three different scenarios are defined for comparison as follows:
\begin{itemize}
\item \textbf{Min-HW}: the headway of each link is directly set as the minimum headway. 
\item \textbf{SO-HW}: we directly solve for the SO-DTA by searching for the optimal headway, and the solution algorithm is based on the sensitivity-based algorithm (Algorithm \ref{algo:sen}) described in \ref{app:SA}.
\item \textbf{Maximin-HW}: we solve the maximin headway by Algorithm~\ref{algo:optimal}.
\end{itemize}

Based on Proposition~\ref{prop:minimum}, we already know that the minimum headway setting would yield the SO-DTA, and the total travel time $TTT(\textbf{h}^{min})$ = 25,210. Additionally, the dynamic inflow and outflow of each link in the SO-DTA are presented in Figure~\ref{fig:Flow_in_links}. The path-based and link-based formulations are equivalent to SO-DTA. Utilizing the path-link incidence matrix or the Dynamic Assignment Ratio (DAR) matrix, we can obtain the path flow under SO-DTA from the optimal link flow as determined by Algorithm \ref{algo:optimal}.






\begin{figure}[h]
    \centering
    \includegraphics[width=0.8\linewidth]{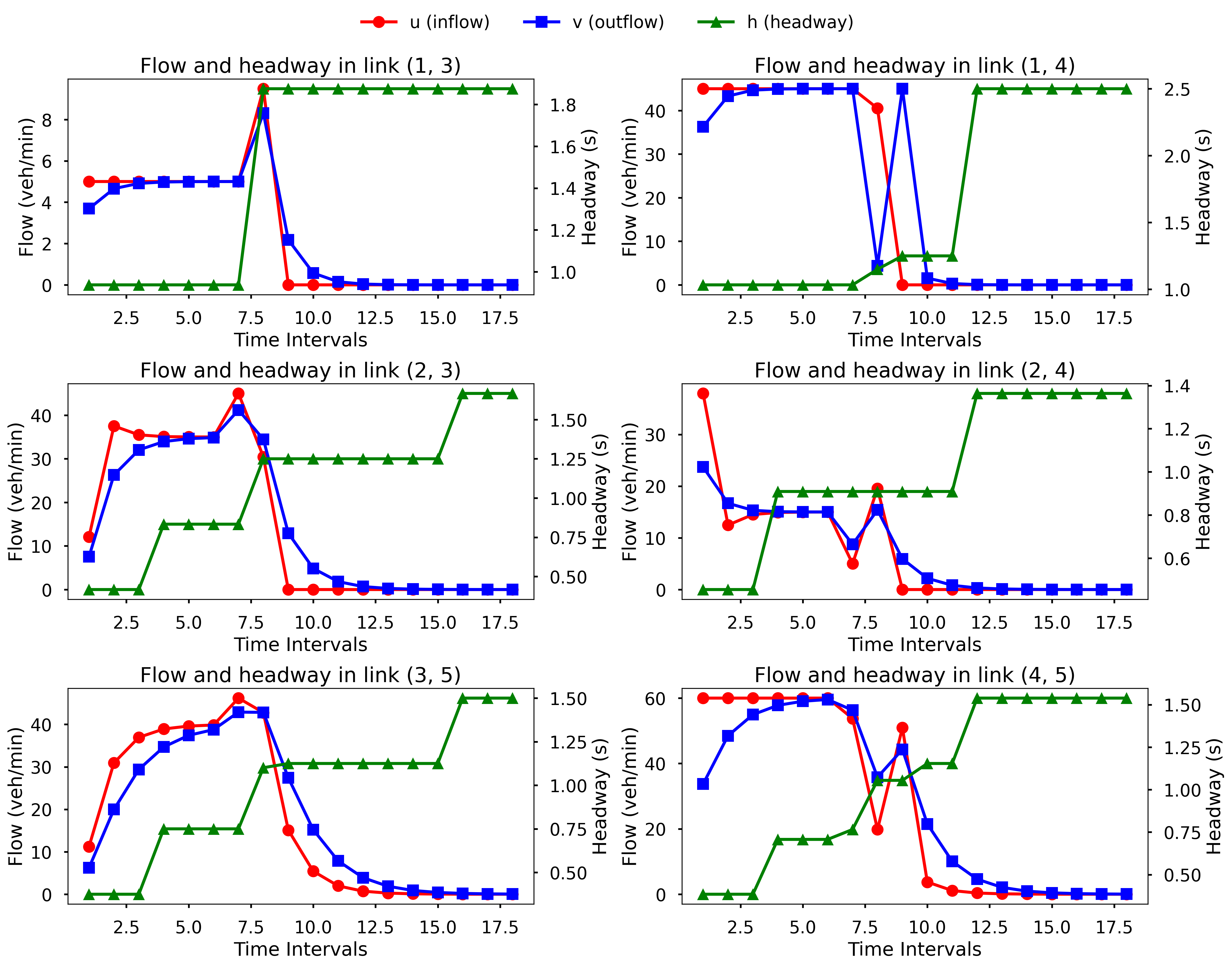}
    \caption{Inflow, outflow, and maximin headway of each link under SO-DTA.}
    \label{fig:Flow_in_links}
\end{figure}

The convergence curve of Algorithm~\ref{algo:sen} in \textbf{SO-HW} is shown in Figure~\ref{fig:FTTT_algorithm2}, where the convergence criterion is that the difference of TTT in the latest two iterations is less than 1 minute. The sensitivity-based Algorithm \ref{algo:sen} may converge to the local optimal and it does not guarantee to find the global optimal solution by Algorithm \ref{algo:sen}. However, it is proved that our proposed Algorithm \ref{algo:optimal} could analytically solve the SO-DTA problem to find the global optimal solution. Therefore, one can see that the sensitivity-based algorithm converges to $TTT$= 31,342 minutes, which is larger than 25,210 derived by Algorithm \ref{algo:optimal} in \textbf{Maximin-HW}. This verifies that the sensitivity-based method might not converge to the optimal solution of SO-DTA and demonstrates the merits of the proposed analytical algorithm in Algorithm~\ref{algo:optimal}.


\begin{figure}[h]
    \centering
    \includegraphics[width=0.7\linewidth]{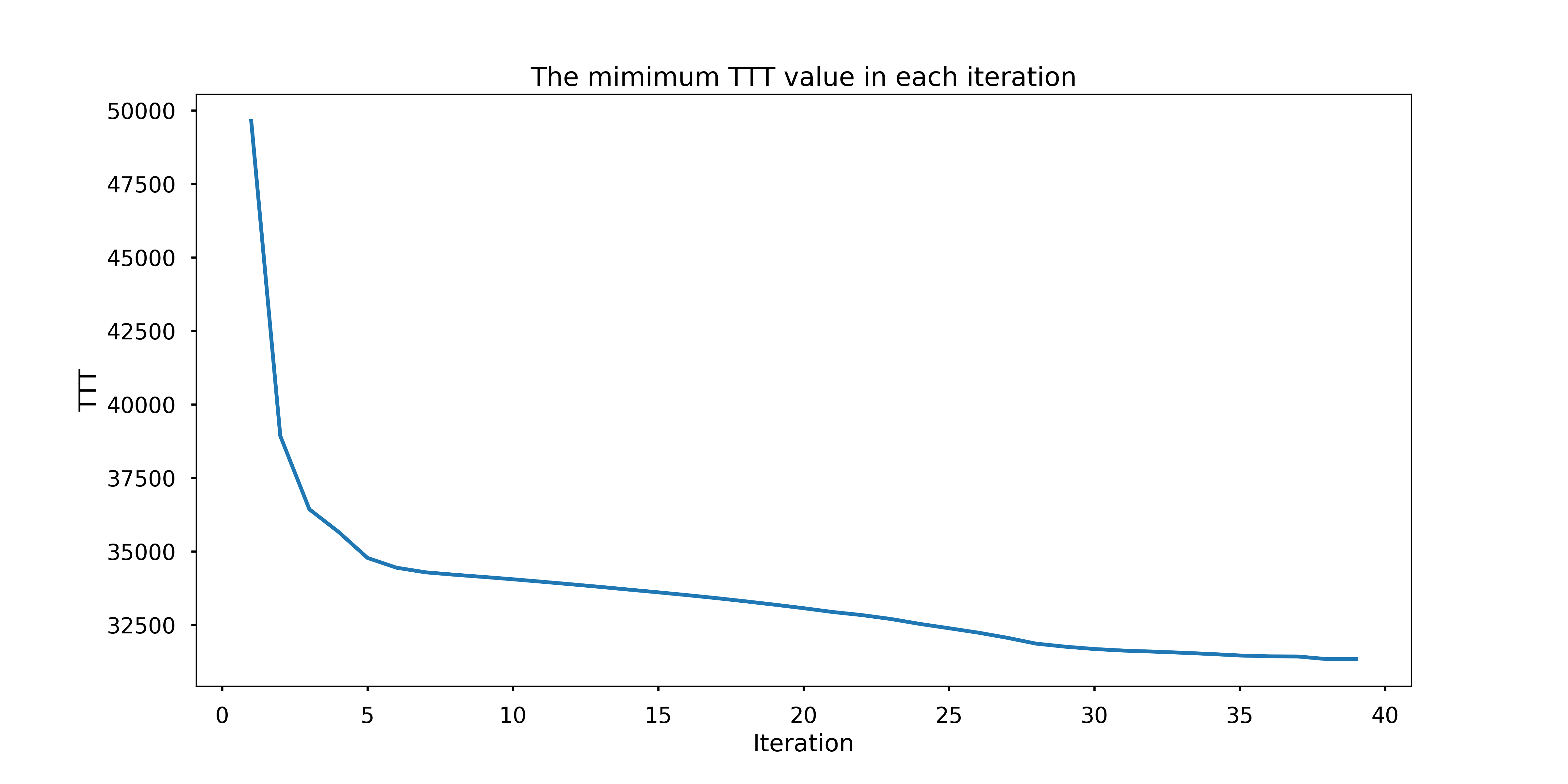}
    \caption{Convergence curve in terms of TTT by Algorithm \ref{algo:sen}.}
    \label{fig:FTTT_algorithm2}
\end{figure}

As for the \textbf{Maximin-HW}, the TTT remains 25,210 minutes, meaning that the maximin headway still achieves the SO-DTA. Besides, to measure the improvement of maximin headway compared with minimum headway when considering the safety aspect, we define the ratio of maximin headway $\bf{rmh}$ in Equation \ref{eq:ratio}, where $\textbf{h}^{*}=\{h_{i,j}^{*}(k)~|~(i,j) ~\in E\setminus(L_R\cup L_S); 1\leq k \leq N\}$ is the vector of maximin headway solved by Algorithm \ref{algo:optimal} and $\textbf{h}^{min}=\{h_{i,j;k}^{min}~|~(i,j) ~\in E\setminus(L_R\cup L_S); 1\leq k \leq N\}$ is the vector of minimum headway.

\begin{equation}
    \bf{rmh} = \frac{||\textbf{h}^{*}||_{1}}{||\textbf{h}^{min}||_{1}} 
    \label{eq:ratio}
\end{equation}

The ratio of maximin headway $\bf{rmh}$ in the small network is 1.43, which indicates that the maximin headway is about 1.43 times as large as the minimum headway on average, demonstrating the merits of the maximin headway control. Table \ref{table2} presents the average maximin headway and minimum headway of each link over time. One can see that the gap between the maximin and minimum headway is significant, and the gap could also reflect the safety margin of each link. The time-dependent maximin headway is illustrated in Figure  \ref{fig:Flow_in_links} and it could infer a rough relationship between the optimal flow dynamics and maximin headway. The link flow under SO-DTA is usually large when the maximin headway is small and vice versa. This result aligns with the real-world traffic phenomena. When the flow in a road or highway is small, it is reasonable to set a relatively large headway.

\begin{table}[H]
\centering
\renewcommand\arraystretch{1.5}
\tabcolsep=0.7cm
\begin{tabular}{ccc}
		\hline
		 $\textbf{Link}$ & $\textbf{Average~Minimum~Headway}$ &  $\textbf{Average~Maximin~Headway}$ \\
		\hline
		1 $\to$ 3 & 0.969s  & 1.510s \\
	  1 $\to$ 4  & 0.997s  & 1.646s   \\
        2 $\to$ 3  & 0.783s  & 1.088s   \\
        2 $\to$ 4  & 0.742s  & 1.088s   \\
        3 $\to$ 5 & 0.753s  & 0.978s   \\
        4 $\to$ 5 & 0.828s  & 1.068s  \\
		\hline
	\end{tabular}
	\caption{Maximin and minimum headway in the small network.}
    \label{table2}
\end{table}




\subsubsection{Sensitivity analysis for the maximin headway control}

This section explores how different network configurations would affect the ratio of maximin headway. In particular, we study the sensitivity regarding the length of time interval, travel demand, and minimum headway.

{\bf Time interval}. We first explore the influence of the length of time interval on the ratio of maximin headway. The length of time interval $\Delta_{t}$ varies from 2 to 12 minutes. 
Figure \ref{fig:time_interval} illustrates that the ratio of maximin headway remains stable when the length of time interval increases, implying that the current length of time interval is sufficient for the case study.

\begin{figure}[h]
    \centering
    \includegraphics[width=0.7\linewidth]{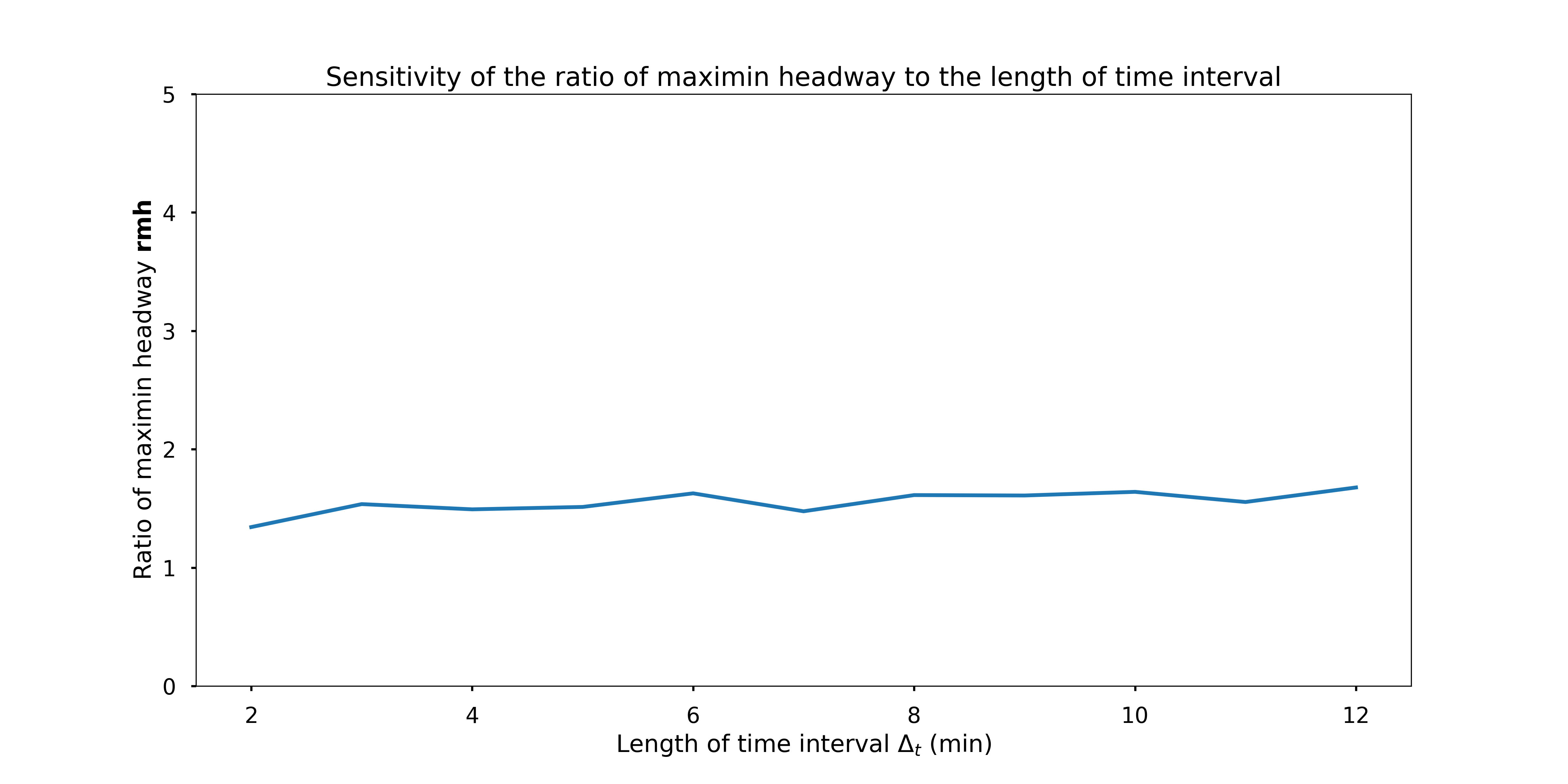}
    \caption{The ratio of maximin headway under different lengths of time intervals.}
    \label{fig:time_interval}
\end{figure}

{\bf Travel demand.} We conduct the sensitivity analysis regarding travel demand under different minimum headway requirements. We set a constant minimum headway for all links and vary the demand of each O-D pair from 30 to 70 veh/min under different minimum headway requirements, then we solve  Algorithm \ref{algo:optimal} and derive the ratio of maximin headway, and the results are shown in  Figure \ref{fig:Gap_demand}. 

\begin{figure}[h]
    \centering
    \includegraphics[width=0.7\linewidth]{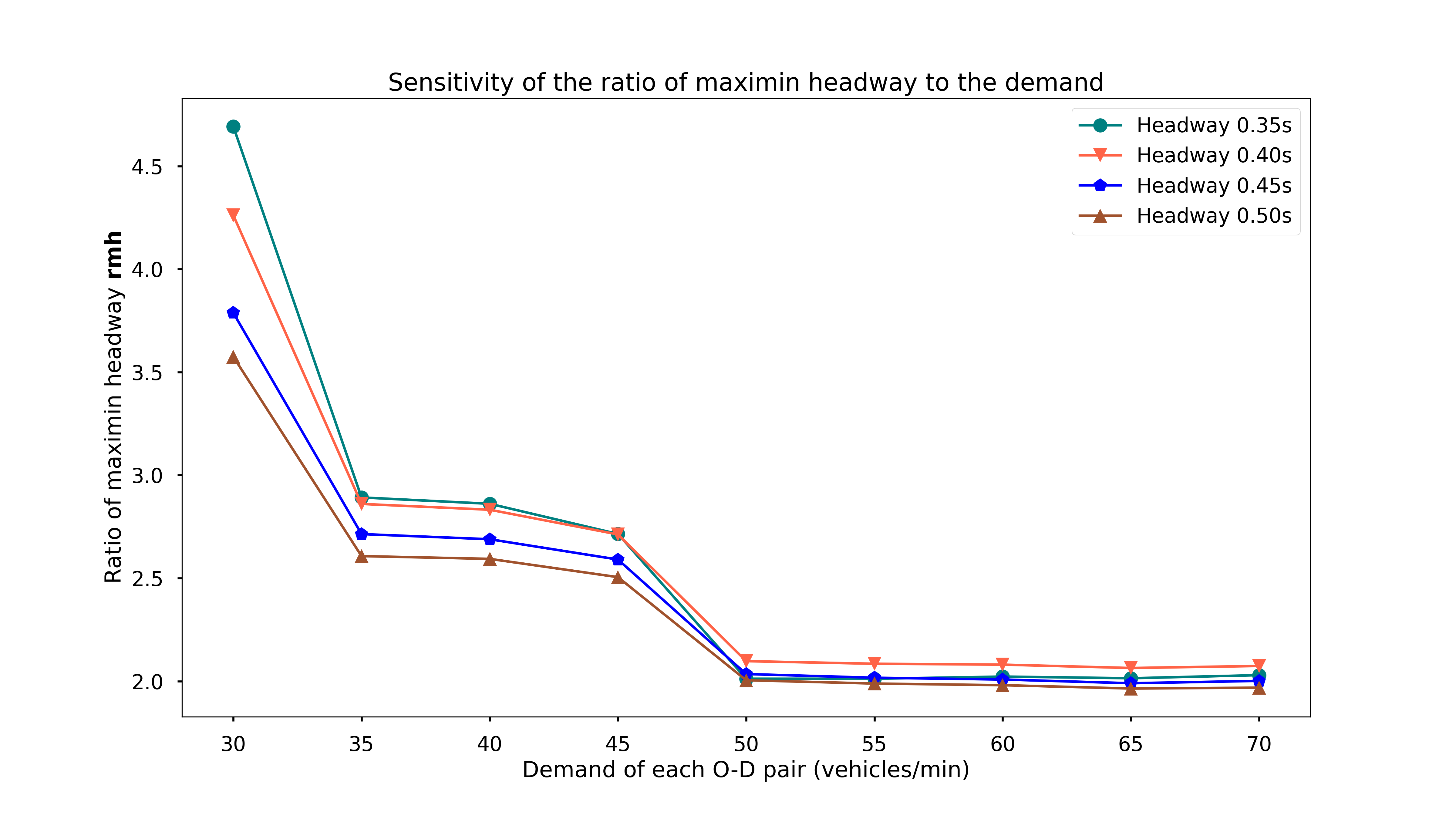}
    \caption{The ratio of maximin headway under different traffic demands.}
    \label{fig:Gap_demand}
\end{figure}

One can see that the ratio of maximin headway decreases when the demand increases, indicating that the safety margin for SO-DTA will become marginal when more people travel.



{\bf Minimum headway.} 
Then the sensitivity analysis for the minimum headway value is conducted. The minimum headway is constant for all links and we vary the value of minimum headway from 0.20s to 1.10s under different O-D demands and keep other parameters unchanged. Then we solve the Algorithm \ref{algo:optimal} and derive the ratio of maximin headway under different minimum headway values. Figure \ref{fig:Gap_minimum_headway} shows how the the ratio of maximin headway would change corresponding to different minimum headway values. 

\begin{figure}[H]
    \centering
    \includegraphics[width=0.7\linewidth]{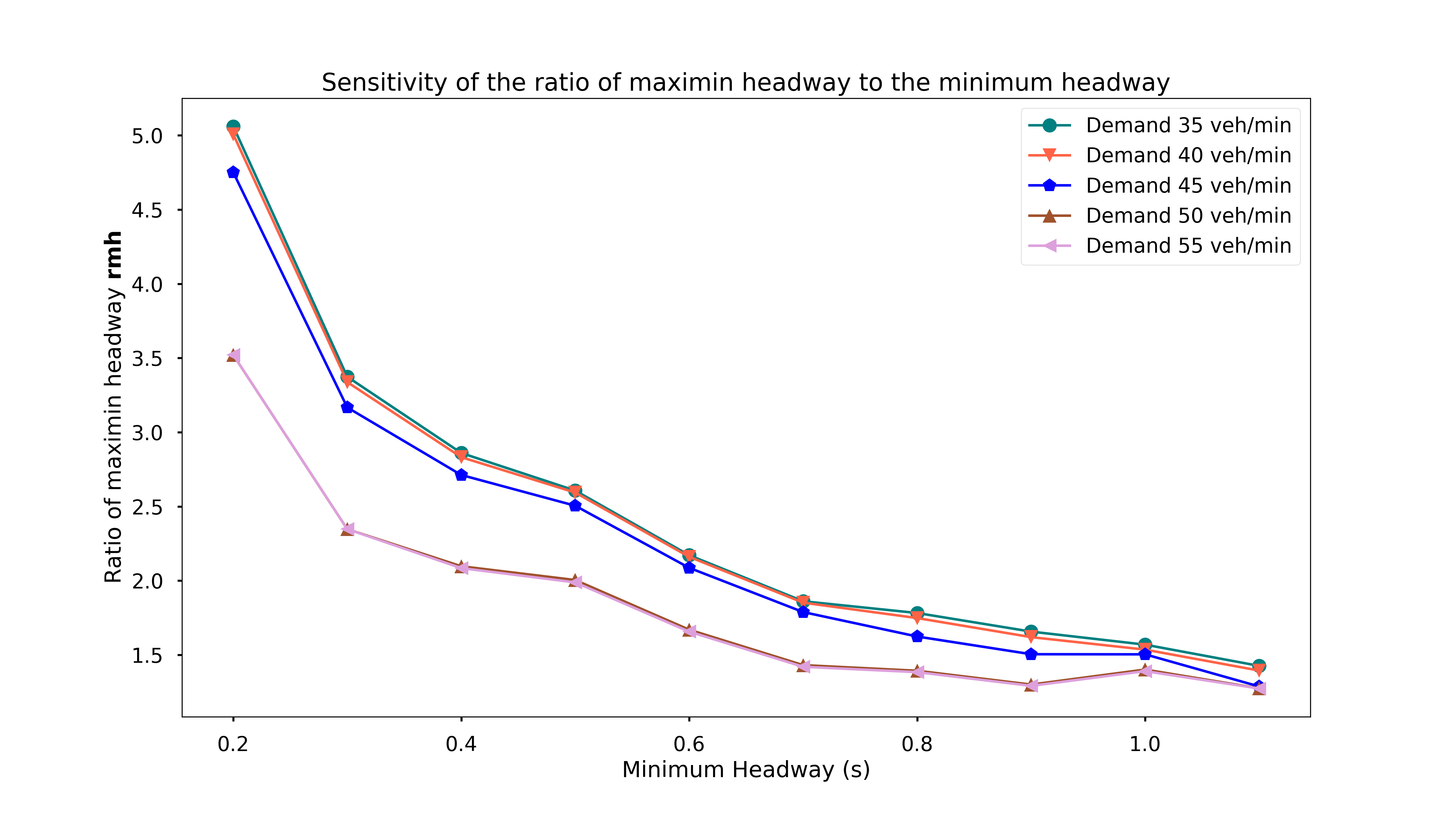}
    \caption{The ratio of maximin headway under different minimum headway settings.}
    \label{fig:Gap_minimum_headway}
\end{figure}

One can see that in general, the ratio of maximin headway is relatively stabilized under different minimum headway, indicating that there is always room to enlarge the headway after achieving SO-DTA under different safety requirements ({\em i.e}, minimum headway).

\subsection{The Hong Kong network}
\label{sec:hk_set}
In this section, we explore the effectiveness of the maximin headway control in real traffic conditions in the Hong Kong network.
\subsubsection{Settings}
We study the urban area of Kowloon Peninsula in Hong Kong to capture the real traffic conditions under headway control. The Hong Kong network consists of 55 nodes and 192 links, and the number of O-D pairs is 72, as referred to Figure \ref{fig:hk_network}. The total time horizon is 250 minutes and the length of time interval is set as 5 minutes. The minimum and maximum headway is set as a constant 0.5 seconds and 2.5 seconds, respectively. 
The link parameters and hour OD demand profile for the Hong Kong network are illustrated in \ref{app:hk}.

\begin{figure}[h]
    \centering
    \includegraphics[width=0.5\linewidth]{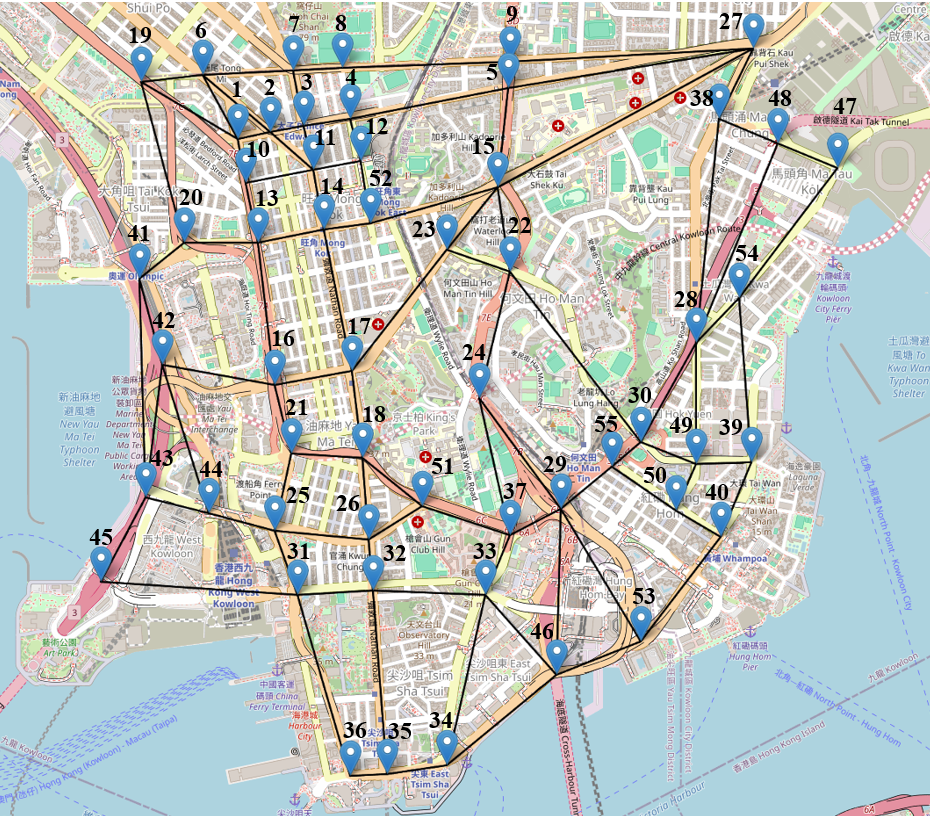}
    \caption{Overview of the Hong Kong network.}
    \label{fig:hk_network}
\end{figure}


\subsubsection{Experimental results}

The results of \textbf{Min-HW} and \textbf{Maximin-HW} in Hong Kong network are illustrated as follows. Besides, we also present the high computation efficiency of Algorithm \ref{algo:optimal} in this large-scale network.

\begin{itemize}
\item \textbf{Min-HW}: we directly set the minimum headway for each link, and we have  $TTT(\textbf{h}_{min})$ = 639,138 minutes.
\item \textbf{Maximin-HW}: The $TTT$ is the same as that of the \textbf{Min-HW}, and the ratio of maximin headway $\bf{rmh}$ is 4.15, which indicates that the maximin headway control could significantly improve the minimum headway and also keeps SO-DTA.
\end{itemize}

In terms of the computation efficiency, the computation time of Algorithm \ref{algo:optimal} is about 2,802 seconds to solve the maximin headway. However, the sensitivity-based Algorithm \ref{algo:sen} fails to solve the maximin headway as it requires impractical memory usage to compute the generalized inverse matrix $\textbf{Q}^{-1}$ defined in Equation \ref{matrix:2}. Therefore, our proposed framework of maximin control reveals its applicability in large-scale networks. 

Then we calculate the average value of maximin headway and link flow under SO-DTA for all links and time intervals to explore the network-wide effects of maximin headway in the large-scale network. Figures \ref{fig:network_headway} and \ref{fig:network_flow} present the network-wide average maximin headway and the average link flow under maximin headway control, where we calculate the average value of time-dependent maximin headway, inflow, and outflow for all links and time intervals. 
One can observe a negative correlation between maximin headway and link flow, i.e., a large maximin headway flow in other links always corresponds to a small link flow.



\begin{figure}[h]
  \centering
  \begin{subfigure}[b]{0.45\textwidth}
    \includegraphics[width=\textwidth]{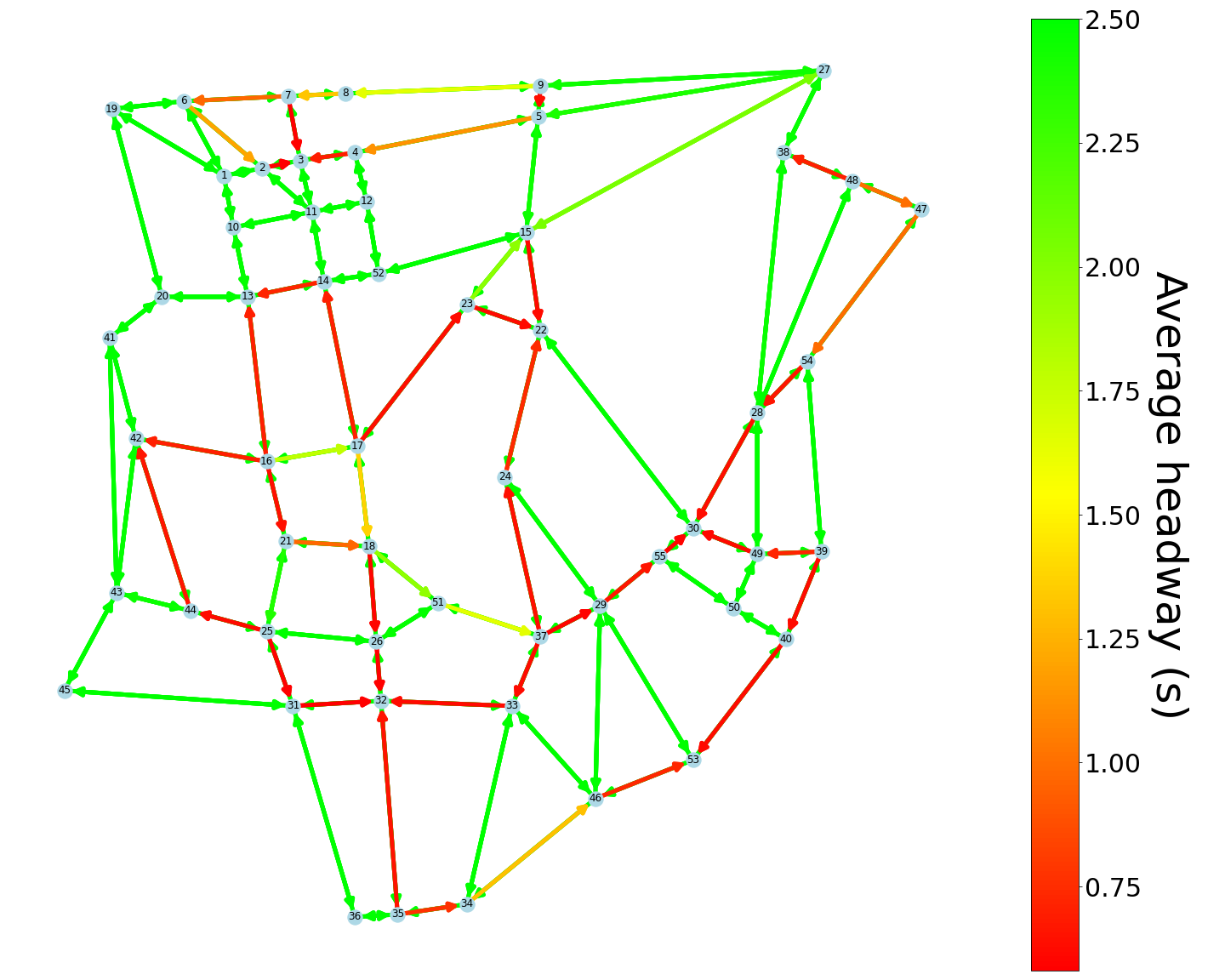}
    \caption{The average maximin headway.}
    \label{fig:network_headway}
  \end{subfigure}
  \hfill
  \begin{subfigure}[b]{0.45\textwidth}
    \includegraphics[width=\textwidth]{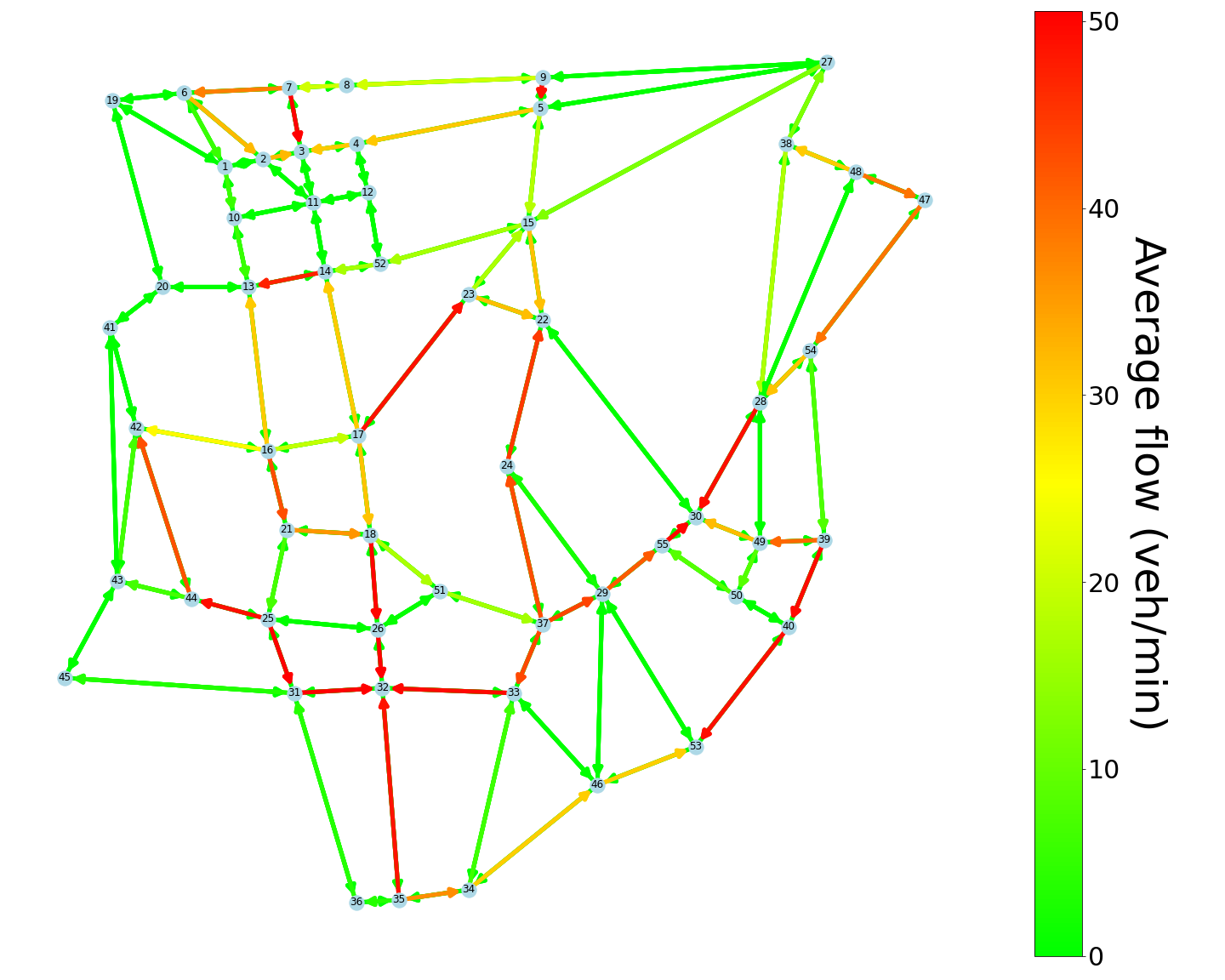}
    \caption{The average link flow.}
    \label{fig:network_flow}
  \end{subfigure}
  \caption{Overview of link flow and maximin headway of each link in Hong Kong network.}
\label{fig:network_plot}
\end{figure}

To this end, we plot the maximin headway against the link flow for all time intervals and links in Figure \ref{fig:distribution}, where we set a homogeneous minimum headway as 0.5s and maximum headway and 2.5s for all time intervals and links as presented in Section \ref{sec:hk_set}. 
Due to different link parameters and queue capacity restrictions, the detailed free-flow and congestion states of each link vary. 
The maximin headway control framework is more effective in the free-flow states when the link flow is relatively small, where the maximin headway could be approximating the maximum headway under SO-DTA. 
However, when the link flow increases, the maximin headway would decrease, but it is still larger than the minimum headway, representing the merits of maximin headway control in ensuring a higher safety level under SO-DTA. 
When the link flow is very high, the maximin headway would equal the minimum headway, indicating that there is no room to improve safety that still achieves SO-DTA. Overall, the maximin headway is highly dependent on the link flow under SO-DTA and, in general,  maximin headway is higher under less congested links.

\begin{figure}[h]
    \centering
    \includegraphics[width=0.8\linewidth]{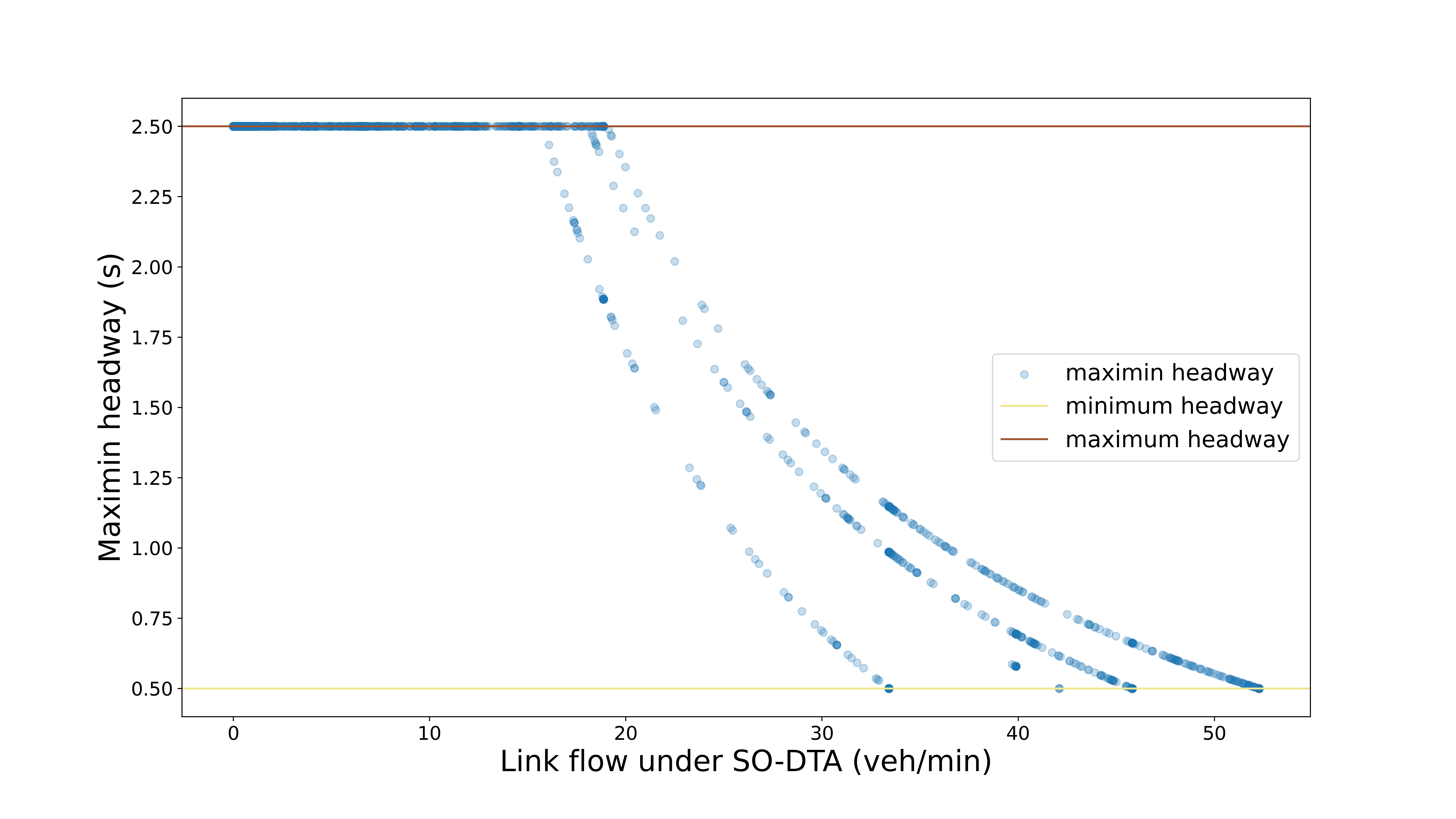}
    \caption{The distribution of maximin headway and link flow under SO-DTA.}
    \label{fig:distribution}
\end{figure}

Additionally, we also run the the online algorithm in \ref{app:online}, and the length of time interval $\Delta_{t}$ is set as 5 minutes. The total travel time $TTT$ is 660,022 and the ratio of maximin headway $\bf{rmh}$ is 4.11. The average time of conducting the online algorithm to obtain the real-time headway control in each iteration is less than 1 second on a personal computer. This demonstrates the effectiveness and efficiency of the proposed method in the online settings.

\section{Conclusion}
\label{sec:Conclusion}

In this paper, we propose a maximin headway control framework for SO-DTA on general networks in a fully automated environment. 
Both HFD and HDQ are developed to capture the effects of dynamic AV headway control. Particularly, the HDQ is proved to be a generalized version of the DQ. 
It is rigorously proved that the minimum headway could achieve SO-DTA. On top of this, the maximin headway of AVs is innovatively defined as the ``largest'' headway setting that still achieves SO-DTA. 
We further propose an efficient solution algorithm to derive the maximin headway. Numerical experiments demonstrate the correctness and effectiveness of the proposed framework. 
Sensitivity analysis demonstrates that the increase in travel demand would normally correspond to a decrease in gap values between the maximin and minimum headway, which helps the policymakers understand the safety margin of AVs on different links. We believe this study provides novel angles and mathematical techniques to derive and analyze the ``desired solution'' among the non-unique solutions in SO-DTA.


Future studies can be conducted to consider mixed traffic conditions, heterogeneous travel behaviors, and stochastic traffic demand and road supply. 
For example, this study is based on the HFD and HDQ in a fully automated environment so that we could control the headway of all AVs for SO-DTA. However, in mixed traffic conditions, the route choice of human-driving vehicles (HVs) may follow dynamic user equilibrium (DUE), making the problem more complex \citep{ZHANG201875}. It is also interesting to explore the headway control on a limited number of roads or with limited control capacities \citep{battifarano2023impact}. Additionally, the enabling technologies and environmental impacts of dynamic headway control are also interesting topics to be studied.

\section*{Acknowledgments}
The work described in this paper was supported by the National Natural Science Foundation of China (No. 52102385),  grants from the Research Grants Council of the Hong Kong Special Administrative Region, China (Project No. PolyU/25209221 and PolyU/15206322), and grants from the Otto Poon Charitable Foundation Smart Cities Research Institute (SCRI) at the Hong Kong Polytechnic University (Project No. P0043552 \& P0036472).

\bibliography{ref}

@article{GUO202087,
title = {Macroscopic fundamental diagram based perimeter control considering dynamic user equilibrium},
journal = {Transportation Research Part B: Methodological},
volume = {136},
pages = {87-109},
year = {2020},
issn = {0191-2615},
doi = {https://doi.org/10.1016/j.trb.2020.03.004},
author = {Qiangqiang Guo and Xuegang (Jeff) Ban},
keywords = {Macroscopic fundamental diagram (MFD), Instantaneous dynamic user equilibrium (IDUE), Differential complementarity system (DCS), Point-queue model, Perimeter control, Efficient-range control, Non-zero sum, non-cooperative differential game},
abstract = {Macroscopic fundamental diagram (MFD) has been receiving increasing attention recently due to its potential to describe traffic dynamics and guide the design of traffic control schemes at the network level. Perimeter control and route guidance are two main MFD-based traffic control approaches. However, current MFD-based perimeter control seldom considers travelers’ route choice behavior, while MFD-based route guidance studies usually assume directly that travelers would follow the guidance and neglect the effects of traffic control. This paper aims to integrate the MFD-based perimeter control (i.e., the behavior of a system manager) and the dynamic user equilibrium based route choice behavior (i.e., the behavior of travelers) into one rigorous mathematical framework. Given a traffic network that has been divided into multiple homogeneous regions, we use MFD to describe the dynamics of each region, and use point queue model to capture the dynamics of queues formed at the boundaries. Besides, we model travelers' route choice behavior by the instantaneous dynamic user equilibrium (IDUE) principle, and design an efficient range perimeter control method from the system perspective. We model the interactions between the system manager and the travelers as a non-zero sum, non-cooperative differential game, where the system manager aims to improve the system performance while travelers try to minimize their own travel times. Meanwhile, they share the common constraints (i.e., MFD dynamics and point queue dynamics at boundaries). Mathematically, this leads to a differential complementarity system (DCS). We propose a time-stepping approach to discretize and solve the DCS model, based on which the solution existence and convergence are also established. Numerical results show that the proposed method can limit the vehicle accumulations within the efficient range of each region, which helps improve the network performance. Compared with the condition without perimeter control, the proposed control method can improve network-wide traffic performance up to 14.18%.}
}

@article{MA201498,
title = {Continuous-time dynamic system optimum for single-destination traffic networks with queue spillbacks},
journal = {Transportation Research Part B: Methodological},
volume = {68},
pages = {98-122},
year = {2014},
issn = {0191-2615},
doi = {https://doi.org/10.1016/j.trb.2014.06.003},
author = {Rui Ma and Xuegang (Jeff) Ban and Jong-Shi Pang},
keywords = {Dynamic system optimum, Optimal control, Double-queue model, Queue spillbacks, Free flow, Operational network capacity},
abstract = {Dynamic system optimum (DSO) is a special case of the general dynamic traffic assignment (DTA). It predicts the optimal traffic states of a network under time-dependent traffic conditions from the perspective of the entire system. An optimal control framework is proposed in this paper for the continuous-time DSO problem for single-destination traffic networks. Departure time choice is part of this DSO model. Double-queue model is applied to capture the impact of downstream congestion and possible queue spillbacks. Feasibility conditions and model properties are discussed. A constructive procedure to compute a free-flow DSO solution is also proposed. A discretization method is described to the continuous-time systems and numerical results on two test networks are shown.}
}

@article{SHI2021279,
title = {Constructing a fundamental diagram for traffic flow with automated vehicles: Methodology and demonstration},
journal = {Transportation Research Part B: Methodological},
volume = {150},
pages = {279-292},
year = {2021},
issn = {0191-2615},
doi = {https://doi.org/10.1016/j.trb.2021.06.011},
author = {Xiaowei Shi and Xiaopeng Li},
keywords = {Automated vehicle, Adaptive cruise control, Fundamental diagram, Mixed traffic},
abstract = {Increasingly, commercial vehicles are equipped with automated vehicle (AV) features such as adaptive cruise control systems. The AV feature can automatically control the headway between the current vehicle and the preceding vehicle in an adaptive manner. The automatic control may lead to significantly different car- following motions compared with those of human-driven vehicles, which challenges the applicability of classic traffic flow theory to emerging road traffic with AVs. To investigate the impacts of commercial AVs on traffic flow, this paper proposes a general methodology that combines both empirical experiments and theoretical models to construct a fundamental diagram (FD), i.e., the foundation for traffic flow theory for AV traffic. To demonstrate the empirical experiment settings, we collected high-resolution trajectory data with multiple commercial AVs following one another in a platoon with different headway settings. The field experiment results revealed that the traditional triangular FD structure remains applicable to describe the traffic flow characteristics of AV traffic. Further, by comparing the FDs between AVs and human-driven vehicles, it was found that although the shortest AV headway setting can significantly improve road capacity, other headway settings may decrease road capacity compared with existing human-driven-vehicle traffic. It was also found that headway settings may affect the stability of traffic flow, which has been revealed by theoretical studies but was first verified by empirical AV data. With these findings, mixed traffic flow FDs were derived by incorporating different headway settings and AV penetration rates. The method proposed in this paper, including experiment designs, data collection approaches, traffic flow characteristics analyses, and mixed traffic flow FD construction approaches, can serve as a methodological foundation for studying future mixed traffic flow features with uncertain and evolving AV technologies.}}

@article{ZHOU2020102614,
title = {Modeling the fundamental diagram of mixed human-driven and connected automated vehicles},
journal = {Transportation Research Part C: Emerging Technologies},
volume = {115},
pages = {102614},
year = {2020},
issn = {0968-090X},
doi = {https://doi.org/10.1016/j.trc.2020.102614},
author = {Jiazu Zhou and Feng Zhu},
keywords = {Stochastic fundamental diagram, Connected automated vehicles, Headway distribution, Mixed traffic},
abstract = {Traffic flow fundamental diagram (FD) is viewed as the basis of traffic flow theory and has various applications in transportation. However, the fundamental diagram of mixed human-driven vehicles (HVs) and connected automated vehicles (CAVs) traffic has not been well-studied. This paper derives the FD for mixed HV and CAV traffic considering the stochastic headway. Firstly, the deterministic FD of pure CAV traffic and pure HV traffic are built. Then the FD of mixed HV and CAV traffic is developed with CAV penetration and platooning intensity taken into consideration. A Gaussian mixture model (GMM) is applied to model the stochastic headway, based on which the stochastic FD is derived. Impact of CAV penetration and platooning intensity on the stochasticity of FD is studied. Results from theoretical analysis and case study show that increasing CAV penetration can reduce the scattering of FD, while higher platooning intensity may result in more scattering of FD.}
}

@article{GONG2016314,
title = {Constrained optimization and distributed computation based car following control of a connected and autonomous vehicle platoon},
journal = {Transportation Research Part B: Methodological},
volume = {94},
pages = {314-334},
year = {2016},
issn = {0191-2615},
doi = {https://doi.org/10.1016/j.trb.2016.09.016},
author = {Siyuan Gong and Jinglai Shen and Lili Du},
keywords = {Connected and autonomous vehicles, Car-following control, Optimization, Distributed algorithm},
abstract = {Motivated by the advancement in connected and autonomous vehicle technologies, this paper develops a novel car-following control scheme for a platoon of connected and autonomous vehicles on a straight highway. The platoon is modeled as an interconnected multi-agent dynamical system subject to physical and safety constraints, and it uses the global information structure such that each vehicle shares information with all the other vehicles. A constrained optimization based control scheme is proposed to ensure an entire platoon’s transient traffic smoothness and asymptotic dynamic performance. By exploiting the solution properties of the underlying optimization problem and using primal-dual formulation, this paper develops dual based distributed algorithms to compute optimal solutions with proven convergence. Furthermore, the asymptotic stability of the unconstrained linear closed-loop system is established. These stability analysis results provide a principle to select penalty weights in the underlying optimization problem to achieve the desired closed-loop performance for both the transient and the asymptotic dynamics. Extensive numerical simulations are conducted to validate the efficiency of the proposed algorithms.}
}

@article{YU2021103101,
title = {Automated vehicle-involved traffic flow studies: A survey of assumptions, models, speculations, and perspectives},
journal = {Transportation Research Part C: Emerging Technologies},
volume = {127},
pages = {103101},
year = {2021},
issn = {0968-090X},
doi = {https://doi.org/10.1016/j.trc.2021.103101},
author = {Haiyang Yu and Rui Jiang and Zhengbing He and Zuduo Zheng and Li Li and Runkun Liu and Xiqun Chen},
keywords = {Traffic flow, Automated vehicles, Road capacity, Traffic dynamics},
abstract = {Automated vehicles (AVs) are widely considered to play a crucial role in future transportation systems because of their speculated capabilities in improving road safety, saving energy consumption, reducing vehicle emissions, increasing road capacity, and stabilizing traffic. To materialize these widely expected potentials of AVs, a sound understanding of AVs’ impacts on traffic flow is essential. Not surprisingly, despite the relatively short history of AVs, there have been numerous studies in the literature focusing on understanding and modeling various aspects of AV-involved traffic flow and significant progresses have already been made. To understand the recent development and ultimately inspire new research ideas on this important topic, this survey systematically and comprehensively reviews the existing AV-involved traffic flow models with different levels of details, and examines the relationship among the design of AV-based driving strategies, the management of transportation systems, and the resulting traffic dynamics. The pros and cons of the existing models and approaches are critically discussed, and future research directions are also provided.}
}

@inproceedings{hatipoglu1996longitudinal,
  title={Longitudinal headway control of autonomous vehicles},
  author={Hatipoglu, Cem and Ozguner, U and Sommerville, Martin},
  booktitle={Proceeding of the 1996 IEEE International Conference on Control Applications IEEE International Conference on Control Applications held together with IEEE International Symposium on Intelligent Contro},
  pages={721--726},
  year={1996},
  organization={IEEE}
}

@article{LTM,
author = {Yperman, Isaak},
year = {2007},
month = {01},
pages = {},
title = {The Link Transmission Model for dynamic network loading}
}

@article{DAGANZO199579,
title = {The cell transmission model, part II: Network traffic},
journal = {Transportation Research Part B: Methodological},
volume = {29},
number = {2},
pages = {79-93},
year = {1995},
issn = {0191-2615},
doi = {https://doi.org/10.1016/0191-2615(94)00022-R},
author = {Carlos F. Daganzo},
abstract = {This article shows how the evolution of multi-commodity traffic flows over complex networks can be predicted over time, based on a simple macroscopic computer representation of traffic flow that is consistent with the kinematic wave theory under all traffic conditions. The method does not use ad hoc procedures to treat special situations. After a brief review of the basic model for one link, the article describes how three-legged junctions can be modeled. It then introduces a numerical procedure for networks, assuming that a time-varying origin-destination (O-D) table is given and that the proportion of turns at every junction is known. These assumptions are reasonable for numerical analysis of disaster evacuation plans. The results are then extended to the case where, instead of the turning proportions, the best routes to each destination from every junction are known at all times. For technical reasons explained in the text, the procedure is more complicated in this case, requiring more computer memory and more time for execution. The effort is estimated to be about an order of magnitude greater than for the static traffic assignment problem on a network of the same size. The procedure is ideally suited for parallel computing. It is hoped that the results in the article will lead to more realistic models of freeway flow, disaster evacuations and dynamic traffic assignment for the evening commute.}
}

@article{SHEN20141,
title = {System optimal dynamic traffic assignment: Properties and solution procedures in the case of a many-to-one network},
journal = {Transportation Research Part B: Methodological},
volume = {65},
pages = {1-17},
year = {2014},
issn = {0191-2615},
doi = {https://doi.org/10.1016/j.trb.2014.02.002},
author = {Wei Shen and H.M. Zhang},
keywords = {System optimal dynamic traffic assignment, Many-to-one network, Traffic-holding, Traffic flow models},
abstract = {Thanks to its high dimensionality and a usually non-convex constraint set, system optimal dynamic traffic assignment remains one of the most challenging problems in transportation research. This paper identifies two fundamental properties of the problem and uses them to design an efficient solution procedure. We first show that the non-convexity of the problem can be circumvented by first solving a relaxed problem and then applying a traffic holding elimination procedure to obtain the solution(s) of the original problem. To efficiently solve the relaxed problem, we explore the relationship between the relaxed problems based on different traffic flow models (PQ, SQ, CTM) and a minimal cost flow (MCF) problem for a special space-expansion network. It is shown that all the four problem formulations produce the same minimal system cost and share one common solution which does not involve inside queues in the network. Efficient solution algorithms such as the network simplex method can be applied to solve the MCF problem and identify such an optimal traffic pattern. Numerical examples are also presented to demonstrate the efficiency of the proposed solution procedure.}
}

@article{CHEN2016143,
title = {Optimal deployment of autonomous vehicle lanes with endogenous market penetration},
journal = {Transportation Research Part C: Emerging Technologies},
volume = {72},
pages = {143-156},
year = {2016},
issn = {0968-090X},
doi = {https://doi.org/10.1016/j.trc.2016.09.013},
author = {Zhibin Chen and Fang He and Lihui Zhang and Yafeng Yin},
keywords = {Autonomous vehicle, Autonomous-vehicle lane, Market penetration, Deployment plan},
abstract = {This paper develops a mathematical approach to optimize a time-dependent deployment plan of autonomous vehicle (AV) lanes on a transportation network with heterogeneous traffic stream consisting of both conventional vehicles (CVs) and AVs, so as to minimize the social cost and promote the adoption of AVs. Specifically, AV lanes are exclusive lanes that can only be utilized by AVs, and the deployment plan specifies when, where, and how many AV lanes to be deployed. We first present a multi-class network equilibrium model to describe the flow distributions of both CVs and AVs, given the presence of AV lanes in the network. Considering that the net benefit (e.g., reduced travel cost) derived from the deployment of AV lanes will further promote the AV adoption, we proceed to apply a diffusion model to forecast the evolution of AV market penetration. With the equilibrium model and diffusion model, a time-dependent deployment model is then formulated, which can be solved by an efficient solution algorithm. Lastly, numerical examples based on the south Florida network are presented to demonstrate the proposed models.}
}

@article{Bagloee,
author = {Bagloee, Saeed and Sarvi, Majid and Patriksson, Michael and Rajabifard, Abbas},
year = {2017},
month = {04},
pages = {},
title = {A Mixed User-Equilibrium and System-Optimal Traffic Flow for Connected Vehicles Stated as a Complementarity Problem: Mixed user-equilibrium and system-optimal traffic flow},
volume = {32},
journal = {Computer-Aided Civil and Infrastructure Engineering},
doi = {10.1111/mice.12261}
}

@article{LEVIN2016103,
title = {A multiclass cell transmission model for shared human and autonomous vehicle roads},
journal = {Transportation Research Part C: Emerging Technologies},
volume = {62},
pages = {103-116},
year = {2016},
issn = {0968-090X},
doi = {https://doi.org/10.1016/j.trc.2015.10.005},
author = {Michael W. Levin and Stephen D. Boyles},
keywords = {Autonomous vehicles, Dynamic traffic assignment, Cell transmission model, Multiclass, Shared road},
abstract = {Autonomous vehicles have the potential to improve link and intersection traffic behavior. Computer reaction times may admit reduced following headways and increase capacity and backwards wave speed. The degree of these improvements will depend on the proportion of autonomous vehicles in the network. To model arbitrary shared road scenarios, we develop a multiclass cell transmission model that admits variations in capacity and backwards wave speed in response to class proportions within each cell. The multiclass cell transmission model is shown to be consistent with the hydrodynamic theory. This paper then develops a car following model incorporating driver reaction time to predict capacity and backwards wave speed for multiclass scenarios. For intersection modeling, we adapt the legacy early method for intelligent traffic management (Bento et al., 2013) to general simulation-based dynamic traffic assignment models. Empirical results on a city network show that intersection controls are a major bottleneck in the model, and that the legacy early method improves over traffic signals when the autonomous vehicle proportion is sufficiently high.}
}

@article{CHEN201744,
title = {Optimal design of autonomous vehicle zones in transportation networks},
journal = {Transportation Research Part B: Methodological},
volume = {99},
pages = {44-61},
year = {2017},
issn = {0191-2615},
doi = {https://doi.org/10.1016/j.trb.2016.12.021},
author = {Zhibin Chen and Fang He and Yafeng Yin and Yuchuan Du},
keywords = {Autonomous vehicle, Autonomous-vehicle zone, Mixed routing equilibrium, Optimal deployment},
abstract = {This paper advocates the need for infrastructure planning to adapt to and further promote the deployment of autonomous vehicle (AV) technology. It is envisioned that in the future government agencies will dedicate certain areas of road networks to AVs only to facilitate the formulation of vehicle platoons to improve throughput and hopefully improve the performance of the whole network. This paper aims to present a mathematical framework for the optimal design of AV zones in a general network. With the presence of AV zones, AVs may apply different routing principles outside of and within the AV zones. A novel network equilibrium model (we refer to it as the “mixed routing equilibrium model”) is thus firstly proposed to capture such mixed-routing behaviors. We then proceed to formulate a mixed-integer bi-level programming model to optimize the deployment plan of AV zones. Numerical examples are presented to demonstrate the performance of the proposed models.}
}

@article{ZHANG201875,
title = {Mitigating the impact of selfish routing: An optimal-ratio control scheme (ORCS) inspired by autonomous driving},
journal = {Transportation Research Part C: Emerging Technologies},
volume = {87},
pages = {75-90},
year = {2018},
issn = {0968-090X},
doi = {https://doi.org/10.1016/j.trc.2017.12.011},
author = {Kenan Zhang and Yu (Marco) Nie},
keywords = {Autonomous vehicles, Mathematical program with equilibrium constraints, Selfish routing, Optimal control},
abstract = {Fully controllable autonomous vehicles offer unprecedented opportunities to address the inefficiency associated with selfish routing, a fundamental issue in transportation network modeling. This study proposes a route control scheme that aims to strike a balance between gains in the system efficiency and the control intensity, defined as the demand flow under control for each origin-destination (OD) pair. The proposed model has a bi-level structure and is formulated as a mathematical program with equilibrium constraints (MPEC). A specialized algorithm based on sensitivity analysis and the alternative direction method of multiplier (ADMM) is developed to find a local optimum for the MPEC. Results of numerical experiments show that (1) in all tested cases, controlling a minority of vehicles (less than 10% in some case) could bring the system very close to the system optimum; (2) some O-D pairs enjoy a higher control priority than the others, mostly due to the underlying network topology rather than the demand magnitude; (3) the proposed algorithm is computationally efficient; (4) starting from different initial solutions, the algorithm produces very similar local optimal solutions.}
}

@article{WANG2019139,
title = {Multiclass traffic assignment model for mixed traffic flow of human-driven vehicles and connected and autonomous vehicles},
journal = {Transportation Research Part B: Methodological},
volume = {126},
pages = {139-168},
year = {2019},
issn = {0191-2615},
doi = {https://doi.org/10.1016/j.trb.2019.05.022},
author = {Jian Wang and Srinivas Peeta and Xiaozheng He},
keywords = {Connected and autonomous vehicle, Mixed traffic equilibrium, Multiclass traffic assignment model, Cross-nested logit model, Sensitivity analysis},
abstract = {Compared to existing human-driven vehicles (HDVs), connected and autonomous vehicles (CAVs) offer users the potential for reduced value of time, enhanced quality of travel experience, and seamless situational awareness and connectivity. Hence, CAV users can differ in their route choice behavior compared to HDV users, leading to mixed traffic flows that can significantly deviate from the single-class HDV traffic pattern. However, due to the lack of quantitative models, there is limited knowledge on the evolution of mixed traffic flows in a traffic network. To partly bridge this gap, this study proposes a multiclass traffic assignment model, where HDV users and CAV users follow different route choice principles, characterized by the cross-nested logit (CNL) model and user equilibrium (UE) model, respectively. The CNL model captures HDV users’ uncertainty associated with limited knowledge of traffic conditions while overcoming the route overlap issue of logit-based stochastic user equilibrium. The UE model characterizes the CAV's capability for acquiring accurate information on traffic conditions. In addition, the multiclass model can capture the characteristics of mixed traffic flow such as the difference in value of time between HDVs and CAVs and the asymmetry in their driving interactions, thereby enhancing behavioral realism in the modeling. The study develops a new solution algorithm labeled RSRS-MSRA, in which a route-swapping based strategy is embedded with a self-regulated step size choice technique, to solve the proposed model efficiently. Sensitivity analysis of the proposed model is performed to gain insights into the effects of perturbations on the mixed traffic equilibrium, which facilitates the estimation of equilibrium traffic flow and identification of critical elements under expected or unexpected events. The study results can assist transportation decision-makers to design effective planning and operational strategies to leverage the advantages of CAVs and manage traffic congestion under mixed traffic flows.}
}

@article{WANG2020227,
title = {A mixed behaviour equilibrium model with mode choice and its application to the endogenous demand of automated vehicles},
journal = {Journal of Management Science and Engineering},
volume = {5},
number = {4},
pages = {227-248},
year = {2020},
note = {Special Issue on Traffic and Transportation in the Era of Big Data and Shared Economy},
issn = {2096-2320},
doi = {https://doi.org/10.1016/j.jmse.2020.05.003},
author = {Guangchao Wang and Hang Qi and Huiling Xu and Seungkyu Ryu},
keywords = {Mixed behaviour equilibrium, Automated vehicle, SUE-CN-SO, Combined travel choices, Variational inequality},
abstract = {This study develops a mixed behavioural equilibrium model with explicit consideration of mode choice (MBE-MC) in a transportation system where fully automated vehicles (AV) coexist with conventional human-driven vehicles (HV). For the mode choice, travellers select among three options, following a logit modal split: driving their private HV, or taking an AV mobility service provided by either a firm or the government. For the route choice, the HV drivers follow the random utility maximisation principle while central agents route the AV passengers following the Cournot Nash (firm agent) or Social Optimal (government agent) principles. We consider two types of travel costs (i.e. travel time and monetary travel cost) to characterise the new features (e.g. expanded link capacity and reduced value of time) of the mixed AV–HV transportation system. We model the MBE-MC problem in a combined mode–route choice framework and formulate it as a route-based variational inequality (VI) problem. We show the equivalence between the VI formulation and the MBE-MC problem, and the existence of a solution to the MBE-MC problem. Then, we modify a partial linearisation algorithm for solving the proposed model. Numerical results validate the equilibrium conditions and show the efficacy of the new model in capturing the features of the mixed AV–HV transportation system. The impact patterns of different parameters on (1) the network performance in terms of AV share and system cost and (2) on the solution efficiency are analysed.}
}

@ARTICLE{9408374,
  author={Seo, Toru and Asakura, Yasuo},
  journal={IEEE Transactions on Intelligent Transportation Systems}, 
  title={Multi-Objective Linear Optimization Problem for Strategic Planning of Shared Autonomous Vehicle Operation and Infrastructure Design}, 
  year={2022},
  volume={23},
  number={4},
  pages={3816-3828},
  doi={10.1109/TITS.2021.3071512}}

@article{NGODUY202156,
title = {Multiclass dynamic system optimum solution for mixed traffic of human-driven and automated vehicles considering physical queues},
journal = {Transportation Research Part B: Methodological},
volume = {145},
pages = {56-79},
year = {2021},
issn = {0191-2615},
doi = {https://doi.org/10.1016/j.trb.2020.12.008},
author = {Dong Ngoduy and N.H. Hoang and H.L. Vu and D. Watling},
keywords = {Multi-class two regime transmission model, Dynamic system optimum, Physical queues, Mixed traffic networks, Automated vehicles},
abstract = {Dynamic traffic assignment (DTA) is an important method in the long term transportation planning and management processes. However, in most existing system optimum dynamic traffic assignment (SO-DTA), no side constraints are used to describe the dynamic link capacities in a network which is shared by multiple vehicle types. Our motivation is based on the possibility for dynamic system optimum (DSO) to have multiple solutions, which differ in where queues are formed and dissipated in the network. To this end, this paper proposes a novel DSO formulation for the multi-class DTA problem containing both human driven and automated vehicles in single origin-destination networks. The proposed method uses the concept of link based approach to develop a multi-class DTA model that equally distributes the total physical queues over the links while considering explicitly the variations in capacity and backward wave speeds due to class proportions. In the model, the DSO is formulated as an optimization problem considering linear vehicle composition constraints representing the dynamics of the link capacities. Numerical examples are set up to provide some insights into the effects of automated vehicles on the queue distribution as well as the total system travel times.}
}

@article{wadud2016help,
  title={Help or hindrance? The travel, energy and carbon impacts of highly automated vehicles},
  author={Wadud, Zia and MacKenzie, Don and Leiby, Paul},
  journal={Transportation Research Part A: Policy and Practice},
  volume={86},
  pages={1--18},
  year={2016},
  publisher={Elsevier}
}

@article{li2017vehicle,
  title={Vehicle headway modeling and its inferences in macroscopic/microscopic traffic flow theory: A survey},
  author={Li, Li and Chen, Xiqun Michael},
  journal={Transportation Research Part C: Emerging Technologies},
  volume={76},
  pages={170--188},
  year={2017},
  publisher={Elsevier}
}

@article{zhang2020path,
  title={Path-based system optimal dynamic traffic assignment: A subgradient approach},
  author={Zhang, Pinchao and Qian, Sean},
  journal={Transportation Research Part B: Methodological},
  volume={134},
  pages={41--63},
  year={2020},
  publisher={Elsevier}
}

@article{qian2012system,
  title={System-optimal dynamic traffic assignment with and without queue spillback: Its path-based formulation and solution via approximate path marginal cost},
  author={Qian, Zhen Sean and Shen, Wei and Zhang, HM},
  journal={Transportation research part B: methodological},
  volume={46},
  number={7},
  pages={874--893},
  year={2012},
  publisher={Elsevier}
}

@article{ma2017emission,
  title={Emission modeling and pricing on single-destination dynamic traffic networks},
  author={Ma, Rui and Ban, Xuegang Jeff and Szeto, WY},
  journal={Transportation Research Part B: Methodological},
  volume={100},
  pages={255--283},
  year={2017},
  publisher={Elsevier}
}

@article{long2018link,
  title={Link-based system optimum dynamic traffic assignment problems with environmental objectives},
  author={Long, Jiancheng and Chen, Jiaxu and Szeto, WY and Shi, Qin},
  journal={Transportation Research Part D: Transport and Environment},
  volume={60},
  pages={56--75},
  year={2018},
  publisher={Elsevier}
}

@article{tan2021emission,
  title={Emission exposure optimum for a single-destination dynamic traffic network},
  author={Tan, Yu and Ma, Rui and Sun, Zhanbo and Zhang, Peitong},
  journal={Transportation Research Part D: Transport and Environment},
  volume={94},
  pages={102817},
  year={2021},
  publisher={Elsevier}
}

@article{lu2016eco,
  title={Eco-system optimal time-dependent flow assignment in a congested network},
  author={Lu, Chung-Cheng and Liu, Jiangtao and Qu, Yunchao and Peeta, Srinivas and Rouphail, Nagui M and Zhou, Xuesong},
  journal={Transportation Research Part B: Methodological},
  volume={94},
  pages={217--239},
  year={2016},
  publisher={Elsevier}
}

@article{levin2017congestion,
  title={Congestion-aware system optimal route choice for shared autonomous vehicles},
  author={Levin, Michael W},
  journal={Transportation Research Part C: Emerging Technologies},
  volume={82},
  pages={229--247},
  year={2017},
  publisher={Elsevier}
}

@article{liu2020integrated,
  title={Integrated vehicle assignment and routing for system-optimal shared mobility planning with endogenous road congestion},
  author={Liu, Jiangtao and Mirchandani, Pitu and Zhou, Xuesong},
  journal={Transportation Research Part C: Emerging Technologies},
  volume={117},
  pages={102675},
  year={2020},
  publisher={Elsevier}
}

@article{samaranayake2018discrete,
  title={Discrete-time system optimal dynamic traffic assignment (SO-DTA) with partial control for physical queuing networks},
  author={Samaranayake, Samitha and Krichene, Walid and Reilly, Jack and Monache, Maria Laura Delle and Goatin, Paola and Bayen, Alexandre},
  journal={Transportation Science},
  volume={52},
  number={4},
  pages={982--1001},
  year={2018},
  publisher={INFORMS}
}

@article{levin2019linear,
  title={Linear program for system optimal parking reservation assignment},
  author={Levin, Michael W},
  journal={Journal of Transportation Engineering, Part A: Systems},
  volume={145},
  number={12},
  pages={04019049},
  year={2019},
  publisher={American Society of Civil Engineers}
}

@article{qian2014optimal,
  title={Optimal dynamic parking pricing for morning commute considering expected cruising time},
  author={Qian, Zhen Sean and Rajagopal, Ram},
  journal={Transportation Research Part C: Emerging Technologies},
  volume={48},
  pages={468--490},
  year={2014},
  publisher={Elsevier}
}

@article{han2016robust,
  title={A robust optimization approach for dynamic traffic signal control with emission considerations},
  author={Han, Ke and Liu, Hongcheng and Gayah, Vikash V and Friesz, Terry L and Yao, Tao},
  journal={Transportation Research Part C: Emerging Technologies},
  volume={70},
  pages={3--26},
  year={2016},
  publisher={Elsevier}
}

@article{lo2001cell,
  title={A cell-based traffic control formulation: strategies and benefits of dynamic timing plans},
  author={Lo, Hong K},
  journal={Transportation Science},
  volume={35},
  number={2},
  pages={148--164},
  year={2001},
  publisher={INFORMS}
}

@article{szeto2006dynamic,
  title={Dynamic traffic assignment: properties and extensions},
  author={Szeto, WY and Lo, Hong K},
  journal={Transportmetrica},
  volume={2},
  number={1},
  pages={31--52},
  year={2006},
  publisher={Taylor \& Francis}
}

@article{wardrop1952road,
  title={Road paper. some theoretical aspects of road traffic research.},
  author={Wardrop, John Glen},
  journal={Proceedings of the institution of civil engineers},
  volume={1},
  number={3},
  pages={325--362},
  year={1952},
  publisher={Thomas Telford-ICE Virtual Library}
}

@article{waller2001stochastic,
  title={Stochastic dynamic network design problem},
  author={Waller, S Travis and Ziliaskopoulos, Athanasios K},
  journal={Transportation Research Record},
  volume={1771},
  number={1},
  pages={106--113},
  year={2001},
  publisher={SAGE Publications Sage CA: Los Angeles, CA}
}

@article{waller2006linear,
  title={A linear model for the continuous network design problem},
  author={Waller, S Travis and Mouskos, Kyriacos C and Kamaryiannis, Dimitrios and Ziliaskopoulos, Athanasios K},
  journal={Computer-Aided Civil and Infrastructure Engineering},
  volume={21},
  number={5},
  pages={334--345},
  year={2006},
  publisher={Wiley Online Library}
}

@article{chiu2007modeling,
  title={Modeling no-notice mass evacuation using a dynamic traffic flow optimization model},
  author={Chiu, Yi-Chang and Zheng, Hong and Villalobos, Jorge and Gautam, Bikash},
  journal={Iie Transactions},
  volume={39},
  number={1},
  pages={83--94},
  year={2007},
  publisher={Taylor \& Francis}
}

@article{liu2006cell,
  title={Cell-based network optimization model for staged evacuation planning under emergencies},
  author={Liu, Ying and Lai, Xiaorong and Chang, Gang-Len},
  journal={Transportation Research Record},
  volume={1964},
  number={1},
  pages={127--135},
  year={2006},
  publisher={SAGE Publications Sage CA: Los Angeles, CA}
}

@article{osorio2011dynamic,
  title={Dynamic network loading: a stochastic differentiable model that derives link state distributions},
  author={Osorio, Carolina and Fl{\"o}tter{\"o}d, Gunnar and Bierlaire, Michel},
  journal={Procedia-Social and Behavioral Sciences},
  volume={17},
  pages={364--381},
  year={2011},
  publisher={Elsevier}
}

@article{wang2018dynamic,
  title={Dynamic traffic assignment: A review of the methodological advances for environmentally sustainable road transportation applications},
  author={Wang, Yi and Szeto, Wai Y and Han, Ke and Friesz, Terry L},
  journal={Transportation Research Part B: Methodological},
  volume={111},
  pages={370--394},
  year={2018},
  publisher={Elsevier}
}

@article{van2016user,
  title={From user equilibrium to system optimum: a literature review on the role of travel information, bounded rationality and non-selfish behaviour at the network and individual levels},
  author={van Essen, Mariska and Thomas, Tom and van Berkum, Eric and Chorus, Caspar},
  journal={Transport reviews},
  volume={36},
  number={4},
  pages={527--548},
  year={2016},
  publisher={Taylor \& Francis}
}

@article{chow2009properties,
  title={Properties of system optimal traffic assignment with departure time choice and its solution method},
  author={Chow, Andy HF},
  journal={Transportation Research Part B: Methodological},
  volume={43},
  number={3},
  pages={325--344},
  year={2009},
  publisher={Elsevier}
}

@article{long2019link,
  title={Link-based system optimum dynamic traffic assignment problems in general networks},
  author={Long, Jiancheng and Szeto, Wai Yuen},
  journal={Operations Research},
  volume={67},
  number={1},
  pages={167--182},
  year={2019},
  publisher={INFORMS}
}

@article{ngoduy2016optimal,
  title={Optimal queue placement in dynamic system optimum solutions for single origin-destination traffic networks},
  author={Ngoduy, Dong and Hoang, NH and Vu, Hai L and Watling, D},
  journal={Transportation Research Part B: Methodological},
  volume={92},
  pages={148--169},
  year={2016},
  publisher={Elsevier}
}

@article{gawron1998iterative,
  title={An iterative algorithm to determine the dynamic user equilibrium in a traffic simulation model},
  author={Gawron, Christian},
  journal={International Journal of Modern Physics C},
  volume={9},
  number={03},
  pages={393--407},
  year={1998},
  publisher={World Scientific}
}

@inproceedings{penrose1955generalized,
  title={A generalized inverse for matrices},
  author={Penrose, Roger},
  booktitle={Mathematical proceedings of the Cambridge philosophical society},
  volume={51},
  number={3},
  pages={406--413},
  year={1955},
  organization={Cambridge University Press}
}

@article{nguyen2021system,
  title={A system optimal speed advisory framework for a network of connected and autonomous vehicles},
  author={Nguyen, Cuong HP and Hoang, Nam H and Lee, Seunghyeon and Vu, Hai L},
  journal={IEEE Transactions on Intelligent Transportation Systems},
  volume={23},
  number={6},
  pages={5727--5739},
  year={2021},
  publisher={IEEE}
}

@article{battifarano2023impact,
  title={The Impact of Optimized Fleets in Transportation Networks},
  author={Battifarano, Matthew and Qian, Sean},
  journal={Transportation Science},
  year={2023},
  publisher={INFORMS}
}

@article{xiao2023adaptive,
  title={Adaptive Headway Control Algorithm for Mixed-Traffic Stabilization and Optimization with Automated Cars and Trucks},
  author={Xiao, Xiao and Zhang, Yunlong and Wang, Xiubin B and Guo, Xiaoyu},
  journal={Transportation Research Record},
  pages={03611981231156587},
  year={2023},
  publisher={SAGE Publications Sage CA: Los Angeles, CA}
}

@article{elmorshedy2023freeway,
  title={Freeway Congestion Management With Reinforcement Learning Headway Control of Connected and Autonomous Vehicles},
  author={Elmorshedy, Lina and Smirnov, Ilia and Abdulhai, Baher},
  journal={Transportation Research Record},
  pages={03611981231152459},
  year={2023},
  publisher={SAGE Publications Sage CA: Los Angeles, CA}
}

@article{li2017dynamical,
  title={Dynamical modeling and distributed control of connected and automated vehicles: Challenges and opportunities},
  author={Li, Shengbo Eben and Zheng, Yang and Li, Keqiang and Wu, Yujia and Hedrick, J Karl and Gao, Feng and Zhang, Hongwei},
  journal={IEEE Intelligent Transportation Systems Magazine},
  volume={9},
  number={3},
  pages={46--58},
  year={2017},
  publisher={IEEE}
}

@article{chiu1977vehicle,
  title={Vehicle-follower control with variable-gains for short headway automated guideway transit systems},
  author={Chiu, Harry Y and Stupp Jr, GB and Brown Jr, SJ},
  year={1977}
}

@article{shladover1991automated,
  title={Automated vehicle control developments in the PATH program},
  author={Shladover, Steven E and Desoer, Charles A and Hedrick, J Karl and Tomizuka, Masayoshi and Walrand, Jean and Zhang, W-B and McMahon, Donn H and Peng, Huei and Sheikholeslam, Shahab and McKeown, Nick},
  journal={IEEE Transactions on vehicular technology},
  volume={40},
  number={1},
  pages={114--130},
  year={1991},
  publisher={IEEE}
}

@article{xiao2011practical,
  title={Practical string stability of platoon of adaptive cruise control vehicles},
  author={Xiao, Lingyun and Gao, Feng},
  journal={IEEE Transactions on intelligent transportation systems},
  volume={12},
  number={4},
  pages={1184--1194},
  year={2011},
  publisher={IEEE}
}

@article{middleton2010string,
  title={String instability in classes of linear time invariant formation control with limited communication range},
  author={Middleton, Richard H and Braslavsky, Julio H},
  journal={IEEE Transactions on Automatic Control},
  volume={55},
  number={7},
  pages={1519--1530},
  year={2010},
  publisher={IEEE}
}

@article{chen2020connected,
  title={Connected automated vehicle platoon control with input saturation and variable time headway strategy},
  author={Chen, Jianzhong and Liang, Huan and Li, Jing and Lv, Zekai},
  journal={IEEE Transactions on Intelligent Transportation Systems},
  volume={22},
  number={8},
  pages={4929--4940},
  year={2020},
  publisher={IEEE}
}

@article{fenton1979headway,
  title={A headway safety policy for automated highway operations},
  author={Fenton, Robert E},
  journal={IEEE Transactions on Vehicular Technology},
  volume={28},
  number={1},
  pages={22--28},
  year={1979},
  publisher={IEEE}
}

@inproceedings{ayres2001preferred,
  title={Preferred time-headway of highway drivers},
  author={Ayres, TJ and Li, L and Schleuning, David and Young, D},
  booktitle={ITSC 2001. 2001 IEEE Intelligent Transportation Systems. Proceedings (Cat. No. 01TH8585)},
  pages={826--829},
  year={2001},
  organization={IEEE}
}

@article{vogel2003comparison,
  title={A comparison of headway and time to collision as safety indicators},
  author={Vogel, Katja},
  journal={Accident analysis \& prevention},
  volume={35},
  number={3},
  pages={427--433},
  year={2003},
  publisher={Elsevier}
}

@phdthesis{biswas2022drivers,
  title={How do drivers avoid crashes: the role of driving headway},
  author={Biswas, Raaj Kishore},
  year={2022},
  school={UNSW Sydney}
}

@article{chen2020path,
  title={Path controlling of automated vehicles for system optimum on transportation networks with heterogeneous traffic stream},
  author={Chen, Zhibin and Lin, Xi and Yin, Yafeng and Li, Meng},
  journal={Transportation Research Part C: Emerging Technologies},
  volume={110},
  pages={312--329},
  year={2020},
  publisher={Elsevier}
}

@inproceedings{de2015autonomous,
  title={Autonomous driving at intersections: combining theoretical analysis with practical considerations},
  author={de La Fortelle, Arnaud and Qian, Xiangjun},
  booktitle={Its world congress 2015},
  year={2015}
}

@article{lighthill1955kinematic,
  title={On kinematic waves II. A theory of traffic flow on long crowded roads},
  author={Lighthill, Michael James and Whitham, Gerald Beresford},
  journal={Proceedings of the royal society of london. series a. mathematical and physical sciences},
  volume={229},
  number={1178},
  pages={317--345},
  year={1955},
  publisher={The Royal Society London}
}

@article{richards1956shock,
  title={Shock waves on the highway},
  author={Richards, Paul I},
  journal={Operations research},
  volume={4},
  number={1},
  pages={42--51},
  year={1956},
  publisher={INFORMS}
}

@article{nie2005comparative,
  title={A comparative study of some macroscopic link models used in dynamic traffic assignment},
  author={Nie, Xiaojian and Zhang, H Michael},
  journal={Networks and Spatial Economics},
  volume={5},
  pages={89--115},
  year={2005},
  publisher={Springer}
}

@inproceedings{lebacque2005first,
  title={First-order macroscopic traffic flow models: Intersection modeling, network modeling},
  author={Lebacque, Jean-Patrick},
  booktitle={Transportation and Traffic Theory. Flow, Dynamics and Human Interaction. 16th International Symposium on Transportation and Traffic TheoryUniversity of Maryland, College Park},
  year={2005}
}

@article{jin2015continuous,
  title={Continuous formulations and analytical properties of the link transmission model},
  author={Jin, Wen-Long},
  journal={Transportation Research Part B: Methodological},
  volume={74},
  pages={88--103},
  year={2015},
  publisher={Elsevier}
}

@article{han2016continuous,
  title={Continuous-time link-based kinematic wave model: formulation, solution existence, and well-posedness},
  author={Han, Ke and Piccoli, Benedetto and Szeto, WY},
  journal={Transportmetrica B: Transport Dynamics},
  volume={4},
  number={3},
  pages={187--222},
  year={2016},
  publisher={Taylor \& Francis}
}

@article{ban2012continuous,
  title={Continuous-time point-queue models in dynamic network loading},
  author={Ban, Xuegang Jeff and Pang, Jong-Shi and Liu, Henry X and Ma, Rui},
  journal={Transportation Research Part B: Methodological},
  volume={46},
  number={3},
  pages={360--380},
  year={2012},
  publisher={Elsevier}
}

@article{ben2012dynamic,
  title={A dynamic traffic assignment model for highly congested urban networks},
  author={Ben-Akiva, Moshe E and Gao, Song and Wei, Zheng and Wen, Yang},
  journal={Transportation research part C: emerging technologies},
  volume={24},
  pages={62--82},
  year={2012},
  publisher={Elsevier}
}

@article{sumalee2011stochastic,
  title={Stochastic cell transmission model (SCTM): A stochastic dynamic traffic model for traffic state surveillance and assignment},
  author={Sumalee, Agachai and Zhong, RX and Pan, TL and Szeto, WY},
  journal={Transportation Research Part B: Methodological},
  volume={45},
  number={3},
  pages={507--533},
  year={2011},
  publisher={Elsevier}
}

@article{chiu2011dynamic,
  title={Dynamic traffic assignment: A primer (transportation research circular e-c153)},
  author={Chiu, Yi-Chang and Bottom, Jon and Mahut, Michael and Paz, Alexander and Balakrishna, Ramachandran and Waller, Steven and Hicks, Jim},
  year={2011},
  publisher={Transportation Research Board}
}

@article{bellei2005within,
  title={A within-day dynamic traffic assignment model for urban road networks},
  author={Bellei, Giuseppe and Gentile, Guido and Papola, Natale},
  journal={Transportation Research Part B: Methodological},
  volume={39},
  number={1},
  pages={1--29},
  year={2005},
  publisher={Elsevier}
}

@article{janson1991dynamic,
  title={Dynamic traffic assignment for urban road networks},
  author={Janson, Bruce N},
  journal={Transportation Research Part B: Methodological},
  volume={25},
  number={2-3},
  pages={143--161},
  year={1991},
  publisher={Elsevier}
}

@article{chan2021quasi,
  title={Quasi-dynamic traffic assignment using high performance computing},
  author={Chan, Cy and Kuncheria, Anu and Zhao, Bingyu and Cabannes, Theophile and Keimer, Alexander and Wang, Bin and Bayen, Alexandre and Macfarlane, Jane},
  journal={arXiv preprint arXiv:2104.12911},
  year={2021}
}

@article{guo2012autonomous,
  title={Autonomous platoon control allowing range-limited sensors},
  author={Guo, Ge and Yue, Wei},
  journal={IEEE Transactions on vehicular technology},
  volume={61},
  number={7},
  pages={2901--2912},
  year={2012},
  publisher={IEEE}
}

@article{zhu2018distributed,
  title={Distributed adaptive longitudinal control for uncertain third-order vehicle platoon in a networked environment},
  author={Zhu, Yang and Zhu, Feng},
  journal={IEEE Transactions on Vehicular Technology},
  volume={67},
  number={10},
  pages={9183--9197},
  year={2018},
  publisher={IEEE}
}

@article{becker2022driver,
  title={Driver-initiated take-overs during critical braking maneuvers in automated driving--the role of time headway, traction usage, and trust in automation},
  author={Becker, Sandra and Brandenburg, Stefan and Th{\"u}ring, Manfred},
  journal={Accident Analysis \& Prevention},
  volume={174},
  pages={106725},
  year={2022},
  publisher={Elsevier}
}

@article{jabari2016node,
  title={Node modeling for congested urban road networks},
  author={Jabari, Saif Eddin},
  journal={Transportation Research Part B: Methodological},
  volume={91},
  pages={229--249},
  year={2016},
  publisher={Elsevier}
}

@article{li2023optimal,
  title={Optimal intersection design and signal setting in a transportation network with mixed HVs and CAVs},
  author={Li, Tongfei and Cao, Yaning and Xu, Min and Sun, Huijun},
  journal={Transportation Research Part E: Logistics and Transportation Review},
  volume={175},
  pages={103173},
  year={2023},
  publisher={Elsevier}
}

@article{zhu2015linear,
  title={A linear programming formulation for autonomous intersection control within a dynamic traffic assignment and connected vehicle environment},
  author={Zhu, Feng and Ukkusuri, Satish V},
  journal={Transportation Research Part C: Emerging Technologies},
  volume={55},
  pages={363--378},
  year={2015},
  publisher={Elsevier}
}

@article{yu2018optimal,
  title={Optimal traffic signal control under dynamic user equilibrium and link constraints in a general network},
  author={Yu, Hao and Ma, Rui and Zhang, H Michael},
  journal={Transportation research part B: methodological},
  volume={110},
  pages={302--325},
  year={2018},
  publisher={Elsevier}
}

@article{ma2018estimating,
  title={Estimating multi-year 24/7 origin-destination demand using high-granular multi-source traffic data},
  author={Ma, Wei and Qian, Zhen Sean},
  journal={Transportation Research Part C: Emerging Technologies},
  volume={96},
  pages={96--121},
  year={2018},
  publisher={Elsevier}
}

@article{zhang2021short,
  title={Short-term origin-destination demand prediction in urban rail transit systems: A channel-wise attentive split-convolutional neural network method},
  author={Zhang, Jinlei and Che, Hongshu and Chen, Feng and Ma, Wei and He, Zhengbing},
  journal={Transportation Research Part C: Emerging Technologies},
  volume={124},
  pages={102928},
  year={2021},
  publisher={Elsevier}
}

@article{ke2021predicting,
  title={Predicting origin-destination ride-sourcing demand with a spatio-temporal encoder-decoder residual multi-graph convolutional network},
  author={Ke, Jintao and Qin, Xiaoran and Yang, Hai and Zheng, Zhengfei and Zhu, Zheng and Ye, Jieping},
  journal={Transportation Research Part C: Emerging Technologies},
  volume={122},
  pages={102858},
  year={2021},
  publisher={Elsevier}
}

@article{bhattacharjee2001modeling,
  title={Modeling the effects of traveler information on freeway origin--destination demand prediction},
  author={Bhattacharjee, Deb and Sinha, Kumares C and Krogmeier, James V},
  journal={Transportation Research Part C: Emerging Technologies},
  volume={9},
  number={6},
  pages={381--398},
  year={2001},
  publisher={Elsevier}
}

@article{xiong2020dynamic,
  title={Dynamic origin--destination matrix prediction with line graph neural networks and kalman filter},
  author={Xiong, Xi and Ozbay, Kaan and Jin, Li and Feng, Chen},
  journal={Transportation Research Record},
  volume={2674},
  number={8},
  pages={491--503},
  year={2020},
  publisher={SAGE Publications Sage CA: Los Angeles, CA}
}

\clearpage

\appendix

  \section{Notations}
  \label{app:notations}
  
  \begin{table}[H]
    \centering
    \resizebox{1.02\textwidth}{!}{
    \begin{tabular}{lll}
        \toprule
        Type      & Notation   & Description \\
        \midrule
        Sets      & \textit{V} & Set of nodes \\
                  & \textit{R} & Set of origins \\
                  & $\widetilde{R}$ & Set of dummy origins \\
                  & \textit{S} & Set of destinations\\
                  & $\widetilde{S}$ & Set of dummy destinations \\
                  & \textit{E} & Set of links \\
                  & $L_R$ & Set of origin connectors \\
                  & $L_S$ & Set of destination connectors \\
        Parameters & $\textit{v}_{i,j}^f$ & Free-flow speed of link $(i,j)$ \\
                  & $L_{i,j}$ & Length of link $(i,j)$ \\
                  & \textit{L} & Length of an AV \\
                  & $\bar{Q}_{i,j}^{\mathcal{U}}$ & Upstream queue capacity of link $(i,j)$ \\
                  & $\bar{Q}_{i,j}^{\mathcal{D}}$ & Downstream queue capacity of link $(i,j)$ \\
                  & $\bar{C}_{i,j}^{\mu}$ & Inflow capacity of link $(i,j)$ \\ 
                  & $\bar{C}_{i,j}^{\nu}$ & Outflow capacity of link $(i,j)$ \\
                  & $\textit{T}_\textit{1}$ & Time horizon of demand \\
                  & \textit{T} & Total time horizon \\
                  & $\textit{h}_{i,j;t}^{min}$ & Lower bound of headway in link $(i,j)$ at time $t$ \\
                  & $\textit{h}_{i,j;t}^{max}$ & Upper bound of headway in link $(i,j)$ at time $t$ \\
        Variables & $\textit{u}_{i,j}(t)$ & Inflow of the flow area of link $(i,j)$ at time $t$ \\
                  & $\textit{u}_{i,j}^{s^{\prime}}(t)$ & Inflow of the flow area of link $(i,j)$ with destination $s^{\prime}$ at time $t$ \\
                  & $\textit{v}_{i,j}(t)$ & Outflow of the buffer area of link $(i,j)$ at time $t$ \\
                  & $\textit{v}_{i,j}^{s^{\prime}}(t)$ & Outflow of the buffer area of link $(i,j)$ with destination $s^{\prime}$ at time $t$ \\
                  & $\textit{f}_{i,j}(t)$ & Inflow at the boundary of the flow area and  buffer area of link $(i,j)$ at time $t$ \\
                  & $\textit{f}_{i,j}^{~ s^{\prime}}(t)$ & Inflow at the boundary of the flow area and  buffer area of link $(i,j)$ with destination $s^{\prime}$ at time $t$ \\
                  & $\rho_{i,j}(t)$ & Density of the flow area of link $(i,j)$ at time $t$ \\
                  & $\rho_{i,j}^{s^{\prime}}(t)$ & Density of the flow area of link $(i,j)$ with destination $s^{\prime}$ at time $t$ \\
                  & $\textit{q}_{i,j}^{\mathcal{D}}(t)$ & Downstream queue of link $(i,j)$ at time $t$ \\
                  & $\textit{q}_{i,j}^{\mathcal{D},s^{\prime}}(t)$ & Downstream queue of link $(i,j)$ with destination $s^{\prime}$ at time $t$ \\
                  & $\textit{q}_{i,j}^{\mathcal{U}}(t)$ & Upstream queue of link $(i,j)$ at time $t$ \\
                  & $\textit{q}_{i,j}^{\mathcal{U},s^{\prime}}(t)$ & Upstream queue of link $(i,j)$ with destination $s^{\prime}$ at time $t$ \\
                  & $\textit{h}_{i,j}(t)$ & Headway of automated vehicles in link $(i,j)$ at time $t$ \\
                  & $\tau_{i,j}^{w}(t)$ & Wave travel time in congested states on link $(i,j)$ at time $t$ \\
                  & $\textit{d}_{r^{\prime},s^{\prime}}(t)$ & Travel demand of O-D pair  $(r^{\prime},s^{\prime})$ at time $t$ \\
                  & $\textit{w}_{r^{\prime},r}^{~s^{\prime}}(t)$ & Number of AVs waiting at origins in the origin connectors  $(r^{\prime},r)$ with destination $s^{\prime}$ at time $t$ \\
                  & $\textit{U}_{i,j}^{s^{\prime}}(t)$ & Cumulative inflow of the flow area of link $(i,j)$ with destination $s^{\prime}$ at time $t$ \\
                  & $\textit{F}_{i,j}^{s^{\prime}}(t)$ & Cumulative inflow at the boundary of the flow area and buffer area of link $(i,j)$ with destination $s^{\prime}$ at time $t$ \\
                  & $\textit{V}_{i,j}^{s^{\prime}}(t)$ & Cumulative outflow of the buffer area of link $(i,j)$ with destination $s^{\prime}$ at time $t$ \\
        \bottomrule
    \end{tabular}
    }
    \caption{Notation table.}
    \label{notation_con}
\end{table}

  \section{Proof of Proposition~\ref{prop:HDQ&DQ}}
  \label{app:generality}
Proposition \ref{prop:HDQ&DQ} indicates that the headway-dependent double queue (HDQ) model is a generalized form of the double queue (DQ) model.
The formulation of the downstream queue $\textit{q}_{i,j}^{d,s^{\prime}}(t)$ and upstream queue $\textit{q}_{i,j}^{u,s^{\prime}}(t)$ in DQ for $ (i,j)\in E,~s^{\prime}\in \widetilde{S}$ and $t\in[0,T]$ are shown as follows:
\begin{eqnarray}
\textit{q}_{i,j}^{d,s^{\prime}}(t) &=& \int_{0}^{\textit{t}-\tau_{i,j}^{f}} \textit{u}_{i,j}^{s^{\prime}}(\hat{t}) ~d\hat{t} - \int_{0}^{\textit{t}} \textit{v}_{i,j}^{s^{\prime}}(\hat{t}) ~d\hat{t} + \textit{q}_{i,j}^{d,s^{\prime}}(0) ~ = ~ \textit{U}_{i,j}^{s^{\prime}}(t-\tau_{i,j}^{f})-\textit{V}_{i,j}^{s^{\prime}}(t),
\label{eq:down_DQ} 
\nonumber\\
\textit{q}_{i,j}^{u,s^{\prime}}(t) &=& \int_{0}^{\textit{t}} \textit{u}_{i,j}^{s^{\prime}}(\hat{t}) ~d\hat{t} - \int_{0}^{\textit{t}-\tau_{i,j}^{w}} \textit{v}_{i,j}^{s^{\prime}}(\hat{t}) ~d\hat{t} + \textit{q}_{i,j}^{u,s^{\prime}}(0)  ~ = ~
\textit{U}_{i,j}^{s^{\prime}}(t) - \textit{V}_{i,j}^{s^{\prime}}(t-\tau_{i,j}^{w}),
\label{eq:up_DQ}
\nonumber
\end{eqnarray}
 where $\textit{q}_{i,j}^{d,s^{\prime}}(t)$ and $\textit{q}_{i,j}^{u,s^{\prime}}(t)$ represent the downstream and upstream queue in link $(i,j)$ with destination $s^{\prime}$ at time $t$ in DQ, respectively, and the two queues are interpreted as follows:

\begin{itemize}
    \item Downstream queue $\textit{q}_{i,j}^{d,s^{\prime}}(t)$ measures the number of vehicles with destination $s^{\prime}$ at the end of link $(i,j)$ at time $t$. DQ assumes the inflow $u_{i,j}^{s^{\prime}}$ would travel from the entrance to the exit of link $(i,j)$ at the free-flow speed, and the outflow of link $(i,j)$ at time $t$ is determined by the outflow capacity link $(i,j)$ and upstream queue in next link.
    \item Upstream queue $\textit{q}_{i,j}^{u,s^{\prime}}(t)$ measures the number of vehicles at the entrance of link $(i,j)$ at time $t$. DQ assumes the space of outflow $v_{i,j}^{s^{\prime}}$ would travel from the exit to the entrance of link $(i,j)$ at the wave travel speed in congested states. 
\end{itemize}

However, the assumptions of DQ model may have the following limitations:

\begin{itemize}
    \item Congestion may occur in the flow area of a link so that AVs may not travel from the entrance to the exit of a link at the free-flow speed. The influence of headway on the flow of a link is also not considered in DQ.
    \item Space may not propagate to the entrance of a link in time and retain in the downstream. Take an extreme example, if the front AV leaves the link but other AVs behind are still, the space left by the front AV would still retain in the downstream and it is not accurate to use the outflow to calculate the propagation of space in wave travel speed. Besides, the influence of headway on the wave travel speed on a link is also not considered in DQ. 
\end{itemize}

The HDQ addresses the above limitations by exploring the effects of headway. Equations \ref{eq:dq_con} and \ref{eq:uq_con} formulate the downstream and upstream queue in HDQ. The relationship between the HDQ and DQ is presented in Equation \ref{eq:down_DQ_HDQ} and \ref{eq:up_DQ_HDQ}.

  \begin{eqnarray}
\textit{q}_{i,j}^{d,s^{\prime}}(t) &=&  \textit{U}_{i,j}^{s^{\prime}}(t-\tau_{i,j}^{f})-\textit{V}_{i,j}^{s^{\prime}}(t) \nonumber \\
&=& \textit{U}_{i,j}^{s^{\prime}}(t-\tau_{i,j}^{f})-\textit{F}_{i,j}^{s^{\prime}}(t)+\textit{F}_{i,j}^{s^{\prime}}(t)
-\textit{V}_{i,j}(t) \nonumber \\
&=& \textit{q}_{i,j}^{\mathcal{D}}(t)+\textit{U}_{i,j}^{s^{\prime}}(t-\tau_{i,j}^{f})-\textit{F}_{i,j}^{s^{\prime}}(t) \label{eq:down_DQ_HDQ} \\
\textit{q}_{i,j}^{u,s^{\prime}}(t) &=& 
\textit{U}_{i,j}^{s^{\prime}}(t) - \textit{V}_{i,j}^{s^{\prime}}(t-\tau_{i,j}^{w} \nonumber \\
&=& \textit{U}_{i,j}^{s^{\prime}}(t) - \textit{F}_{i,j}^{s^{\prime}}(t-\tau_{i,j}^{w} + \textit{F}_{i,j}^{s^{\prime}}(t-\tau_{i,j}^{w}) - \textit{V}_{i,j}^{s^{\prime}}(t-\tau_{i,j}^{w}) \nonumber \\
&=& \textit{q}_{i,j}^{\mathcal{U},s^{\prime}}(t)+\textit{q}_{i,j}^{\mathcal{D},s^{\prime}}(t-\tau_{i,j}^{w})
\label{eq:up_DQ_HDQ}
\end{eqnarray}

In Equation \ref{eq:down_DQ_HDQ}, $\textit{U}_{i,j}^{s^{\prime}}(t-\tau_{i,j}^{f})-\textit{F}_{i,j}^{s^{\prime}}(t)$ represents the congestion in the flow area in HDQ. However, under the assumption of DQ, we have $\textit{u}_{i,j}^{s^{\prime}}(t-\tau_{i,j}^{f}) =\textit{f}_{i,j}^{s^{\prime}}(t)$, meaning that vehicles would travel from the entrance of link to the boundary of the flow area and buffer area at free-flow speed. Therefore, the downstream queue of HDQ is equivalent to the DQ as follows:

\begin{equation}
\textit{q}_{i,j}^{d,s^{\prime}}(t) = \textit{q}_{i,j}^{\mathcal{D},s^{\prime}}(t)+\textit{U}_{i,j}^{s^{\prime}}(t-\tau_{i,j}^{f})-\textit{F}_{i,j}^{s^{\prime}}(t) = \textit{q}_{i,j}^{\mathcal{D},s^{\prime}}(t).
    \nonumber
\end{equation}

In Equation \ref{eq:up_DQ_HDQ}, the upstream queue in DQ actually consists of the queue in both the flow area and buffer area in HDQ. The upstream queue in HDQ actually considers the flow entering the buffer area to fill the leaving space. If we ignore it, we would derive the upstream queue by outflow $v$ as DQ.

Different from DQ, the HDQ specifies the traffic state in a link and incorporates the effect of headway. To calculate the number of AVs at the entrance and end of a link, we just focus on the flow area and buffer area, respectively. Compared with the DQ, the HDQ provides a more precise way to calculate the downstream and upstream queue if we are clear about the state of the boundary of the flow area and buffer area compared with the DQ. If under the assumption of DQ, the effect of headway is ignored and HDQ would be reduced to DQ.

  \section{Proof of Proposition~\ref{prop:property}}
  \label{app:HDQ}
  This appendix proves the properties of the headway-dependent double queue (HDQ) model, and we mainly show the constraints on the flow rates of the HDQ.
  
  With Eqs.(4.29) in \cite{LTM} and the formulation of the downstream queue in HDQ, Equation \ref{app:eq_send_HDQ} presents the constraint on the sending flow $\textit{S}_{i,j}(t)$ of HDQ.
  
  \begin{equation}
   \textit{S}_{i,j}(t)\leq \textit{V}_{i,j}(t+\Delta_{t})-\textit{V}_{i,j}(t) \leq  \textit{F}_{i,j}(t+\Delta_{t})-\textit{V}_{i,j}(t)
  \label{app:eq_send_HDQ}
  \end{equation}
  where $\textit{V}_{i,j}(t)$ and $\textit{F}_{i,j}(t)$ are the cumulative outflow of the buffer area and cumulative inflow at the boundary of the flow area and buffer area of link $(i,j)$  at time $t$, respectively.
  
  With Eqs.(4.33) in \cite{LTM} and the formulation of the upstream queue in HDQ, the constraint on the receiving flow $\textit{R}_{i,j}(t)$ of the HDQ is stated in Equation \ref{app:eq_receive_HDQ}.
  
  \begin{equation}
  \textit{R}_{i,j}(t) \leq \textit{U}_{i,j}(t+\Delta_{t})-\textit{U}_{i,j}(t) \leq  \bar{Q}_{i,j}^{u}+\textit{F}_{i,j}(t+\Delta_{t}-\tau_{i,j}^{w}(t))-\textit{U}_{i,j}(t)
  \label{app:eq_receive_HDQ}
  \end{equation}

  where $\textit{U}_{i,j}(t)$ is the cumulative inflow of the flow area of link $(i,j)$  at time $t$.
  
  When the time step $\Delta_{t}$ approaches zero, the constraint of outflow $\textit{v}$ is shown in \ref{app:v_con} and the constraint of inflow $\textit{u}$ is shown in \ref{app:u_con}.
  
  \begin{eqnarray}
\textit{v}_{i,j}(t) ~\leq~ \lim_{\Delta_{t} \to 0} \frac{\textit{S}_{i,j}(t)} {\Delta_{t}} 
&\leq& \lim_{\Delta_{t} \to 0} \min \left\{ \frac{\textit{F}_{i,j}(t+\Delta_{t})-\textit{V}_{i,j}(t)}{\Delta_{t}},~ \bar{C}_{i,j}^{\nu} \right\} \nonumber \\ 
&=& \lim_{\Delta_{t} \to 0} \min \left\{ \frac{\textit{F}_{i,j}(t+\Delta_{t})-\textit{F}_{i,j}(t)}{\Delta_{t}}+  \frac{\textit{F}_{i,j}(t)-\textit{V}_{i,j}(t)}{\Delta_{t}}, ~ \bar{C}_{i,j}^{\nu} \right\} \nonumber \\
&=& \lim_{\Delta_{t} \to 0} \min \left\{ \frac{\textit{F}_{i,j}(t+\Delta_{t})-\textit{F}_{i,j}(t)}{\Delta_{t}}+  \frac{\textit{q}_{i,j}^{\mathcal{D}}(t)}{\Delta_{t}}, ~ \bar{C}_{i,j}^{\nu} \right\} \nonumber \\
&=& \left\{ \begin{array}{lcl}
\bar{C}_{i,j}^{\nu},  &  & \textit{q}_{i,j}^{\mathcal{D}}(t)>0 \\
\min \left\{ \textit{f}_{i,j}(t),~ \bar{C}_{i,j}^{\nu} \right\}, &  & \textit{q}_{i,j}^{\mathcal{D}}(t)=0
\end{array} \right. \label{app:v_con}\\
\textit{u}_{i,j}(t) ~\leq~ \lim_{\Delta_{t} \to 0} \frac{\textit{R}_{i,j}(t)} {\Delta_{t}} 
&\leq& \lim_{\Delta_{t} \to 0} \min \left\{ \frac{\bar{Q}_{i,j}^{u}+\textit{F}_{i,j}(t+\Delta_{t}-\tau_{i,j}^{w}(t))-\textit{V}_{i,j}(t)}{\Delta_{t}},~ \bar{C}_{i,j}^{\mu} \right\} \nonumber \\
&=& \lim_{\Delta_{t} \to 0} \min \left\{ \frac{\bar{Q}_{i,j}^{u}-\left(\textit{V}_{i,j}(t)-\textit{F}_{i,j}(t-\tau_{i,j}^{w}(t))\right)}{\Delta_{t}}+  \frac{\textit{F}_{i,j}(t+\Delta_{t}-\tau_{i,j}^{w}(t))-\textit{F}_{i,j}(t-\tau_{i,j}^{w}(t))}{\Delta_{t}}, ~ \bar{C}_{i,j}^{\mu} \right\} \nonumber \\
&=& \lim_{\Delta_{t} \to 0} \min \left\{ \frac{\bar{Q}_{i,j}^{u}-\textit{q}_{i,j}^{\mathcal{U}}(t)}{\Delta_{t}}+  \frac{\textit{F}_{i,j}(t+\Delta_{t}-\tau_{i,j}^{w}(t))-\textit{F}_{i,j}(t-\tau_{i,j}^{w}(t))}{\Delta_{t}}, ~ \bar{C}_{i,j}^{\mu} \right\} \nonumber \\
&=& \left\{ \begin{array}{lcl}
\bar{C}_{i,j}^{\mu},  &  & \textit{q}_{i,j}^{\mathcal{U}}(t)<\bar{Q}_{i,j}^{u} \\
\min \left\{ \textit{f}_{i,j}(t-\tau_{i,j}^{w}(t)),~ \bar{C}_{i,j}^{\mu} \right\}, &  & \textit{q}_{i,j}^{\mathcal{U}}(t)=\bar{Q}_{i,j}^{u}
\end{array} \right. \label{app:u_con}
\end{eqnarray}
  
  \section{Proof of Proposition~\ref{prop:feasibility}}
  \label{app:feas}
  In general, we could construct a feasible solution to Equation~\ref{eq:obj_dis} using the following three steps:
  
\begin{itemize} 
\item Let all the demands of each O-D pair arrive at the downstream queue in the origin connectors while not releasing them. We define the time period for such a process as  $\textit{N}_\textit{1}$. We don't consider the capacity of the number of AVs waiting at origins here. 

\item For each O-D pair $(r^{\prime},s^{\prime})$, we divide the total demand $D^{r^{\prime},s^{\prime}}$ into many equal batches $B^{r^{\prime},s^{\prime}}$, and $D^{r^{\prime},s^{\prime}}=n^{r^{\prime},s^{\prime}} B^{r^{\prime},s^{\prime}}$. We only use one path of each O-D pair $(r^{\prime},s^{\prime})$, which is denoted as $P^{r^{\prime},s^{\prime}}$. We only release one batch of an O-D pair only once. After all the AVs in this batch arrive at the destination, we will release another batch of this O-D pair. Repeat this process until finishing all batches of an O-D pair and continue another O-D pair.


\item In each batch, only when all the AVs in this batch arrive at the downstream queue in each link, we will release them into the next link in the selected path. The time of passing link $(i,j)$ for batch $B^{r^{\prime},s^{\prime}}$ is denoted by $N_{i,j}^{r^{\prime},s^{\prime}}$. 
\end{itemize}

Therefore, the total travel time consumed by the above three steps is $\textit{N}_\textit{1}+\sum\limits_{(r^{\prime},s^{\prime})}n^{r^{\prime},s^{\prime}}\sum\limits_{(i,j)\in P^{r^{\prime},s^{\prime}}}N_{i,j}^{r^{\prime},s^{\prime}}$. If we choose the time horizon $N$ that is larger than $\textit{N}_\textit{1}+\sum\limits_{(r^{\prime},s^{\prime})}n^{r^{\prime},s^{\prime}}\sum\limits_{(i,j)\in P^{r^{\prime},s^{\prime}}}N_{i,j}^{r^{\prime},s^{\prime}}$, we could construct a solution so that the problem is feasible. To this end, we present the details of how to construct the flow as follows.


In order to satisfy the queue capacity constraint in  Equation \ref{eq:ineq_1_dis}, the size of the batch should be constrained as follows:

\begin{equation}
    B^{r^{\prime},s^{\prime}}<\mathop{min}\limits_{(i,j)\in P^{r^{\prime},s^{\prime}}}\left\{\bar{Q}_{i,j}^{\mathcal{U}},\bar{Q}_{i,j}^{\mathcal{D}}\right\}.
    \nonumber
\end{equation}

To keep the AVs in each batch following the free-flow region of HFD in the flow area in each link, the size of the batch should satisfy the following equation.

\begin{equation}
    B^{r^{\prime},s^{\prime}}<\mathop{min}\limits_{(i,j)\in P^{r^{\prime},s^{\prime}},~k\in[1,N]} \left\{\frac{L_{i.j}}{\textit{h}_{i,j;k}(t)\textit{v}_{i,j}^f+\emph{L}}
    \right\}
    \nonumber
\end{equation}

Therefore, the size of batch $B^{r^{\prime},s^{\prime}}$ are restricted as follows.

\begin{eqnarray}
    \sum_{k=1}^{N_{1}}d_{r^{\prime},s^{\prime}}(k) &=& D^{r^{\prime},s^{\prime}}=n^{r^{\prime},s^{\prime}} B^{r^{\prime},s^{\prime}} \nonumber \\
    B^{r^{\prime},s^{\prime}}&<&\mathop{min}\limits_{(i,j)\in P^{r^{\prime},s^{\prime}},~k\in[1,N]} \left\{\bar{Q}_{i,j}^{\mathcal{U}},\bar{Q}_{i,j}^{\mathcal{D}},\frac{L_{i.j}}{\textit{h}_{i,j;k}(t)\textit{v}_{i,j}^f+\emph{L}}
    \right\}
    \nonumber
\end{eqnarray}

After determining the size of each batch, the next step is to determine the flow rate of each batch. Suppose we release the batch at a constant rate, so $B^{r^{\prime},s^{\prime}}=N_{B}^{r^{\prime},s^{\prime}}~F_{B}^{r^{\prime},s^{\prime}}$, where $N_{B}^{r^{\prime},s^{\prime}}$ is the time of releasing all AVs in this batch and $F_{B}^{r^{\prime},s^{\prime}}$ is the flow rate of releasing AVs. In order to satisfy the inflow and outflow capacity constraints in Equation \ref{eq:ineq_2_dis}, $F_{B}^{r^{\prime},s^{\prime}}$ should satisfy the following equation.

\begin{equation}
     F_{B}^{r^{\prime},s^{\prime}}<\mathop{min}\limits_{(i,j)\in P^{r^{\prime},s^{\prime}}}\left\{\bar{C}_{i,j}^{\mu},\bar{C}_{i,j}^{\nu}\right\}. 
    \nonumber
\end{equation}




Then we calculate $N_{i,j}^{r^{\prime},s^{\prime}}$, which represents the time of passing link $(i,j)$ for batch $B^{r^{\prime},s^{\prime}}$ and $N_{i,j}^{r^{\prime},s^{\prime}}$ is comprised of two time periods.

The first time period refers to the case that we release the batch from the prior link to link $(i,j)$. In this time period, we have $u_{i,j}(k)=F_{B}^{r^{\prime},s^{\prime}}$ for $k \in [N_{B,i,j}^{r^{\prime},s^{\prime}},N_{B,i,j}^{r^{\prime},s^{\prime}}+N_{B}^{r^{\prime},s^{\prime}}]$, where $N_{B,i,j}^{r^{\prime},s^{\prime}}$ is the time ready to release the batch $B$ with O-D $(r^{\prime},s^{\prime})$ to link $(i,j)$. In this time period, the flow in the flow area follows the free-flow region in HFD, so we have the following equations.

\begin{eqnarray}
    f_{i,j}(k) &=& \textit{v}_{i,j}^{f}~ \rho_{i,j}(k) 
    \nonumber \\
    \rho_{i,j}^{s^{\prime}}(k)&=&\rho_{i,j}^{s^{\prime}}(k-1)+\Delta_{t}\frac{F_{B}^{r^{\prime},s^{\prime}}-\textit{f}_{i,j}^{~s^{\prime}}(k)}{L_{i,j}} 
    \nonumber
\end{eqnarray}

Combining the above two equations, we have

\begin{eqnarray}
\rho_{i,j}(k) &=& \frac{L_{i,j}~\rho_{i,j}(k-1)+\Delta_{t}~F_{B}^{r^{\prime},s^{\prime}}}{L_{i,j}+\Delta_{t}~\textit{v}_{i,j}^{f}} 
\nonumber \\
&=& a_{i,j}~\rho_{i,j}(k-1)+b_{i,j}^{B,r^{\prime},s^{\prime}}
\nonumber
\end{eqnarray}

where $a_{i,j}=\frac{L_{i,j}}{L_{i,j}+\Delta_{t}~\textit{v}_{i,j}^{f}}$
and $b_{i,j}^{B,r^{\prime},s^{\prime}}=\frac{\Delta_{t}~F_{B}^{r^{\prime},s^{\prime}}}{L_{i,j}+\Delta_{t}~\textit{v}_{i,j}^{f}}$.

Therefore, we have the formulation of density at time $N_{B,i,j}^{r^{\prime},s^{\prime}}+N_{B}^{r^{\prime},s^{\prime}}$ as follows. 

\begin{equation}
    \rho_{i,j}(N_{B,i,j}^{r^{\prime},s^{\prime}}+N_{B}^{r^{\prime},s^{\prime}})
    = b_{i,j}^{B,r^{\prime},s^{\prime}}(a_{i,j}^{(N_{B}^{r^{\prime},s^{\prime}})}+a_{i,j}^{(N_{B}^{r^{\prime},s^{\prime}}-1)}+\cdots+1) = c_{i,j}^{B,r^{\prime},s^{\prime}}
    \nonumber
\end{equation}

In the second period, all AVs in this batch have already been released and we calculate the time when all AVs in this batch enter the buffer area to satisfy the end constraint. So we have $u_{i,j}(k)=0$ for $k \in [N_{B,i,j}^{r^{\prime},s^{\prime}}+N_{B}^{r^{\prime},s^{\prime}},N_{B,i,j}^{r^{\prime},s^{\prime}}+N_{i,j}^{r^{\prime},s^{\prime}}]$. Then we have following equations in this time period.

\begin{eqnarray}
    f_{i,j}(k) &=& \textit{v}_{i,j}^{f}~ \rho_{i,j}(k) 
    \nonumber \\
    \rho_{i,j}^{s^{\prime}}(k)&=&\rho_{i,j}^{s^{\prime}}(k-1)-\Delta_{t}\frac{\textit{f}_{i,j}^{~s^{\prime}}(k)}{L_{i,j}} 
    \nonumber
\end{eqnarray}

Combining the above two equations, we have

\begin{eqnarray}
\rho_{i,j}(k) &=& \frac{L_{i,j}~\rho_{i,j}(k-1)}{L_{i,j}+\Delta_{t}~\textit{v}_{i,j}^{f}} 
\nonumber \\
&=& a_{i,j}~\rho_{i,j}(k-1)
\nonumber
\end{eqnarray}

To satisfy the end constraint in Equation \ref{ineq:end_dis}, the density at time $N_{B,i,j}^{r^{\prime},s^{\prime}}+N_{i,j}^{r^{\prime},s^{\prime}}$ should satisfy the following equation.

\begin{equation}
    \rho_{i,j}(N_{B,i,j}^{r^{\prime},s^{\prime}}+N_{i,j}^{r^{\prime},s^{\prime}})=c_{i,j}~a_{i,j}^{(N_{i,j}^{r^{\prime},s^{\prime}}-N_{B}^{r^{\prime},s^{\prime}})}<\frac{1}{L_{i,j}}
    \nonumber
\end{equation}

So we have

\begin{equation}
    N_{i,j}^{r^{\prime},s^{\prime}}>N_{B}^{r^{\prime},s^{\prime}}-\frac{\ln{c_{i,j}~L_{i,j}}}{\ln{a_{i,j}}}
    \nonumber
\end{equation}

Then we set $N_{i,j}^{r^{\prime},s^{\prime}}=\max\{N_{B}^{r^{\prime},s^{\prime}},\lceil N_{B}^{r^{\prime},s^{\prime}}-\frac{\ln{c_{i,j}~L_{i,j}}}{\ln{a_{i,j}}} \rceil\}$. It is obvious that other constraints are satisfied by constructing the solution based on the above procedures. Therefore, if we choose the time horizon $N$ so that $N$ is larger than $\textit{N}_\textit{1}+\sum\limits_{(r^{\prime},s^{\prime})}n^{r^{\prime},s^{\prime}}\sum\limits_{(i,j)}N_{i,j}^{r^{\prime},s^{\prime}}\delta_{i,j}^{P^{r^{\prime},s^{\prime}}}$, where $\delta_{i,j}^{P^{r^{\prime},s^{\prime}}}$ indicates whether link $(i,j)$ is in the path $P^{r^{\prime},s^{\prime}}$, then we could always find a feasible solution of proposed problem.

  \section{Sensitivity analysis of the headway control for SO-DTA}
  \label{app:Sensitivity_Analysis}
  
  This section proposes a sensitivity-based algorithm to solve the SO-DTA problem. We first transform the discretized version of the headway-dependent SO-DTA formulation into linear programming (LP)  given an exogenous headway variable $\textbf{h}$, as shown in \ref{app:MILP}, then the sensitivity-based algorithm is presented. 
  
 \subsection{LP formulation}
 \label{app:MILP}

 This section transforms the discretized version of SO-DTA problem formulation from Equations \ref{eq:obj_dis} to \ref{eq:FD_dis} and Equations \ref{eq:shockwave_dis1} to \ref{ineq:headway_dis} into LP. 
 
 The objective of section \ref{sec_system_optimal_headway_formulation} is to find the optimal headway $\left\{\textit{h}_{i,j}(k)\right\}$. In the proposed programming, the involved variables include $\left\{\rho_{i,j}(k)\right\}$, 
$\left\{\rho_{i,j}^{s^{\prime}}(k)\right\}$, 
$\left\{\textit{q}_{i,j}^{\mathcal{D}}(k)\right\}$,
$\left\{\textit{q}_{i,j}^{\mathcal{D},s^{\prime}}(k)\right\}$, $\left\{\textit{q}_{i,j}^{\mathcal{U}}(k)\right\}$, $\left\{\textit{q}_{i,j}^{\mathcal{U},s^{\prime}}(k)\right\}$,
$\left\{\textit{u}_{i,j}(k)\right\}$,
$\left\{\textit{u}_{i,j}^{s^{\prime}}(k)\right\}$,
$\left\{\textit{f}_{i,j}(k)\right\}$,
$\left\{\textit{f}_{i,j}^{~s^{\prime}}(k)\right\}$,
$\left\{\textit{v}_{i,j}(k)\right\}$,
$\left\{\textit{v}_{i,j}^{s^{\prime}}(k)\right\}$,
$\left\{\textit{n}_{i,j}(k)\right\}$,
$\left\{\textit{h}_{i,j}(k)\right\}$ and
$\left\{w_{i,j}(k)\right\}$.
However, the proposed problem is hard to solve due to the nonlinearity in Equations \ref{eq:FD_dis} and Equations \ref{eq:shockwave_dis1} to  \ref{eq:shockwave_dis2}.

We note that if the headway $\left\{\textit{h}_{i,j}(k)\right\}$ is fixed, then the variables are  $\left\{\rho_{i,j}(k)\right\}$, 
$\left\{\rho_{i,j}^{s^{\prime}}(k)\right\}$, 
$\left\{\textit{q}_{i,j}^{\mathcal{D}}(k)\right\}$, $\left\{\textit{q}_{i,j}^{\mathcal{D},s^{\prime}}(k)\right\}$, $\left\{\textit{q}_{i,j}^{\mathcal{U}}(k)\right\}$, $\left\{\textit{q}_{i,j}^{\mathcal{U},s^{\prime}}(k)\right\}$,
$\left\{\textit{u}_{i,j}(k)\right\}$,
$\left\{\textit{u}_{i,j}^{s^{\prime}}(k)\right\}$,
$\left\{\textit{f}_{i,j}(k)\right\}$,
$\left\{\textit{f}_{i,j}^{~s^{\prime}}(k)\right\}$,
$\left\{\textit{v}_{i,j}(k)\right\}$,
$\left\{\textit{v}_{i,j}^{s^{\prime}}(k)\right\}$,
$\left\{\textit{n}_{i,j}(k)\right\}$ and
$\left\{w_{i,j}(k)\right\}$. Hence the proposed formulation becomes a  linear programming program (LP).

For simplicity in notations, we use a generalized form of mixed integer linear programming to represent the formulation, as shown in Equation~\ref{eq:MILP_obj}.


\begin{eqnarray}
     &\min \limits_{\mathbf{x}} ~~  TTT = \textbf{P}^{T}\textbf{x} 
     \label{eq:MILP_obj}\\
    & \, s.t.  \quad \textbf{A}(\textbf{h})\textbf{x}  =  \textbf{B}(\textbf{h}) \nonumber \\
    & \quad \quad ~~ \textbf{C}(\textbf{h})\textbf{x} \leq \textbf{D}(\textbf{h}) \nonumber
\end{eqnarray}

where we have:

\begin{itemize} 
\item $\textbf{x}=\left\{\rho_{i,j}(k),~
\rho_{i,j}^{s^{\prime}}(k),~
\textit{q}_{i,j}^{\mathcal{D}}(k),~ \textit{q}_{i,j}^{\mathcal{D},s^{\prime}}(k),~
\textit{q}_{i,j}^{\mathcal{U}}(k),~ \textit{q}_{i,j}^{\mathcal{U},s^{\prime}}(k),~
\textit{u}_{i,j}(k),~
\textit{u}_{i,j}^{s^{\prime}}(k),~
\textit{f}_{i,j}(k),~
\textit{f}_{i,j}^{~s^{\prime}}(k),~
\textit{v}_{i,j}(k),~
\textit{v}_{i,j}^{s^{\prime}}(k),~
\textit{n}_{i,j}(k),~
w_{i,j}(k) \right\}$ for $(\textit{i}.\textit{j})\in \textit{E}, ~ 1\leq{k}\leq\textit{N};
\textit{s}^{\prime} \in \widetilde{S}$, and the dimension of variables $\textbf{x}$ is $n$.

\item $\textbf{h}=\left\{\textit{h}_{\emph{i},~\emph{j}}(k) \mid (\textit{i},\textit{j})\in E\setminus(L_R\cup L_S);~1\leq{k}\leq\textit{N}\right\} \in\Re^{m}$, and the dimension of parameters $\textbf{h}$ is $m$.
\item $\textbf{A}(\textbf{h})=
\left[
\begin{matrix}
a_{1}(\textbf{h})  \\
\vdots \\
a_{l_{1}}(\textbf{h})\\
\end{matrix}
\right]=
\left[
\begin{matrix}
a_{11}(\textbf{h})  & \cdots & a_{1n}(\textbf{h}) \\
\vdots  & \ddots & \vdots \\
a_{l_{1}1}(\textbf{h})  & \cdots & a_{l_{1}n}(\textbf{h})\\
\end{matrix}
\right] \in\Re^{l_{1}\times n}; \quad
\textbf{B}(\textbf{h})=
\left[
\begin{matrix}
b_{1}(\textbf{h})  \\
\vdots \\
b_{l_{1}}(\textbf{h})\\
\end{matrix}
\right] \in\Re^{l_{1}}
$. $l_{1}$ represents the number of equality constraints. 
\item $\textbf{C}(\textbf{h})=
\left[
\begin{matrix}
c_{1}(\textbf{h})  \\
\vdots \\
c_{l_{2}}(\textbf{h})\\
\end{matrix}
\right]=
\left[
\begin{matrix}
c_{11}(\textbf{h})  & \cdots & a_{1n}(\textbf{h}) \\
\vdots  & \ddots & \vdots \\
c_{l_{2}1}(\textbf{h})  & \cdots & c_{l_{2}n}(\textbf{h})\\
\end{matrix}
\right] \in\Re^{l_{2}\times n}; \quad
\textbf{D}(\textbf{h})=
\left[
\begin{matrix}
d_{1}(\textbf{h})  \\
\vdots \\
d_{l_{2}}(\textbf{h})\\
\end{matrix}
\right] \in\Re^{l_{2}}
$. $l_{2}$ represents the number of inequality constraints. 
\end{itemize}

\subsection{Sensitivity-based optimal headway control solution algorithm}
\label{app:SA}
  We then present the sensitivity analysis for the developed LP,
  and the focus is to derive the gradient of TTT with respect to the headway.
The Lagrange multiplier of Equation~\ref{eq:MILP_obj} is shown in Equation \ref{eq:KKT_obj} and the corresponding Karush–Kuhn–Tucker (KKT) conditions are presented from Equations \ref{eq:KKT_1} to \ref{eq:KKT_5}.

\begin{equation}
\min_\mathbf{x} \quad\textit{\textbf{L}}=\textbf{P}^{T}\textbf{x}+\bm{\lambda}^{T}\left(\textbf{A}(\textbf{h})\textbf{x}-\textbf{B}(\textbf{h})\right)+\bm{\mu}^{T}\left(\textbf{C}(\textbf{h})\textbf{x}-\textbf{D}(\textbf{h})\right)
\label{eq:KKT_obj}
\end{equation}

\begin{equation}
\textbf{P}^{T}+{\bm{\lambda^{*}}}^{T}\textbf{A}(\textbf{h})+
{\bm{\mu^{*}}}^{T}\textbf{C}(\textbf{h})=\bm{0}
\label{eq:KKT_1}
\end{equation}

\begin{equation}
\textbf{A}(\textbf{h})\textbf{x}^{*}=\textbf{B}(\textbf{h})
\label{eq:KKT_2}
\end{equation}

\begin{equation}
\textbf{C}(\textbf{h})\textbf{x}^{*}\leq\textbf{D}(\textbf{h})
\label{eq:KKT_3}
\end{equation}

\begin{equation}
{\bm{\mu^{*}}_{i}}\left(\textbf{C}(\textbf{h})\textbf{x}^{*}-\textbf{D}(\textbf{h})\right)_{i}=\bm{0}
\label{eq:KKT_4}
\end{equation}

\begin{equation}
\bm{\mu^{*}}\geq\bm{0}
\label{eq:KKT_5}
\end{equation}

where,

\begin{itemize} 
\item $\bm{\lambda^{*}}\in\Re^{l_{1}}$ and $\bm{\mu^{*}}\in\Re^{l_{2}}$. 
\item  $\textit{K}=\left\{i \mid \mu_{\textit{i}}^{*}>0 ; 1\leq\textit{i}\leq{l_{2}}\right\}$ is the set of index of all positive 
$\mu_{\textit{i}}^{*}$. And the size of $\textit{K}$ is $\textit{k}$. If i $\in \textit{K}$, then $\mu_{\textit{i}}^{*}>0$.
\end{itemize} 

Then we do the differential to the Equation \ref{eq:MILP_obj}  as $\mathbf{z}=\textbf{P}^{T}\textbf{X}$  and Equations \ref{eq:KKT_1} to \ref{eq:KKT_4}.

\begin{equation}
\textbf{P}^{T}d\textbf{x}-d\textbf{z}=0
\label{eq:KKT_dif_1}
\end{equation}

\begin{equation}
\left(\sum_{i=1}^{l_{1}}\lambda_{\textit{i}}^{*}\nabla_{\textbf{h}}a_{i}(\textbf{h})+
\sum_{j=1}^{l_{2}}\mu_{\textit{j}}^{*}\nabla_{\textbf{h}}c_{j}(\textbf{h})\right)d\bm{\textbf{h}}+\left(\textbf{A}(\textbf{h})\right)^{T}d\bm{\lambda}+\left(\textbf{C}(\textbf{h})\right)^{T}d\bm{\mu}=\bm{0}
\label{eq:KKT_dif_2}
\end{equation}

\begin{equation}
\textbf{A}(\textbf{h})d\textbf{x}+\left[\nabla_{\textbf{h}}\left(\textbf{A}(\textbf{h})\textbf{x}^{*}\right)-\nabla_{\textbf{h}}\textbf{B}(\textbf{h})\right]d\bm{\textbf{h}}=\bm{0}
\label{eq:KKT_dif_3}
\end{equation}

\begin{equation}
\textbf{C}_{\textbf{k}}(\textbf{h})d\textbf{x}+\left[\nabla_{\textbf{h}}\left(\textbf{C}_{\textbf{k}}(\textbf{h})\textbf{x}^{*}\right)-\nabla_{\textbf{h}}\textbf{D}_{\textbf{k}}(\textbf{h})\right]d\bm{\textbf{h}}=\bm{0}
\label{eq:KKT_dif_4}
\end{equation}

where,

\begin{itemize} 
\item $\nabla_{\textbf{h}}a_{i}(\textbf{h})=\nabla_{\textbf{h}}\left[
\begin{matrix}
a_{i1}(\textbf{h}) & \cdots &  a_{in}(\textbf{h})\\
\end{matrix}
\right]=\left[
\begin{matrix}
\frac{\partial a_{i1}(\textbf{h})}{\partial \textbf{h}_{1}} & \cdots &  \frac{\partial a_{i1}(\textbf{h})}{\partial \textbf{h}_{m}}\\
\vdots  & \ddots & \vdots \\
\frac{\partial a_{in}(\textbf{h})}{\partial \textbf{h}_{1}} & \cdots &  \frac{\partial a_{in}(\textbf{h})}{\partial \textbf{h}_{m}}\\
\end{matrix}\right] \in\Re^{n\times m}$
\item $\nabla_{\textbf{h}}c_{j}(\textbf{h})=\nabla_{\textbf{h}}\left[
\begin{matrix}
c_{j1}(\textbf{h}) & \cdots &  c_{jn}(\textbf{h})\\
\end{matrix}
\right]=\left[
\begin{matrix}
\frac{\partial c_{j1}(\textbf{h})}{\partial \textbf{h}_{1}} & \cdots &  \frac{\partial c_{j1}(\textbf{h})}{\partial \textbf{h}_{m}}\\
\vdots  & \ddots & \vdots \\
\frac{\partial c_{jn}(\textbf{h})}{\partial \textbf{h}_{1}} & \cdots &  \frac{\partial c_{jn}(\textbf{h})}{\partial \textbf{h}_{m}}\\
\end{matrix}\right] \in\Re^{n\times m}$
\item $\textbf{A}^{x}=\textbf{A}(\textbf{h})\textbf{x}^{*}=\left[
\begin{matrix}
a^{x}_{1}(\textit{x}^{*},\textbf{h}) \\
\vdots \\
a^{x}_{l_{1}}(\textit{x}^{*},\textbf{h})\\
\end{matrix}\right]\in\Re^{l_{1}}$;
$\nabla_{\textbf{h}}\textbf{A}^{x}=\left[
\begin{matrix}
\frac{\partial a^{x}_{1}(\textit{x}^{*},\textbf{h})}{\partial \textbf{h}_{1}} & \cdots &  \frac{\partial a^{x}_{1}(\textit{x}^{*},\textbf{h})}{\partial \textbf{h}_{m}}\\
\vdots  & \ddots & \vdots \\
\frac{\partial a^{x}_{l_{1}}(\textit{x}^{*},\textbf{h})}{\partial \textbf{h}_{1}} & \cdots &  \frac{\partial a^{x}_{l_{1}}(\textit{x}^{*},\textbf{h})}{\partial \textbf{h}_{m}}\\
\end{matrix}\right]\in\Re^{l_{1}\times m}$
\item $\textbf{C}_{\textbf{k}}(\textbf{h})=\left[
\begin{matrix}
\vdots  \\
c_{j}(\textbf{h})  \\
\vdots\\
\end{matrix}
\right]_{j \in \textit{K}} \in\Re^{k\times n}$; ~
$\textbf{C}_{\textbf{k}}^{x}=\textbf{C}_{\textbf{k}}(\textbf{h})\textbf{x}^{*}= \left[ 
\begin{matrix}
\vdots  \\
c_{j}^{x}(\textit{x}^{*},\textbf{h})  \\
\vdots\\
\end{matrix}
\right]_{j \in \textit{K}} \in\Re^{k}$; ~
$\textbf{D}_{\textbf{k}}(\textbf{h})=\left[
\begin{matrix}
\vdots  \\
d_{j}(\textbf{h})  \\
\vdots\\
\end{matrix}
\right]_{j \in \textit{K}} \in\Re^{k}$
\item $\nabla_{\textbf{h}}\textbf{C}_{\textbf{k}}^{x}=\left[
\begin{matrix}
\vdots & \cdots & \vdots \\
\frac{\partial c_{j}^{x}(\textit{x}^{*},\textbf{h}))}{\partial \textbf{h}_{1}} & \cdots & \frac{\partial c_{j}^{x}(\textit{x}^{*},\textbf{h}))}{\partial \textbf{h}_{m}} \\
\vdots & \cdots & \vdots \\
\end{matrix}
\right]_{j \in \textit{K}} \in\Re^{k\times m}$
\end{itemize}

We cast Equations \ref{eq:KKT_dif_1} to \ref{eq:KKT_dif_4} into the form of block matrices.

\begin{equation}
\left[
\begin{matrix}
\textbf{P}^{T} & \bm{0} & \bm{0} & -1 \\ 
\bm{0} & \left(\textbf{A}(\textbf{h})\right)^{T} & \left(\textbf{C}(\textbf{h})\right)^{T} & \bm{0} \\ 
\textbf{A}(\textbf{h}) & \bm{0} & \bm{0} & \bm{0} \\
\textbf{C}_{\textbf{k}}(\textbf{h}) & \bm{0} & \bm{0} & \bm{0} \\
\end{matrix}
\right] ~
\left[
\begin{matrix}
d\textbf{x} \\ 
d\bm{\lambda} \\ 
d\bm{\mu} \\
d\textbf{z} \\
\end{matrix}
\right]\quad = \quad -
\left[
\begin{matrix}
\bm{0} \\ 
\sum_{i=1}^{l_{1}}\lambda_{\textit{i}}^{*}\nabla_{\textbf{h}}a_{i}(\textbf{h})+
\sum_{j=1}^{l_{2}}\mu_{\textit{j}}^{*}\nabla_{\textbf{h}}c_{j}(\textbf{h}) \\ 
\nabla_{\textbf{h}}\left(\textbf{A}(\textbf{h})\textbf{x}^{*}\right)-\nabla_{\textbf{h}}\textbf{B}(\textbf{h}) \\
\nabla_{\textbf{h}}\left(\textbf{C}_{\textbf{k}}(\textbf{h})\textbf{x}^{*}\right)-\nabla_{\textbf{h}}\textbf{D}_{\textbf{k}}(\textbf{h}) \\
\end{matrix}
\right] ~ d\bm{\textbf{h}}
\label{matrix:1}
\end{equation}

We further define 
\begin{equation} \textbf{Q}=
\left[
\begin{matrix}
\textbf{P}^{T} & \bm{0} & \bm{0} & -1 \\ 
\bm{0} & \left(\textbf{A}(\textbf{h})\right)^{T} & \left(\textbf{C}(\textbf{h})\right)^{T} & \bm{0} \\ 
\textbf{A}(\textbf{h}) & \bm{0} & \bm{0} & \bm{0} \\
\textbf{C}_{\textbf{k}}(\textbf{h}) & \bm{0} & \bm{0} & \bm{0} \\
\end{matrix}
\right].
\label{matrix:2}
\end{equation}

The dimension of matrix $\textbf{Q}$ is $\left(n+l_{1}+k+1\right)\times\left(n+l_{1}+l_{2}+1\right)$. If there exists $\mu_{\textit{i}}^{*}=0$, then the matrix $\textbf{Q}$ is not a square matrix and hence not invertible. So we calculate the generalized inverse matrix of $\textbf{Q}$ and denote it as $\textbf{Q}^{-1}$ \citep{penrose1955generalized}.

\begin{equation}
\left[
\begin{matrix}
d\textbf{x} \\ 
d\bm{\lambda} \\ 
d\bm{\mu} \\
d\textbf{z} \\
\end{matrix}
\right]\quad = \quad - \textbf{Q}^{-1}
\left[
\begin{matrix}
\bm{0} \\ 
\sum_{i=1}^{l_{1}}\lambda_{\textit{i}}^{*}\nabla_{\textbf{h}}a_{i}(\textbf{h})+
\sum_{j=1}^{l_{2}}\mu_{\textit{j}}^{*}\nabla_{\textbf{h}}c_{j}(\textbf{h}) \\ 
\nabla_{\textbf{h}}\left(\textbf{A}(\textbf{h})\textbf{x}^{*}\right)-\nabla_{\textbf{h}}\textbf{B}(\textbf{h}) \\
\nabla_{\textbf{h}}\left(\textbf{C}_{\textbf{k}}(\textbf{h})\textbf{x}^{*}\right)-\nabla_{\textbf{h}}\textbf{D}_{\textbf{k}}(\textbf{h}) \\
\end{matrix}
\right] ~ d\bm{\textbf{h}} 
\label{matrix:3}
\end{equation}

Then we replace $d\bm{\textbf{h}}$ with the identity matrix $\textit{I}_{m}$ and derive the gradient $\frac{\partial \textbf{z}}{\partial \textbf{h}}$.

\begin{equation}
\left[
\begin{matrix}
\frac{\partial \textbf{x}}{\partial \textbf{h}}\vspace{1ex} \\ 
\frac{\partial \bm{\lambda}}{\partial \textbf{h}} \vspace{1ex} \\ 
\frac{\partial \bm{\mu}}{\partial \textbf{h}} \vspace{1ex}\\
\frac{\partial \textbf{z}}{\partial \textbf{h}} \\
\end{matrix}
\right] \quad = \quad - \textbf{Q}^{-1}
\left[
\begin{matrix}
\bm{0} \\ 
\sum_{i=1}^{l_{1}}\lambda_{\textit{i}}^{*}\nabla_{\textbf{h}}a_{i}(\textbf{h})+
\sum_{j=1}^{l_{2}}\mu_{\textit{j}}^{*}\nabla_{\textbf{h}}c_{j}(\textbf{h}) \\ 
\nabla_{\textbf{h}}\left(\textbf{A}(\textbf{h})\textbf{x}^{*}\right)-\nabla_{\textbf{h}}\textbf{B}(\textbf{h}) \\
\nabla_{\textbf{h}}\left(\textbf{C}_{\textbf{k}}(\textbf{h})\textbf{x}^{*}\right)-\nabla_{\textbf{h}}\textbf{D}_{\textbf{k}}(\textbf{h}) \\
\end{matrix}
\right] 
\label{matrix:4}
\end{equation}

Therefore, a sensitivity-based optimal headway control algorithm is proposed based on exploring the gradient descent of TTT, as shown in Algorithm \ref{algo:sen}.

\begin{algorithm}[H]
\caption{Sensitivity-based Optimal Headway Solving Algorithm}
{\textbf{Input:}  learning rate $\eta$}.

{\textbf{Output:} time-dependent and link-specific headway $\left\{\textit{h}_{\emph{i},\,\emph{j}}(k) \right\}$}.

Initialize the headway $\textbf{h}$.
\begin{algorithmic}[1]
\FOR{$(\textit{i}=0; \textit{i}<\textit{I}; ++\textit{i})$}
\STATE Solve the Formulation~\ref{eq:MILP_obj} given the current headway setting $\textbf{h}$ and derive $\textbf{z}^{*}, {\lambda}^{*}, {\mu}^{*}$ and $\textbf{x}^{*}$.
\STATE Conduct the sensitivity analysis and derive $ \frac{\partial \textbf{z}^{*}}{\partial \textbf{h}} $.
\STATE Update the headway by $\textbf{h}=\textbf{h}-\eta\frac{\partial \textbf{z}^{*}}{\partial \textbf{h}} $.
\ENDFOR
\end{algorithmic}
\textbf{Return:} optimal headway $\textbf{h}^{*}$.
\label{algo:sen}
\end{algorithm}

  \section{Proof of Proposition \ref{prop:minimum}}
  \label{app:minimum}

We first present Lemma \ref{lemma:1} to show that a smaller headway $h_{i,j}(k)$ could generate a larger flow $f_{i,j}(k)$. 

\begin{lemma}
Given $u_{i,j}(k)$ and $\rho_{i,j}(k-1)$, if $h_{i,j}^{(1)}(k)\leq h_{i,j}^{(2)}(k)$, then $f_{i,j}^{(1)}(k)\geq f_{i,j}^{(2)}(k)$.
\label{lemma:1}
\end{lemma}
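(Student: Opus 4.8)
The plan is to reduce the lemma to a one-dimensional monotone comparison in the variable $f_{i,j}(k)$. First I would fix $u_{i,j}(k)$ and $\rho_{i,j}(k-1)$ and use the discretized conservation law \eqref{eq:density_dis}, summed over destinations, to write the post-update density as an affine, strictly decreasing function of the boundary flow:
\[
\rho_{i,j}(k) = A - \frac{\Delta_t}{L_{i,j}}\, f_{i,j}(k), \qquad A := \rho_{i,j}(k-1) + \frac{\Delta_t}{L_{i,j}}\, u_{i,j}(k),
\]
so that $A$ is a constant fixed by the two given quantities. Substituting this into the HFD relation \eqref{eq:free_dis}--\eqref{eq:cong_dis} collapses the pair (density update, fundamental diagram) into a single fixed-point equation $f_{i,j}(k) = \Phi\big(f_{i,j}(k);\, h_{i,j}(k)\big)$, where $\Phi(f;h) := f^{FD}_{i,j}\big(A - \tfrac{\Delta_t}{L_{i,j}} f;\, h\big)$ is the flow read off the fundamental diagram at the density forced by conservation.

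Next I would establish two facts about $\Phi$. The first is uniqueness of the fixed point: on the free-flow branch $\Phi$ has slope $-v^f\Delta_t/L_{i,j}<0$ in $f$, and on the congested branch it has slope $L\Delta_t/(h L_{i,j})$, which is strictly below $1$ thanks to the stipulated step-size condition $\Delta_t L < L_{i,j} h_{i,j}^{\min} \le L_{i,j} h_{i,j}(k)$. Hence $f \mapsto \Phi(f;h)-f$ is strictly decreasing and continuous across the kink, so the fixed point $f_{i,j}(k)$ is unique and is a well-defined function of $(u_{i,j}(k), \rho_{i,j}(k-1), h_{i,j}(k))$. The second fact is the key pointwise statement: for every fixed density $\rho\in[0,1/L]$, the map $h \mapsto f^{FD}_{i,j}(\rho;h)$ is non-increasing. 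This I would verify by tracking the critical density $\rho^{*}(h)=1/(h v^f+L)$, which decreases in $h$: on the free-flow branch $f=v^f\rho$ is independent of $h$, the two branches agree at $\rho=\rho^{*}(h)$, and on the congested branch $f=(1-\rho L)/h$ is strictly decreasing in $h$; splitting the comparison of two headways into the three cases (both free-flow, mixed, both congested) then yields $f^{FD}_{i,j}(\rho;h^{(1)})\ge f^{FD}_{i,j}(\rho;h^{(2)})$ whenever $h^{(1)}\le h^{(2)}$.

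I would conclude by a comparison of fixed points. Since $\Phi(f;h)$ is non-increasing in $h$ at fixed $f$, for $h^{(1)}\le h^{(2)}$ we have $\Phi(f;h^{(1)})\ge\Phi(f;h^{(2)})$ for all $f$. Writing $\psi_1(f):=\Phi(f;h^{(1)})-f$, which is strictly decreasing with root $f^{(1)}$, evaluation at $f^{(2)}$ gives
\[
\psi_1\big(f^{(2)}\big) = \Phi\big(f^{(2)};h^{(1)}\big) - f^{(2)} \;\ge\; \Phi\big(f^{(2)};h^{(2)}\big) - f^{(2)} = 0 = \psi_1\big(f^{(1)}\big),
\]
and strict monotonicity of $\psi_1$ forces $f^{(2)}\le f^{(1)}$, i.e. $f^{(1)}_{i,j}(k)\ge f^{(2)}_{i,j}(k)$, as claimed.

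I expect the main obstacle to be the branch bookkeeping in the pointwise monotonicity of the fundamental diagram, because the branch to which a given $\rho$ belongs itself shifts with $h$; the continuity of $f^{FD}_{i,j}$ at the critical density is precisely what glues the three cases together and must be handled cleanly. A secondary point to pin down is that the step-size condition $\Delta_t L < L_{i,j} h^{\min}_{i,j}$ is exactly what makes the conservation line steeper than the congested branch, guaranteeing that the intersection — and hence the flow $f_{i,j}(k)$ being compared — is uniquely defined for each headway.
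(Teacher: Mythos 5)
Your proof is correct, but it takes a genuinely different route from the paper's. The paper solves the coupled pair (fundamental-diagram branch, discretized conservation law) explicitly in each regime, obtaining closed forms $f_{i,j}(k)=\frac{L_{i,j}\,v^f_{i,j}\,\rho_{i,j}(k-1)+\Delta_t v^f_{i,j}\,u_{i,j}(k)}{L_{i,j}+\Delta_t v^f_{i,j}}$ (free flow) and $f_{i,j}(k)=\frac{L_{i,j}-L\,L_{i,j}\,\rho_{i,j}(k-1)-L\,\Delta_t u_{i,j}(k)}{L_{i,j}h_{i,j}(k)-L\Delta_t}$ (congested), and then compares the two headways by a case analysis on the regime of $\rho^{(2)}_{i,j}(k)$ with sub-cases for $\rho^{(1)}_{i,j}(k)$; notably, the positivity of the congested denominator $L_{i,j}h_{i,j}(k)-L\Delta_t$ is established there by a feasibility contradiction rather than by invoking the step-size condition. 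You instead collapse the system into the scalar fixed-point equation $f=\Phi(f;h)$ and conclude via a monotone comparison principle: pointwise non-increase of $f^{FD}_{i,j}(\rho;\cdot)$ in $h$ together with strict decrease of $\psi(f)=\Phi(f;h)-f$. Your route buys three things the paper's does not: the branch selection becomes automatic (uniqueness of the fixed point shows the regime is determined by the data, a consistency issue the paper handles piecemeal by infeasibility arguments), the exact role of the CFL-type condition $\Delta_t L<L_{i,j}h^{\min}_{i,j}$ is exposed (it is precisely what keeps the congested-branch slope $L\Delta_t/(hL_{i,j})$ below $1$), and the argument generalizes to any fundamental diagram pointwise ordered in $h$ and continuous in $\rho$; as a free byproduct you also get $\rho^{(1)}_{i,j}(k)\le\rho^{(2)}_{i,j}(k)$, which the paper must derive separately. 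What the paper's explicit computation buys in exchange is reuse: those closed forms are cited again in \ref{app:minimum} to propagate the flow and density comparisons across later time intervals in the proof of Proposition~\ref{prop:minimum}, which your more abstract formulation would have to redo. One small point to tighten: you assert uniqueness of the fixed point but never record existence; a one-line intermediate-value argument suffices, since $\psi(0)=\Phi(0;h)\ge 0$ while $\psi\bigl(A L_{i,j}/\Delta_t\bigr)<0$ at the flow value driving the induced density to zero, so a root exists in the interval where $A-\tfrac{\Delta_t}{L_{i,j}}f$ remains in $[0,1/L]$ (and for the comparison itself, existence is in any case granted because $f^{(1)}$ and $f^{(2)}$ come from feasible trajectories).
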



\begin{proof}
We solve  $f_{i,j}(k)$ and $\rho_{i,j}(k)$ under given $u_{i,j}(k)$ and $\rho_{i,j}(k-1)$ in both free-flow and congested states:

\begin{itemize}
\item Free-flow state:

\begin{eqnarray}
     \textit{f}_{i,j}(k) & =&  \textit{v}_{i,j}^{f}{~} \rho_{i,j}(k) \nonumber \\
 \rho_{i,j}(k) & =&  \rho_{i,j}(k-1) +\Delta_{t}\frac{\textit{u}_{i,j}(k)-\textit{f}_{i,j}(k)}{L_{i,j}} \nonumber
\end{eqnarray}


Then we obtain the following equations:

\begin{eqnarray}
 \textit{f}_{i,j}(k) & =&  \frac{L_{i,j}~\textit{v}_{i,j}^{f}~\rho_{i,j}(k-1)+\Delta_{t}~\textit{v}_{i,j}^{f}~u_{i,j}(k)}{L_{i,j}+\Delta_{t}~\textit{v}_{i,j}^{f}} \nonumber\\
 \rho_{i,j}(k) & =&  \frac{L_{i,j}~\rho_{i,j}(k-1)+\Delta_{t}~u_{i,j}(k)}{L_{i,j}+\Delta_{t}~\textit{v}_{i,j}^{f}} \nonumber
\end{eqnarray}


\item Congested state:

\begin{eqnarray}
 \textit{f}_{i,j}(k) & =&  \frac{1-\rho_{i,j}(k)~L}{h_{i,j}(k)} \nonumber \\
 \rho_{i,j}(k) & =&  \rho_{i,j}(k-1)+\Delta_{t}\frac{\textit{u}_{i,j}(k)-\textit{f}_{i,j}(k)}{L_{i,j}} \nonumber
\end{eqnarray}


Then the two quantities can be derived as:

\begin{eqnarray}
 \textit{f}_{i,j}(k) & =&  \frac{L_{i,j}-L~L_{i,j}~\rho_{i,j}(k-1)-L~\Delta_{t}~u_{i,j}(k)}{L_{i,j}~h_{i,j}(k)-L~\Delta_{t}}  \nonumber\\
 \rho_{i,j}(k) & =&  \frac{h_{i,j}(k)~L_{i,j}~\rho_{i,j}(k-1)+h_{i,j}(k)~\Delta_{t}~u_{i,j}(k)-\Delta_{t}}{L_{i,j}~h_{i,j}(k)-L~\Delta_{t}} \nonumber    
\end{eqnarray}


\end{itemize} 

Next, We will discuss different states of $\rho_{i,j}^{(2)}(k)$:

\begin{itemize}

\item If $\rho_{i,j}^{(2)}(k)$ is under free-flow state:
Because $h_{i,j}^{(1)}(k)\leq h_{i,j}^{(2)}(k)$,  as is shown in Figure \ref{fig:FD}, it is obvious that $f_{i,j}^{(1)}(k)=f_{i,j}^{(2)}(k)$.

\item If $\rho_{i,j}^{(2)}(k)$ is under congested state:
\begin{itemize}
    \item The density in the free-flow state  should be infeasible:
       \begin{equation}
            \rho_{i,j}^{(2)}(k)=\frac{L_{i,j}~\rho_{i,j}^{(2)}(k-1)+\Delta_{t}~u_{i,j}^{(2)}(k)}{L_{i,j}+\Delta_{t}~\textit{v}_{i,j}^{f}}\geq \frac{1}{h_{i,j}^{(2)}(k)~\textit{v}_{i,j}^{f}+L}
        \end{equation}
    \item The density in the congested state should be feasible:
    \begin{equation}
         \rho_{i,j}^{(2)}(k)=\frac{h_{i,j}^{(2)}(k)~L_{i,j}~\rho_{i,j}^{
(2)}(k-1)+h_{i,j}^{(2)}(k)~\Delta_{t}~u_{i,j}^{(2)}(k)-\Delta_{t}}{L_{i,j}~h_{i,j}^{(2)}(k)-L~\Delta_{t}} \geq \frac{1}{h_{i,j}^{(2)}(k)~\textit{v}_{i,j}^{f}+L}
    \end{equation}
\end{itemize}

For the free-flow state, if $L_{i,j}~h_{i,j}^{(2)}(k)-L~\Delta_{t}<0$, we have

\begin{equation}
\begin{aligned}
\rho_{i,j}^{(2)}(k)&=\frac{h_{i,j}^{(2)}(k)~L_{i,j}~\rho_{i,j}^{(2)}(k-1)+h_{i,j}^{(2)}(k)~\Delta_{t}~u_{i,j}^{(2)}(k)-\Delta_{t}}{L_{i,j}~h_{i,j}^{(2)}(k)-L~\Delta_{t}}\\
&\leq \frac{\frac{L_{i,j}+\Delta_{t}~\textit{v}_{i,j}^{f}}{h_{i,j}^{(2)}(k)~\textit{v}_{i,j}^{f}+L}~h_{i,j}^{(2)}(k)-\Delta_{t}}{L_{i,j}~h_{i,j}^{(2)}(k)-L~\Delta_{t}} = \frac{1}{h_{i,j}^{(2)}(k)~\textit{v}_{i,j}^{f}+L}
\nonumber
\end{aligned}
\end{equation}

It is contradicted with $\rho_{i,j}^{(2)}(k) \geq \frac{1}{h_{i,j}^{(2)}(k)~\textit{v}_{i,j}^{f}+L}$. Therefore, $L_{i,j}~h_{i,j}^{(2)}(k)-L~\Delta_{t}>0$. It is also similar to prove that if $\rho_{i,j}(k)$ is congested under the headway $h_{i,j}(k)$, then $L_{i,j}~h_{i,j}(k)-L~\Delta_{t}>0$. 

Next, we discuss the states of $\rho_{i,j}^{(1)}(k)$.

\begin{itemize}
\item[*] $\rho_{i,j}^{(1)}(k)$ is in congested state, then we have

\begin{equation}
\rho_{i,j}^{(1)}(k)=\frac{L_{i,j}~\rho_{i,j}^{(2)}(k-1)+\Delta_{t}~u_{i,j}^{(2)}(k)}{L_{i,j}+\Delta_{t}~\textit{v}_{i,j}^{f}}\geq \frac{1}{h_{i,j}^{(1)}(k)~\textit{v}_{i,j}^{f}+L}\geq
\frac{1}{h_{i,j}^{(2)}(k)~\textit{v}_{i,j}^{f}+L}
\nonumber
\end{equation}

The congested state of $\rho_{i,j}^{(1)}(k)$ can be calculated as follows:

\begin{equation}
\begin{aligned}
\rho_{i,j}^{(1)}(k)&=\frac{h_{i,j}^{(1)}(k)~L_{i,j}~\rho_{i,j}^{(2)}(k-1)+h_{i,j}^{(2)}(k)~\Delta_{t}~u_{i,j}^{(2)}(k)-\Delta_{t}}{L_{i,j}~h_{i,j}^{(1)}(k)-L~\Delta_{t}}\\
&\geq \frac{\frac{L_{i,j}+\Delta_{t}~\textit{v}_{i,j}^{f}}{h_{i,j}^{(1)}(k)~\textit{v}_{i,j}^{f}+L}~h_{i,j}^{(1)}(k)-\Delta_{t}}{L_{i,j}~h_{i,j}^{(1)}(k)-L~\Delta_{t}} = \frac{1}{h_{i,j}^{(1)}(k)~\textit{v}_{i,j}^{f}+L}
\nonumber
\end{aligned}
\end{equation}

The flow  $\textit{f}_{i,j}(k)=\frac{L_{i,j}-L~L_{i,j}~\rho_{i,j}(k-1)-L~\Delta_{t}~u_{i,j}(k)}{L_{i,j}~h_{i,j}(k)-L~\Delta_{t}}$ and we have proved that both numerator and denominator are positive, so if $h_{i,j}^{(1)}(k)\leq h_{i,j}^{(2)}(k)$, we have $f_{i,j}^{(1)}(k)\geq f_{i,j}^{(2)}(k)$.

\item[*] $\rho_{i,j}^{(1)}(k)$ is in the free-flow state, then the free-flow state of $\rho_{i,j}^{(1)}(k)$ should be feasible, as shown in the following equation.

\begin{equation}
\frac{1}{h_{i,j}^{(2)}(k)~\textit{v}_{i,j}^{f}+L} \leq \rho_{i,j}^{(1)}(k)=\frac{L_{i,j}~\rho_{i,j}^{(2)}(k-1)+\Delta_{t}~u_{i,j}^{(2)}(k)}{L_{i,j}+\Delta_{t}~\textit{v}_{i,j}^{f}} \leq \frac{1}{h_{i,j}^{(1)}(k)~\textit{v}_{i,j}^{f}+L}
\nonumber
\end{equation}

Therefore, for the flow $f_{i,j}^{(1)}(k)$, we have:

\begin{equation}
f_{i,j}^{(1)}(k)=\frac{L_{i,j}~\textit{v}_{i,j}^{f}~\rho_{i,j}^{(2)}(k-1)+\Delta_{t}~\textit{v}_{i,j}^{f}~u_{i,j}^{(2)}(k)}{L_{i,j}+\Delta_{t}~\textit{v}_{i,j}^{f}}\geq \frac{\textit{v}_{i,j}^{f}}{h_{i,j}^{(2)}(k)~\textit{v}_{i,j}^{f}+L} \geq f_{i,j}^{(2)}(k)
\nonumber
\end{equation}

\end{itemize} 
\end{itemize}
Combining the above proofs for the two states, we have $f_{i,j}^{(1)}(k)\geq f_{i,j}^{(2)}(k)$ hold.
\end{proof}

Then we develop Lemma \ref{lemma:2} to prove that the assumption of Lemma \ref{lemma:1} is reachable, which means that when $\textit{h}_{i,j}^{1}(k)\leq\textit{h}_{i,j}^{2}(k)$, we could always find the feasible $u_{i,j}(k)$ and $\rho_{i,j}(k-1)$ for both $\textit{h}_{i,j}^{1}(k)$ and $\textit{h}_{i,j}^{2}(k)$.

\begin{lemma}
For $\textbf{h}_{1}=\left\{\textit{h}_{i,j}^{(1)}(k) \right\}$ and $\textbf{h}_{2}=\left\{\textit{h}_{i,j}^{(2)}(k) \right\}$, if there exist only one $(i,j)$ and $k$ such that $\textit{h}_{i,j}^{(1)}(k) \leq \textit{h}_{i,j}^{(2)}(k)$ and $\textit{h}_{m,n}^{(1)}(l) = \textit{h}_{m,n}^{(2)}(l)$ for other links $(m,n)\neq(i,j)$ or time intervals $l\neq k$, suppose $u_{i,j}^{~ s^{\prime},(2)}(k) \in \textbf{x}_{2}, \forall s^{\prime}$, where $\textbf{x}_{2} \in \Omega_{\textbf{h}_{2}}$, then there always exists  $\textbf{x}_{1} \in \Omega_{\textbf{h}_{1}}$, such that $u_{i,j}^{~ s^{\prime},(1)}(k) \in \textbf{x}_{1}$ and $ u_{i,j}^{~ s^{\prime},(1)}(k)=u_{i,j}^{~ s^{\prime},(2)}(k)$.
\label{lemma:2}
\end{lemma}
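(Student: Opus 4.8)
The plan is to build a feasible $\textbf{x}_1 \in \Omega_{\textbf{h}_1}$ directly from $\textbf{x}_2$ by modifying only the variables attached to link $(i,j)$, exploiting the fact that $\textbf{h}_1$ and $\textbf{h}_2$ differ in a single entry and that reducing that entry only \emph{relaxes} every headway-dependent constraint. First I would set $\textbf{x}_1 = \textbf{x}_2$ on all links $(m,n) \neq (i,j)$ and, on link $(i,j)$ itself, for all time indices $l \leq k-1$. Since the dynamics up to interval $k-1$ on link $(i,j)$ involve only the headways $h_{i,j}(1), \dots, h_{i,j}(k-1)$, which are common to $\textbf{h}_1$ and $\textbf{h}_2$, this portion of $\textbf{x}_1$ inherits feasibility from $\textbf{x}_2$; in particular $\rho_{i,j}(k-1)$ is shared. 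I would then hold the inflow fixed, $u_{i,j}^{s',(1)}(k) := u_{i,j}^{s',(2)}(k)$ for every $s'$, which is exactly the quantity the lemma must reproduce.

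With $u_{i,j}(k)$ and $\rho_{i,j}(k-1)$ now common to both settings, the fundamental-diagram and shockwave equations (\ref{eq:free_dis})--(\ref{eq:shockwave_dis2}) determine $\rho_{i,j}^{(1)}(k)$, $f_{i,j}^{(1)}(k)$ and $n_{i,j}^{w,(1)}(k)$. Lemma~\ref{lemma:1} then supplies the key monotonicity $f_{i,j}^{(1)}(k) \geq f_{i,j}^{(2)}(k)$ and, correspondingly, $\rho_{i,j}^{(1)}(k) \leq \rho_{i,j}^{(2)}(k)$, while (\ref{eq:shockwave_dis1})--(\ref{eq:shockwave_dis2}) give $n_{i,j}^{w,(1)}(k) \leq n_{i,j}^{w,(2)}(k)$. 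I would use these to check that each constraint containing $h_{i,j}(k)$ is loosened: the free-flow threshold $1/(h_{i,j}(k)v_{i,j}^f + L)$ increases and the congested sending rate $(1-\rho_{i,j}(k) L)/h_{i,j}(k)$ increases, so the held inflow remains admissible through the fundamental diagram; and because $n_{i,j}^{w,(1)}(k) \leq n_{i,j}^{w,(2)}(k)$, the upstream queue $q_{i,j}^{\mathcal{U}}(k)$ in (\ref{eq:uq_dis}) can only shrink, so $q_{i,j}^{\mathcal{U}}(k) \leq \bar{Q}_{i,j}^{\mathcal{U}}$ is preserved. This establishes that the inflow value assumed ``given'' in Lemma~\ref{lemma:1} is genuinely reachable under $\textbf{h}_1$.

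It remains to extend the construction to a complete feasible trajectory for the intervals $l > k$ on link $(i,j)$ and to propagate any adjustment through flow conservation (\ref{eq:flow_conser_dis}). Here I would note that the total vehicle count $N_{i,j} = L\rho_{i,j} + q_{i,j}^{\mathcal{D}}$ in (\ref{eq:sum}) is governed by the inflow and outflow alone; the reduced headway merely redistributes mass from the flow area into the buffer. I would then complete $\textbf{x}_1$ by a draining schedule for the vehicles residing in, and still to enter, link $(i,j)$, of the same batch-release type used in Proposition~\ref{prop:feasibility}, which respects the inflow/outflow capacities $\bar{C}_{i,j}^\mu, \bar{C}_{i,j}^\nu$ and clears the network before the horizon.

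The main obstacle I anticipate is precisely this forward propagation: because a smaller headway lowers $\rho_{i,j}$ and thus forces a larger downstream queue $q_{i,j}^{\mathcal{D}}$ for the same total occupancy, I must certify both the buffer-capacity bound $q_{i,j}^{\mathcal{D}}(l) \leq \bar{Q}_{i,j}^{\mathcal{D}}$ for all $l > k$ and the terminal emptiness $q_{i,j}^{\mathcal{D}}(N) = 0$ of (\ref{ineq:end_dis}). The resolution hinges on the fact that every headway-dependent \emph{sending} and \emph{receiving} bound has been relaxed, so the extra boundary flow can be absorbed and ultimately discharged, while the slack in the time horizon guaranteed by Proposition~\ref{prop:feasibility} ensures that the adjusted schedule still empties the network by interval $N$.
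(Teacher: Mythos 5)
Your construction is the same as the paper's --- copy $\textbf{x}_2$ on every other link and on link $(i,j)$ through interval $k-1$, hold $u_{i,j}^{s^{\prime}}(k)$ fixed, then combine Lemma~\ref{lemma:1} with $n_{i,j}^{w,(1)}(k)\leq n_{i,j}^{w,(2)}(k)$ --- but you have omitted the one step that the paper's proof actually consists of. You note that the upstream queue ``can only shrink'' and conclude that the capacity bound $q_{i,j}^{\mathcal{U}}(k)\leq\bar{Q}_{i,j}^{\mathcal{U}}$ survives; that direction is immediate. The constraint the shrink threatens is the \emph{lower} bound $q_{i,j}^{\mathcal{U},s^{\prime}}(k)\geq 0$ in (\ref{ineq:nonnega_dis}): a smaller $n_{i,j}^{w,(1)}(k)$ lengthens the subtracted sum in (\ref{eq:uq_dis}), and if that sum overshoots the cumulative inflow, your candidate point is infeasible and the inflow you ``held fixed'' is not reachable after all. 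The paper's proof is essentially the verification that this cannot happen: using $f_{i,j}^{~s^{\prime},(1)}(l)=f_{i,j}^{~s^{\prime},(2)}(l)$ for $l\leq k-1$ and $n_{i,j}^{w,(1)}(k)\geq 1$ (guaranteed by the standing choice $\Delta_{t}L<\min\{L_{i,j}h_{i,j}^{min}\}$),
\begin{equation}
q_{i,j}^{\mathcal{U},s^{\prime},(1)}(k)\;\geq\;\sum_{l=0}^{k}\Delta_{t}\, u_{i,j}^{~s^{\prime},(2)}(l)-\sum_{l=0}^{k-1}\Delta_{t}\, f_{i,j}^{~s^{\prime},(2)}(l)\;=\;L_{i,j}\,\rho_{i,j}^{s^{\prime},(2)}(k-1)+\Delta_{t}\, u_{i,j}^{~s^{\prime},(2)}(k)\;\geq\;0.
\nonumber
\end{equation}
Without this chain, your second paragraph asserts reachability but does not prove it.

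Your forward continuation for $l>k$ is also on shakier ground than you suggest. The claim that ``every headway-dependent sending and receiving bound has been relaxed, so the extra boundary flow can be absorbed'' does not settle the obstacle you yourself identify: the buffer capacity $\bar{Q}_{i,j}^{\mathcal{D}}$ in (\ref{eq:ineq_1_dis}) is headway-independent, and in the discretized model $f_{i,j}(k)$ is pinned by the equalities in (\ref{eq:free_dis})--(\ref{eq:cong_dis}), so the enlarged boundary flow cannot be throttled by a scheduling choice --- it must be accommodated by larger outflows. Nor can you import the batch-release schedule of Proposition~\ref{prop:feasibility} wholesale, since that construction starts from an empty network with a free choice of release policy, whereas here the prefix of $\textbf{x}_1$ through interval $k$ is already pinned down. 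The paper avoids this entirely: its proof of Lemma~\ref{lemma:2} stops at interval $k$, and the continuation is carried out afterwards, inside the proof of Proposition~\ref{prop:minimum}, by a monotone coupling --- $\rho^{(1)}(l)\leq\rho^{(2)}(l)$ and $F^{(1)}(l)\geq F^{(2)}(l)$ for $l\geq k$, with outflows kept identical for all non-target destinations and the surplus pushed out only toward the destination, link by link --- rather than by re-deriving feasibility from scratch. Replacing your draining-schedule sketch with that coupling (or else proving the absorption claim against $\bar{Q}_{i,j}^{\mathcal{D}}$ explicitly) is what your argument still needs.
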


\begin{proof}
In addition to the flow $f_{i,j}(k)$, headway affects the wave travel time in congested states $n_{i,j}^{w}(k)$ as follows:

\begin{eqnarray}
n_{i,j}^{w}(k) \Delta_{t} \frac{\textit{L}}{\textit{h}_{i,j}(k)}  &\leq& L_{i,j}, 
\nonumber \\
(n_{i,j}^{w}(k)+1) \Delta_{t} \frac{\textit{L}}{\textit{h}_{i,j}(k)}  & \geq  \color{black}& L_{i,j}. 
\nonumber
\end{eqnarray}

To show the existence of a feasible $\textbf{x}_{1}$ such that $u_{i,j}^{~ s^{\prime},(1)}(k)=u_{i,j}^{~ s^{\prime},(2)}(k)$, we need to show that $u_{i,j}^{~ s^{\prime},(2)}(k)$ satisfies the upstream queue capacity under $\textit{h}_{i,j}^{(1)}(k)$ and other related variables in $\textbf{x}_{2}$ if we set $\textit{u}_{i,j}^{~ s^{\prime},(1)}(l)=\textit{u}_{i,j}^{~ s^{\prime},(2)}(l), \forall 1\leq l \leq k-1$. Therefore, we could also derive $\textit{f}_{i,j}^{~ s^{\prime},(1)}(l)=\textit{f}_{i,j}^{~ s^{\prime},(2)}(l), \forall 1\leq l \leq k-1$ by both the proof of Lemma \ref{lemma:1} and the assumption that $\textit{h}_{i,j}^{(1)}(l)=\textit{h}_{i,j}^{(2)}(l), \forall 1\leq l \leq k-1$.

If $h_{i,j}^{(1)}(k)\leq h_{i,j}^{(2)}(k)$, then $n_{i,j}^{w,(1)}(k)\leq n_{i,j}^{w,(2)}(k)$, which implies that if we use a smaller the headway in link $(i,j)$ at the $k_{th}$ time interval, the wave travel time in congested states would be smaller. By Equation \ref{eq:uq_dis}, we have $\textit{q}_{i,j}^{\mathcal{U},s^{\prime},(1)}(k) \leq \textit{q}_{i,j}^{\mathcal{U},s^{\prime},(2)}(k)$. In order to keep $\textit{u}_{i,j}^{~ s^{\prime},(2)}(k)$ unchanged under $h_{i,j}^{(1)}(k)$, we next prove $\textit{q}_{i,j}^{\mathcal{U},s^{\prime},(1)}(k)\geq0$.

\begin{eqnarray}
    \textit{q}_{i,j}^{\mathcal{U},s^{\prime},(1)}(k) &=& \sum_{l=0}^{\textit{k}}\Delta_{t} \textit{u}_{i,j}^{~ s^{\prime},(1)}(l) - \sum_{l=0}^{\textit{k}-n_{i,j}^{w,(1)}(k)}\Delta_{t} \textit{f}_{i,j}^{~ s^{\prime},(1)}(l) 
    \nonumber \\
    &=& \sum_{l=0}^{\textit{k}}\Delta_{t} \textit{u}_{i,j}^{~ s^{\prime},(2)}(l) - \sum_{l=0}^{\textit{k}-n_{i,j}^{w,(1)}(k)}\Delta_{t} \textit{f}_{i,j}^{~ s^{\prime},(1)}(l) 
    \nonumber \\
    &\geq&
    \sum_{l=0}^{\textit{k}}\Delta_{t} \textit{u}_{i,j}^{~ s^{\prime},(2)}(l) - \sum_{l=0}^{\textit{k}-1}\Delta_{t} \textit{f}_{i,j}^{~ s^{\prime},(1)}(l) 
    \nonumber \\
    &=&
    \sum_{l=0}^{\textit{k}}\Delta_{t} \textit{u}_{i,j}^{~ s^{\prime},(2)}(l) - \sum_{l=0}^{\textit{k}-1}\Delta_{t} \textit{f}_{i,j}^{~ s^{\prime},(2)}(t) 
    \nonumber \\
    &=& L_{i,j}\rho_{i,j}^{s^{\prime},(2)}(k-1)+\Delta_{t}\textit{u}_{i,j}^{~ s^{\prime},(2)}(k) \geq 0 
    \nonumber
\end{eqnarray}
\end{proof}




By Lemma \ref{lemma:1} and \ref{lemma:2}, for link $(j,s)$, where $s \in S$,  if $h_{j,s}^{(1)}(k)\leq h_{j,s}^{(2)}(k), h_{j,s}^{(1)}(l) = h_{j,s}^{(2)}(l), \forall l \neq k$  
and we set $u_{j,s}^{(1)}(l)=u_{j,s}^{(2)}(l)$ for all $1 \leq l \leq N$, we have $f_{j,s}^{(1)}(k)\geq f_{j,s}^{(2)}(k)$.
For density, we have the following equation hold:

\begin{equation}
\rho_{j,s}(k)=\rho_{j,s}(k-1)+\Delta_{t}\frac{\textit{u}_{j,s}(k)-\textit{f}_{j,s}(k)}{L_{j,s}}. 
\label{eq:den_app}
\end{equation}

If $\rho_{j,s}^{(1)}(k-1)=\rho_{j,s}^{(2)}(k-1)$, then we have $\rho_{i,j}^{(1)}(k) \leq \rho_{i,j}^{(2)}(k)$. If $\rho_{i,j}^{(2)}(k+1)$ is in the free-flow state, the following equations hold in the proof of Lemma \ref{lemma:1} for $\rho_{j,s}^{(2)}(k+1)$ and $f_{j,s}^{(2)}(k+1)$:

\begin{eqnarray}
 \textit{f}_{j,s}(k) & =&  \frac{L_{j,s}~\textit{v}_{j,s}^{f}~\rho_{j,s}(k-1)+\Delta_{t}~\textit{v}_{j,s}^{f}~u_{j,s}(k)}{L_{j,s}+\Delta_{t}~\textit{v}_{j,s}^{f}} \nonumber\\
 \rho_{j,s}(k) & =&  \frac{L_{j,s}~\rho_{j,s}(k-1)+\Delta_{t}~u_{j,s}(k)}{L_{j,s}+\Delta_{t}~\textit{v}_{j,s}^{f}} \nonumber 
\end{eqnarray}


So we have $f_{i,j}^{(1)}(k+1) \leq f_{i,j}^{(2)}(k+1)$ and $\rho_{i,j}^{(1)}(k+1) \leq \rho_{i,j}^{(2)}(k+1)$ if $\rho_{i,j}^{(2)}(k+1)$ hold  in the free-flow state. 
Then if $\rho_{i,j}^{(2)}(k+1)$ is in the congested state, we have the following equations in the proof of Lemma \ref{lemma:1} hold for $\rho_{j,s}^{(2)}(k+1)$ and $f_{j,s}^{(2)}(k+1)$:

\begin{eqnarray}
   \textit{f}_{j,s}(k) & =&  \frac{L_{j,s}-L~L_{j,s}~\rho_{j,s}(k-1)-L~\Delta_{t}~u_{j,s}(k)}{L_{j,s}~h_{j,s}(k)-L~\Delta_{t}} \nonumber\\
 \rho_{j,s}(k) & =&  \frac{h_{j,s}(k)~L_{j,s}~\rho_{j,s}(k-1)+h_{j,s}(k)~\Delta_{t}~u_{j,s}(k)-\Delta_{t}}{L_{j,s}~h_{j,s}(k)-L~\Delta_{t}}  \nonumber
\end{eqnarray}


Similarly,   $f_{i,j}^{(1)}(k+1) \geq f_{i,j}^{(2)}(k+1)$ and $\rho_{i,j}^{(1)}(k+1) \leq \rho_{i,j}^{(2)}(k+1)$ hold for  $\rho_{i,j}^{(2)}(k+1)$ in the congested state. 

Therefore, we have $\rho_{i,j}^{(1)}(k+1) \leq \rho_{i,j}^{(2)}(k+1)$. We conduct the same procedure for $k+2,\cdots,N$ and $\rho_{i,j}^{(1)}(l) \leq \rho_{i,j}^{(2)}(l)$ is derived for $l \geq k$ and $\rho_{i,j}^{(1)}(l) = \rho_{i,j}^{(2)}(l)$ for $l < k$. Besides, by $F_{j,s}(k)=L_{j,s}\rho_{j,s}(k)+U_{j,s}(k)$, so we have $F_{j,s}^{(1)}(l)=F_{j,s}^{(2)}(l)$ for $l < k$ and $F_{j,s}^{(1)}(l) \geq F_{j,s}^{(2)}(l)$ for $l \geq k$. 

Similarly, we could also get the $\rho_{j,s}^{s_{0}^{\prime},(1)}(l) \leq \rho_{j,s}^{s_{0}^{\prime},(2)}(l)$ for $l \geq k$ and $\rho_{i,j}^{s_{0}^{\prime},(1)}(l) = \rho_{i,j}^{s_{0}^{\prime},(2)}(l)$ for $l < k$ by the same procedure for any destination $s_{0}$. Besides,  for any destination $s_{0}$, we also have $F_{j,s}^{s_{0}^{\prime},(1)}(k)=F_{j,s}^{s_{0}^{\prime},(2)}(k)$ for $l < k$ and $F_{j,s}^{s_{0}^{\prime},(1)}(k) \geq F_{j,s}^{s_{0}^{\prime},(2)}(k)$ for $l \geq k$. 

Therefore, for destination $s_{0} \neq s$, we could set $v_{j,s}^{s_{0}^{\prime},(1)}(l)=v_{j,s}^{s_{0}^{\prime},(2)}(l),\forall 1 \leq l \leq N$. For destination $s$,  we  have $V_{j,s}^{s^{\prime},(1)}(k)=V_{j,s}^{s^{\prime},(2)}(k)$ for $l < k$ and $V_{j,s}^{s^{\prime},(1)}(k) \geq V_{j,s}^{s^{\prime},(2)}(k)$ for $l \geq k$, where $V_{j,s}^{s^{\prime}}(k)=\sum_{l=1}^{k}\Delta_{t}v_{j,s}^{s^{\prime}}(l)$ represents the cumulative outflow with destination 
$s^{\prime}$ of link $(j,s)$ in the $k$th time interval.
Then the total travel time in link $(j,s)$ can be written  as follows:

\begin{eqnarray}
    TTT_{j,s} &=& \sum_{k=1}^{N} \Delta_{t} \sum_{l=1}^{k} \Delta_{t} u_{j,s}(l)-v_{j,s}(l)
    \nonumber \\
    &=& \sum_{k=1}^{N}\sum_{s_{0}} \Delta_{t}\Delta_{t} k\,u_{j,s}^{s_{0}^{\prime}}(k) - \Delta_{t}\Delta_{t} k\,v_{j,s}^{s_{0}^{\prime}}(k) 
    \nonumber \\
    &=& \sum_{k=1}^{N}\sum_{s_{0} \neq s} \left(\Delta_{t}\Delta_{t} k\,u_{j,s}^{s_{0}^{\prime}}(k) - \Delta_{t}\Delta_{t} k\,v_{j,s}^{s_{0}^{\prime}}(k) \right) + \sum_{k=1}^{N} \left(\Delta_{t}\Delta_{t} k\,u_{j,s}^{s^{\prime}}(k) - \Delta_{t}\Delta_{t} k\,v_{j,s}^{s^{\prime}}(k)\right) 
    \nonumber \\
    &=& \sum_{k=1}^{N}\sum_{s_{0} \neq s} \left(\Delta_{t}\Delta_{t} k\,u_{j,s}^{s_{0}^{\prime}}(k) - \Delta_{t}\Delta_{t} k\,v_{j,s}^{s_{0}^{\prime}}(k) \right) + \sum_{k=1}^{N} \Delta_{t}\Delta_{t} k\,u_{j,s}^{s^{\prime}}(k) - \sum_{k=1}^{N} \Delta_{t} V_{j,s}^{s^{\prime}}(k) 
    \nonumber
\end{eqnarray}

Therefore, we have proved that $u_{j,s}^{s_{0}^{\prime},(1)}(l)=u_{j,s}^{s_{0}^{\prime},(2)}(l)$, $v_{j,s}^{s_{0}^{\prime},(1)}(l)=v_{j,s}^{s_{0}^{\prime},(2)}(l)$ for $s_{0}\neq s$ and $1 \leq l \leq N$. For destination $s$, we have $u_{j,s}^{s^{\prime},(1)}(l)=u_{j,s}^{s^{\prime},(2)}(k)$ and $V_{j,s}^{s^{\prime},(1)}(l) \geq V_{j,s}^{s^{\prime},(2)}(l)$ for $1 \leq l \leq N$. Then we have

\begin{equation}
    TTT_{j,s}^{(1)} \leq TTT_{j,s}^{(2)}.
    \nonumber
\end{equation}

The total travel time consists of the total travel time in link $(j,s)$, total travel time in other links and waiting time at origins. The inflow and outflow variables in other links and origins are not affected by the change of headway from $h_{i,j}^{(1)}(k)$ to $h_{i,j}^{(2)}(k)$, so $TTT_{\textbf{h}_{1}} \leq TTT_{\textbf{h}_{2}}$ if $h_{j,s}^{(1)}(k) \leq h_{j,s}^{(2)}(k)$.
For link $(i,j)$, the node ${j}$ acts like the destination node, and we want to push more flow with destination $s$ to leave link $(i,j)$ and enter link $(j,s)$. The process is similar to what we have done for link $(j,s)$ by only changing one headway in link $(i,j)$ and keeping flow variables in other links unchanged. By repeating the above procedure for all the links and time intervals, we can show that TTT could be minimized under the minimum headway.

\section{Proof of Proposition \ref{prop:nonuni}}
\label{app:nonuniq}

Suppose $\textbf{x}^{*}$ and $\textbf{h}^{*}$ are the optimal solution of the proposed SO-DTA problem. If $\exists~ (i,j) \in E\setminus(L_R\cup L_S) $ and $1 \leq k \leq N$, $s.t. ~ \textit{v}_{i,j}^{f}{~} \rho_{i,j}^{*}(k)< \frac{1-\rho_{i,j}^{*}(k)\emph{L}}{\textit{h}_{i,j}^{*}(k)}$,  then $\textbf{h}^{*}$ and $\textbf{x}^{*}$ should satisfy the Equations \ref{eq:shockwave_dis1}, \ref{eq:shockwave_dis2} and \ref{eq:app_eq} as follows.

\begin{equation}
    \rho_{i,j}^{*}(k)  \leq  \frac{1}{\textit{h}_{i,j}^{*}(k)\textit{v}_{i,j}^f+\textit{L}} \label{eq:app_eq}
\end{equation}

If $h_{i,j;k}^{min} \leq \textit{h}_{i,j}^{*}(k) < \textit{h}_{i,j;k}^{max}$, we could always find an extreme small constant $c$ such that $c\textit{h}_{i,j}^{*}(k)$ also satisfy the Equations \ref{eq:shockwave_dis1}, \ref{eq:shockwave_dis2} and \ref{eq:app_eq} such that we find another optimal headway. If $\textit{h}_{i,j}^{*}(k)=\textit{h}_{i,j;k}^{max}$, by Proposition \ref{prop:minimum}, minimum headway setting achieves SO-DTA, so we show that SO-DTA can have at least two optimal headway that achieves the same TTT.




\section{Proof of Proposition \ref{prop:unique}}
\label{app:unique}

Suppose $~\textbf{x}^{*}$ with $TTT(\textbf{x}^{*})=TTT^{*}$ is unique, if $\hat{\textbf{h}}$ and $\bar{\textbf{h}}$ are different optimal solutions of Formulation \ref{eq:maximin}, then there must exist $(i_{1},j_{1}), (i_{2},j_{2}) \in E $ and $ 1\leq k_{1}, k_{2} \leq N$ such that $\hat{h}_{i_{1},j_{1}}(k_{1}) < \bar{h}_{i_{1},j_{1}}(k_{1})$ and $\hat{h}_{i_{2},j_{2}}(k_{2}) > \bar{h}_{i_{2},j_{2}}(k_{2})$. By the uniqueness of $~\textbf{x}^{*}$, we have $\textit{v}_{i_{1},j_{1}}^{f}{~} \rho_{i_{1},j_{1}}^{*}(k_{1}) < \frac{1-\rho_{i_{1},j_{1}}^{*}(k_{1})\emph{L}}{\textit{h}_{i_{1},j_{1}}^{*}(k_{1})}$ and $\textit{v}_{i_{2},j_{2}}^{f}{~} \rho_{i_{2},j_{2}}^{*}(k_{2}) < \frac{1-\rho_{i_{2},j_{2}}^{*}(k_{2})\emph{L}}{\textit{h}_{i_{2},j_{2}}^{*}(k_{2})}$.

Therefore, if we construct a new headway $\textbf{h}$ as follows:

\begin{eqnarray}
    \textit{h}_{i,j}(k) = \begin{cases}
\hat{h}_{i,j}(k), & (i,j)\neq (i_{1},j_{1})~or~k\neq k_{1}\\
\bar{h}_{i,j}(k), & (i,j) = (i_{1},j_{1})~and~k = k_{1}
\end{cases}
\end{eqnarray}

We note that $\textbf{h}$ is the optimal headway of the discretized SO-DTA formulation (Equations \ref{eq:obj_dis} to \ref{ineq:headway_dis}) as $\textbf{x}^{*} \in \Omega_{\textbf{h}}$. However, we also have $||~ \textbf{h}~ ||_{1} > ||~ \hat{\textbf{h}}~ ||_{1}$, and this contradicts to the assumption that $\hat{\textbf{h}}$ is the solution of Formulation \ref{eq:maximin}, which completes the proof.


\section{Settings of the small network}
  \label{app:parameter}
  For simplicity, the inflow capacity is set equal to the outflow capacity and the downstream queue capacity is set equal to the upstream queue capacity in each link. 
  The detailed configurations of each link are presented in Table~\ref{table1}.
\begin{table}[H]
	\centering
     \resizebox{1.02\textwidth}{!}{
	\begin{tabular}{c c c c c}  
		\hline  
		& & & &\\[-6pt] 
		Link & Flow capacity (vehicles/min) & Queue capacity (vehicles) & Free flow speed (km/min) & Link length (km)\\  
		\hline
		& & & &\\[-6pt]  
		1 $\to$ 3 & 50 & 600 & 0.9 & 1.6 \\
        \hline
        & & & &\\[-6pt] 
        1 $\to$ 4 & 45 & 600 & 1.0 & 1.2 \\
        \hline
        & & & &\\[-6pt] 
        2 $\to$ 3 & 45 & 600 & 1.2 & 3.6 \\
		\hline
        & & & &\\[-6pt] 
        2 $\to$ 4 & 50 & 600 & 1.1 & 3.3 \\
		\hline
        & & & &\\[-6pt]
        3 $\to$ 5 & 60 & 600 & 1.0 & 4.0 \\
		\hline
        & & & &\\[-6pt]
        4 $\to$ 5 & 60 & 600 & 1.0 & 3.0 \\
		\hline
	\end{tabular}
 }
 \caption{Parameters settings in the small network.}  
	\label{table1} 
\end{table}

The settings of minimum headway and maximum headway for each link are illustrated in Table \ref{table:minimum_headway} and Table \ref{table:maximum_headway}, respectively.

\begin{table}[htbp]
	\centering
    \scalebox{0.9}{
     \resizebox{0.85\textwidth}{!}{
	\begin{tabular}{c c c c c c c}  
		\hline   
		\diagbox{Time Interval}{Link} & 1 $\to$ 3 & 1 $\to$ 4 & 2 $\to$ 3 & 2 $\to$ 4 & 3 $\to$ 5 & 4 $\to$ 5 \\  
		\hline
		1  & 0.20s & 0.20s & 0.20s & 0.20s & 0.20s & 0.20s \\
        \hline
        2 & 0.20s & 0.20s & 0.20s & 0.20s & 0.20s & 0.20s \\
        \hline
        3 & 0.20s & 0.20s & 0.20s & 0.20s & 0.20s & 0.20s \\
		\hline
        4 & 0.75s & 0.60s & 0.45s & 0.55s & 0.50s & 0.50s \\
		\hline
        5 & 0.75s & 0.60s & 0.45s & 0.55s & 0.50s & 0.50s \\
		\hline
        6 & 0.75s & 0.60s & 0.45s & 0.55s & 0.50s & 0.50s \\
        \hline
        7 & 0.75s & 0.60s & 0.45s & 0.55s & 0.50s & 0.50s \\
        \hline
        8 & 1.05s & 1.15s & 0.85s & 0.75s &	0.80s & 0.90s \\
		\hline
        9 & 1.05s & 1.15s & 0.85s & 0.75s &	0.80s & 0.90s \\
        \hline
        10 & 1.05s & 1.15s & 0.85s & 0.75s & 0.80s & 0.90s \\
        \hline
        11 & 1.05s & 1.15s & 0.85s & 0.75s & 0.80s & 0.90s \\
        \hline
        12 & 1.25s & 1.35s & 1.10s & 0.95s & 1.00s & 1.20s \\
        \hline
        13 & 1.25s & 1.35s & 1.10s & 0.95s & 1.00s & 1.20s \\
        \hline
        14 & 1.25s & 1.35s & 1.10s & 0.95s & 1.00s & 1.20s \\
        \hline
        15 & 1.25s & 1.35s & 1.10s & 0.95s & 1.00s & 1.20s \\
        \hline
        16 & 1.55s & 1.65s & 1.30s & 1.25s & 1.25s & 1.30s \\
        \hline
        17 & 1.55s & 1.65s & 1.30s & 1.25s & 1.25s & 1.30s \\
        \hline
        18 & 1.55s & 1.65s & 1.30s & 1.25s & 1.25s & 1.30s \\
        \hline
	\end{tabular}
 }}
 \caption{Settings of minimum headway.}  
	\label{table:minimum_headway} 
\end{table}

\begin{table}[htbp]
	\centering
    \scalebox{0.9}{
     \resizebox{0.85\textwidth}{!}{
	\begin{tabular}{c c c c c c c}  
		\hline   
		\diagbox{Time Interval}{Link} & 1 $\to$ 3 & 1 $\to$ 4 & 2 $\to$ 3 & 2 $\to$ 4 & 3 $\to$ 5 & 4 $\to$ 5 \\  
		\hline
		1 & 1.80s & 1.80s & 1.80s & 1.80s & 1.80s & 1.80s \\
        \hline
        2 & 1.80s & 1.80s & 1.80s & 1.80s & 1.80s & 1.80s \\
        \hline
        3 & 1.80s & 1.80s & 1.80s & 1.80s & 1.80s & 1.80s \\
		\hline
        4 & 2.20s & 2.10s & 2.45s & 2.20s & 2.15s & 2.20s \\
		\hline
        5 & 2.20s & 2.10s & 2.45s & 2.20s & 2.15s & 2.20s \\
		\hline
        6 & 2.20s & 2.10s & 2.45s & 2.20s & 2.15s & 2.20s \\
        \hline
        7 & 2.20s & 2.10s & 2.45s & 2.20s & 2.15s & 2.20s \\
        \hline
        8 & 2.55s & 2.45s & 2.90s & 2.75s &	2.40s & 2.65s \\
		\hline
        9 & 2.55s & 2.45s & 2.90s & 2.75s &	2.40s & 2.65s \\
        \hline
        10 & 2.55s & 2.45s & 2.90s & 2.75s & 2.40s & 2.65s \\
        \hline
        11 & 2.55s & 2.45s & 2.90s & 2.75s & 2.40s & 2.65s \\
        \hline
        12 & 3.05s & 2.95s & 3.50s & 3.15s & 2.90s & 3.05s \\
        \hline
        13 & 3.05s & 2.95s & 3.50s & 3.15s & 2.90s & 3.05s \\
        \hline
        14 & 3.05s & 2.95s & 3.50s & 3.15s & 2.90s & 3.05s \\
        \hline
        15 & 3.05s & 2.95s & 3.50s & 3.15s & 2.90s & 3.05s \\
        \hline
        16 & 3.55s & 3.30s & 3.85s & 3.65s & 3.25s & 3.35s \\
        \hline
        17 & 3.55s & 3.30s & 3.85s & 3.65s & 3.25s & 3.35s \\
        \hline
        18 & 3.55s & 3.30s & 3.85s & 3.65s & 3.25s & 3.35s \\
        \hline
	\end{tabular}
 }}
 \caption{Results of the maximum headway.}  
	\label{table:maximum_headway} 
\end{table}

\section{Settings of the Hong Kong network}
\label{app:hk}

We introduce the geographical positioning of each origin-destination (O-D) pair and detail their mean demand profiles within the first hour, as showcased in Figure \ref{fig:demand}. The color saturation reflects the magnitude of the O-D demand, with a darker hue denoting a greater number of trips between the corresponding O-D pairs. Details of the link parameters are presented in Table \ref{table:hk_app}.

\begin{figure}[H]
    \centering
    \includegraphics[width=0.96\linewidth]{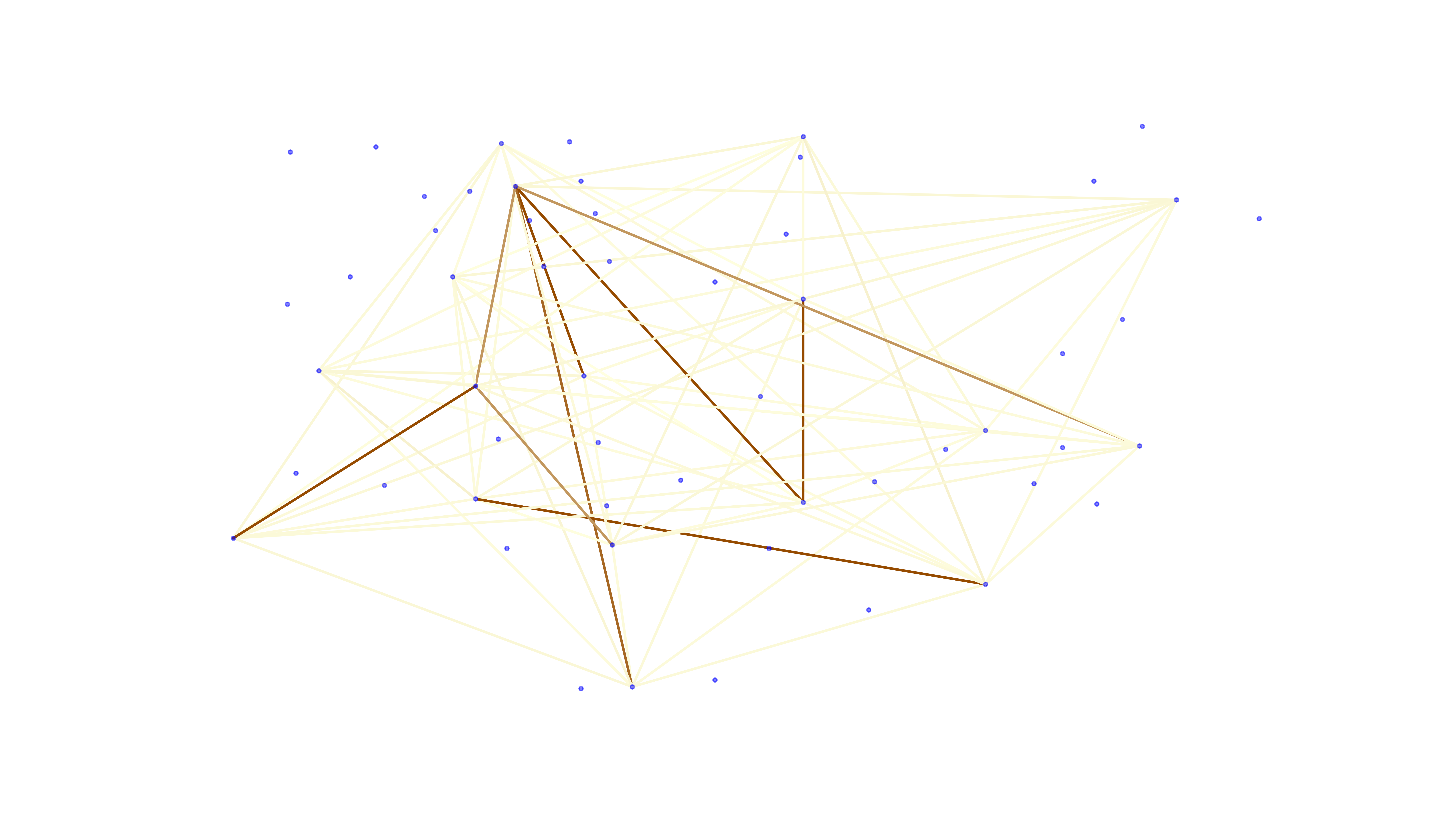}
    \caption{The hour O-D demand profile in the Hong Kong network. A darker color indicates more trips between the corresponding O-D pair.}
    \label{fig:demand}
\end{figure}

{\scriptsize 
{\begin{longtable}[H]{p{2.62cm} p{0.842cm} p{0.842cm} p{0.842cm} p{0.842cm} p{0.842cm} p{0.842cm} p{0.842cm} p{0.842cm} p{0.842cm} p{0.842cm} p{0.842cm}}
		\hline  
		\diagbox{Parameters}{Link} & 19 $\to$ 6 & 6 $\to$ 19 & 6 $\to$ 7 & 7 $\to$ 6 & 7 $\to$ 8 & 8 $\to$ 7 & 8 $\to$ 9 & 9 $\to$ 8 & 9 $\to$ 5 & 5 $\to$ 9 & 5 $\to$ 27  \\  
		\hline
		Flow capacity (veh/min) & 285 & 465 & 465 & 450 & 505 & 190  & 465 & 475 &195 & 465 & 505  \\
        \hline
        Queue capacity (veh) & 4200.0 & 1400.0 & 1300.0 & 2800.0 & 1583.0 & 1300.0 & 3600.0 &	1300.0 & 2600.0 & 3900.0 & 5200.0 \\
        \hline
        Free flow speed (km/min) & 0.232 & 0.232 & 0.37 & 0.463 & 0.231 &	0.232 &	0.463 &	0.37 & 0.463 & 0.232 & 0.232  \\
		\hline
        Link length (km) & 0.311 & 0.311 & 0.454 & 0.454 & 0.248 & 0.248 &	0.845 & 0.845 & 0.133 &	0.133 & 1.252 \\
        \hline
        \diagbox{Parameters}{Link} & 9 $\to$ 27 & 27 $\to$ 9 & 1 $\to$ 19 & 19 $\to$ 1 & 1 $\to$ 2 & 2 $\to$ 1 & 1 $\to$ 6 & 6 $\to$ 1 & 2 $\to$ 6 & 6 $\to$ 2 & 2 $\to$ 3  \\  
		\hline
		Flow capacity (veh/min) & 285 & 195 & 475 & 475 & 505 & 505 & 190 & 190 & 505 & 195 & 195  \\
        \hline
        Queue capacity (veh) & 3900.0&2600.0&	1400.0&	5600.0&	4200.0&	3166.0&	1583.0&	3166.0&	1583.0&	2800.0&	2800.0\\
        \hline
        Free flow speed (km/min) & 0.232&0.231&	0.370& 0.370&	0.232&	0.463&	0.231&	0.370 &	0.232&	0.463&	0.232  \\
		\hline
        Link length (km) & 1.228& 1.228& 0.563&	0.563&	0.168&	0.168&	0.366&	0.366&	0.446&	0.446&	0.168\\
		\hline
		\diagbox{Parameters}{Link} & 3 $\to$ 4 & 4 $\to$ 3 & 3 $\to$ 7 & 7 $\to$ 3 & 4 $\to$ 5 & 5 $\to$ 4 & 1 $\to$ 10 & 10 $\to$ 1 & 10 $\to$ 13 & 13 $\to$ 10 & 13 $\to$ 20   \\  
		\hline
		Flow capacity (veh/min) & 195 & 465 & 195 & 465 & 475 & 380 &	465	& 380	& 380	& 190	& 380	 \\
        \hline
        Queue capacity (veh) & 2800.0 & 1300.0 &  4200.0 &	3166.0 &	1583.0&	1583.0&	4200.0&	2800.0&	3166.0&	2600.0&	2800.0 \\
        \hline
        Free flow speed (km/min) & 0.232& 0.232& 0.463&	0.463&	0.463&	0.370&	0.463&	0.370&	0.232&	0.185&	0.463 \\
		\hline
        Link length (km) & 0.239&	0.239&	0.282&	0.282&	0.808&	0.808&	0.225&	0.225&	0.305&	0.305&	0.371 \\
		\hline
        \diagbox{Parameters}{Link} & 19 $\to$ 20 & 20 $\to$ 19 & 10 $\to$ 11 & 11 $\to$ 10 & 11 $\to$ 12 & 12 $\to$ 11 & 2 $\to$ 11 & 11 $\to$ 2 & 3 $\to$ 11 & 11 $\to$ 3 & 4 $\to$ 12   \\  
		\hline
		Flow capacity (veh/min) & 285&475&	505& 450& 380&	285&	450&	450&	190&	380&	475 \\
        \hline
        Queue capacity (veh) & 3166.0&	4200.0&	2600.0&	2250.0&	3166.0&	1583.0&	1583.0&	1583.0&	1583.0&	3166.0&	1583.0 \\
        \hline
        Free flow speed (km/min) & 0.232&	0.370&	0.370&	0.185&	0.463&	0.463&	0.231&	0.232&	0.463&	0.185&	0.370 \\
		\hline
        Link length (km) & 0.837& 0.837& 0.346&	0.346&	0.241&	0.241&	0.287&	0.287&	0.227&	0.227&	0.217 \\
		\hline
        \diagbox{Parameters}{Link} & 12 $\to$ 52 & 52 $\to$ 12 & 14 $\to$ 52 & 52 $\to$ 14 & 11 $\to$ 14 & 14 $\to$ 11 & 22 $\to$ 23 & 23 $\to$ 22 & 23 $\to$ 15 & 15 $\to$ 23 & 5 $\to$ 15   \\  
		\hline
		Flow capacity (veh/min) & 195&	465& 505& 195& 450&	465& 380&	285&	380&	380&	380 \\
        \hline
        Queue capacity (veh) & 1583.0&	4200.0&	1583.0&	2800.0&	4200.0&	1400.0&	2800.0&	4200.0&	1583.0&	1300.0&	3166.0 \\
        \hline
        Free flow speed (km/min) & 0.463& 0.232& 0.463&	0.463&	0.463&	0.463&	0.324&	0.232&	0.232&	0.370&	0.463 \\
		\hline
        Link length (km) & 0.314& 0.314& 0.239&	0.239&	0.303&	0.303&	0.338&	0.338&	0.403&	0.403&	0.501 \\
		\hline
        \diagbox{Parameters}{Link} & 18 $\to$ 21 & 21 $\to$ 18 & 16 $\to$ 17 & 17 $\to$ 16 & 16 $\to$ 42 & 42 $\to$ 16 & 31 $\to$ 32 & 32 $\to$ 31 & 31 $\to$ 25 & 25 $\to$ 31 & 20 $\to$ 41   \\  
		\hline
		Flow capacity (veh/min) & 190 & 380& 190&	475&	505&	475&	505&	190&	450&	450&	475 \\
        \hline
        Queue capacity (veh) & 4200.0&	3166.0&	1583.0&	1300.0&	1583.0&	6332.0&	6332.0&	1400.0&	3166.0&	1583.0&	1300.0 \\
        \hline
        Free flow speed (km/min) & 0.231&	0.370&	0.370&	0.232&	0.185&	0.463&	0.463&	0.232&	0.463&	0.463&	0.232 \\
		\hline
        Link length (km) & 0.361&	0.361&	0.397&	0.397&	0.575&	0.575&	0.382&	0.382&	0.341&	0.341&	0.288 \\
		\hline
        \diagbox{Parameters}{Link} & 41 $\to$ 42 & 42 $\to$ 41 & 42 $\to$ 43 & 43 $\to$ 42 & 43 $\to$ 45 & 45 $\to$ 43 & 43 $\to$ 44 & 44 $\to$ 43 & 43 $\to$ 41 & 41 $\to$ 43 & 31 $\to$ 36   \\  
		\hline
		Flow capacity (veh/min) & 475&	285&	195&	190&	505&	190&	380&	190&	285&	380&	190 \\
        \hline
        Queue capacity (veh) & 2800.0&	2800.0&	1400.0&	1583.0&	2800.0&	2600.0&	1583.0&	4200.0&	2800.0&	3166.0&	1300.0 \\
        \hline
        Free flow speed (km/min) & 0.463&	0.370&	0.463&	0.463&	0.324&	0.370&	0.463&	0.232&	0.370&	0.231&	0.231  \\
		\hline
        Link length (km) & 0.446&	0.446&	0.670&	0.670&	0.478&	0.478&	0.329&	0.329&	1.097&	1.097&	0.947 \\
		\hline
        \diagbox{Parameters}{Link} & 35 $\to$ 36 & 36 $\to$ 35 & 32 $\to$ 35 & 35 $\to$ 32 & 32 $\to$ 26 & 26 $\to$ 32 & 25 $\to$ 26 & 26 $\to$ 25 & 38 $\to$ 48 & 48 $\to$ 38 & 48 $\to$ 47   \\  
		\hline
		Flow capacity (veh/min) & 475&	380&	190&	285&	505&	380&	195&	190&	505&	380&	475 \\
        \hline
        Queue capacity (veh) & 1583.0&	1400.0&	4749.0&	1583.0&	3166.0&	1300.0&	2800.0&	1583.0&	3166.0&	1300.0&	3600.0 \\
        \hline
        Free flow speed (km/min) & 0.370&	0.463&	0.232&	0.463&	0.463&	0.463&	0.232&	0.370&	0.463&	0.232&	0.463 \\
		\hline
        Link length (km) & 0.186&	0.186&	0.922&	0.922&	0.256&	0.256&	0.476&	0.476&	0.323&	0.323&	0.323 \\
		\hline
        \diagbox{Parameters}{Link} & 27 $\to$ 38 & 38 $\to$ 27 & 15 $\to$ 27 & 27 $\to$ 15 & 13 $\to$ 14 & 14 $\to$ 13 & 13 $\to$ 16 & 16 $\to$ 13 & 39 $\to$ 40 & 40 $\to$ 39 & 14 $\to$ 17   \\  
		\hline
		Flow capacity (veh/min) & 450&	190& 450&	195&	465&	195&	190&	285&	195&	465&	190 \\
        \hline
        Queue capacity (veh) & 4200.0&	2600.0&	6332.0&	4749.0&	2800.0&	2800.0&	5600.0&	2600.0&	1583.0&	2800.0&	4200.0 \\
        \hline
        Free flow speed (km/min) & 0.231&	0.232&	0.231&	0.463&	0.324&	0.463&	0.232&	0.232&	0.463&	0.463&	0.232 \\
		\hline
        Link length (km) & 0.395&	0.395&	1.465&	1.465&	0.336&	0.336&	0.713&	0.713&	0.407&	0.407&	0.723 \\
		\hline
        \diagbox{Parameters}{Link} & 17 $\to$ 23 & 23 $\to$ 17 & 15 $\to$ 52 & 52 $\to$ 15 & 22 $\to$ 24 & 24 $\to$ 22 & 15 $\to$ 22 & 22 $\to$ 15 & 22 $\to$ 30 & 30 $\to$ 22 & 29 $\to$ 37   \\  
		\hline
		Flow capacity (veh/min) & 285&	450&	190&	285&	195&	190&	465&	190&	190&	465&	190 \\
        \hline
        Queue capacity (veh) & 2800.0&	5200.0&	2800.0&	4200.0&	2800.0&	3166.0&	1300.0&	2400.0&	4749.0&	2800.0&	1300.0 \\
        \hline
        Free flow speed (km/min) & 0.463&	0.463&	0.463&	0.463&	0.463&	0.463&	0.232&	0.463&	0.370&	0.463&	0.232 \\
		\hline
        Link length (km) & 0.772&	0.772&	0.663&	0.663&	0.650&	0.650&	0.425&	0.425&	1.078&	1.078&	0.290 \\
		\hline
        \diagbox{Parameters}{Link} & 24 $\to$ 29 & 29 $\to$ 24 & 18 $\to$ 51 & 51 $\to$ 18 & 37 $\to$ 51 & 51 $\to$ 37 & 34 $\to$ 46 & 46 $\to$ 34 & 34 $\to$ 35 & 35 $\to$ 34 & 33 $\to$ 34   \\  
		\hline
		Flow capacity (veh/min) & 380&	195&	450&	195&	450&	475&	380&	380&	380&	465&	505 \\
        \hline
        Queue capacity (veh) & 4200.0&	3166.0&	2800.0&	4200.0&	1583.0&	4200.0&	3900.0&	2600.0&	4749.0&	2800.0&	1400.0 \\
        \hline
        Free flow speed (km/min) & 0.463&	0.370&	0.463&	0.370&	0.232&	0.232&	0.463&	0.463&	0.185&	0.324&	0.463 \\
		\hline
        Link length (km) & 0.690&	0.690&	0.386&	0.386&	0.466&	0.466&	0.718&	0.718&	0.302&	0.302&	0.875 \\
		\hline
        \diagbox{Parameters}{Link} & 32 $\to$ 33 & 33 $\to$ 32 & 40 $\to$ 53 & 53 $\to$ 40 & 30 $\to$ 49 & 49 $\to$ 30 & 38 $\to$ 28 & 28 $\to$ 38 & 28 $\to$ 30 & 30 $\to$ 28 & 28 $\to$ 54   \\  
		\hline
		Flow capacity (veh/min) & 285&	285&	285&	285&	190&	380&	450&	195&	190&	475&	505 \\
        \hline
        Queue capacity (veh) & 2600.0&	3166.0&	2800.0&	4200.0&	4200.0&	3600.0&	2800.0&	2800.0&	4749.0&	1583.0&	1300.0 \\
        \hline
        Free flow speed (km/min) & 0.232&	0.463&	0.463&	0.232&	0.232&	0.232&	0.463&	0.463&	0.463&	0.463&	0.463 \\
		\hline
        Link length (km) & 0.567&	0.567&	0.658&	0.658&	0.299&	0.299&	1.124&	1.124&	0.571&	0.571&	0.310 \\
		\hline
        \diagbox{Parameters}{Link} & 54 $\to$ 47 & 47 $\to$ 54 & 28 $\to$ 48 & 48 $\to$ 28 & 39 $\to$ 54 & 54 $\to$ 39 & 46 $\to$ 53 & 53 $\to$ 46 & 46 $\to$ 29 & 29 $\to$ 46 & 29 $\to$ 53   \\  
		\hline
		Flow capacity (veh/min) & 285&	195&	505&	285&	380&	475&	195&	380&	465&	465&	380 \\
        \hline
        Queue capacity (veh) & 1583.0&	1300.0&	1583.0&	1583.0&	3166.0&	1583.0&	5600.0&	2800.0&	1300.0&	2600.0&	2800.0 \\
        \hline
        Free flow speed (km/min) & 0.324&	0.463&	0.463&	0.185&	0.232&	0.370&	0.232&	0.463&	0.463&	0.370&	0.232 \\
		\hline
        Link length (km) & 0.819&	0.819&	1.078&	1.078&	0.822&	0.822&	0.454&	0.454&	0.831&	0.831&	0.776 \\
		\hline
        \diagbox{Parameters}{Link} & 29 $\to$ 55 & 55 $\to$ 29 & 30 $\to$ 55 & 55 $\to$ 30 & 50 $\to$ 55 & 55 $\to$ 50 & 40 $\to$ 50 & 50 $\to$ 40 & 33 $\to$ 37 & 37 $\to$ 33 & 33 $\to$ 46   \\  
		\hline
		Flow capacity (veh/min) & 190&	195&	450&	190&	195&	380&	380&	450&	380&	465&	380 \\
        \hline
        Queue capacity (veh) & 1300.0&	2600.0&	1300.0&	5600.0&	2600.0&	4200.0&	1300.0&	4749.0&	4200.0&	1300.0&	1400.0 \\
        \hline
        Free flow speed (km/min) & 0.370&	0.231&	0.231&	0.463&	0.463&	0.463&	0.231&	0.231&	0.463&	0.370&	0.324 \\
		\hline
        Link length (km) & 0.333&	0.333&	0.189&	0.189&	0.389&	0.389&	0.263&	0.263&	0.324&	0.324&	0.538 \\
		\hline
        \diagbox{Parameters}{Link} & 21 $\to$ 25 & 25 $\to$ 21 & 18 $\to$ 26 & 26 $\to$ 18 & 25 $\to$ 44 & 44 $\to$ 25 & 42 $\to$ 44 & 44 $\to$ 42 & 31 $\to$ 45 & 45 $\to$ 31 & 49 $\to$ 50   \\  
		\hline
		Flow capacity (veh/min) & 190&	450&	475&	475&	285&	380&	195&	465&	475&	190&	285 \\
        \hline
        Queue capacity (veh) & 4200.0&	1583.0&	1583.0&	2800.0&	1583.0&	5600.0&	3166.0&	1583.0&	2800.0&	5200.0&	1583.0 \\
        \hline
        Free flow speed (km/min) & 0.232&	0.370&	0.463&	0.324&	0.463&	0.324&	0.370&	0.370&	0.463&	0.463&	0.232 \\
		\hline
        Link length (km) & 0.396&	0.396&	0.411&	0.411&	0.341&	0.341&	0.779&	0.779&	0.991&	0.991&	0.254 \\
		\hline
        \diagbox{Parameters}{Link} & 26 $\to$ 51 & 51 $\to$ 26 & 17 $\to$ 18 & 18 $\to$ 17 & 16 $\to$ 21 & 21 $\to$ 16 & 24 $\to$ 37 & 37 $\to$ 24 & 39 $\to$ 49 & 49 $\to$ 39 & 49 $\to$ 28   \\  
		\hline
		Flow capacity (veh/min) & 505&	285&	285&	450&	505&	475&	190&	380&	505&	380&	505 \\
        \hline
        Queue capacity (veh) & 1300.0&	4200.0&	5200.0&	4200.0&	2600.0&	4200.0&	1583.0&	1583.0&	2800.0&	2800.0&	1300.0 \\
        \hline
        Free flow speed (km/min) & 0.463&	0.231&	0.463&	0.370&	0.370&	0.370&	0.370&	0.370&	0.370&	0.231&	0.232 \\
		\hline
        Link length (km) & 0.315&	0.315&	0.435&	0.435&	0.353&	0.353&	0.704&	0.704&	0.278&	0.278&	0.609 \\
		\hline
        \diagbox{Parameters}{Link} & 27 $\to$ 5 & 3 $\to$ 2 & 20 $\to$ 13 & 12 $\to$ 4 & 15 $\to$ 5 & 41 $\to$ 20 & 36 $\to$ 31 & 47 $\to$ 48 & 17 $\to$ 14 
        & 37 $\to$ 29 & 34 $\to$ 33  \\  
		\hline
		Flow capacity (veh/min) & 475&	195 &	285 &	380 &	450 & 285 & 475	& 380 & 475
        & 475 & 285\\
        \hline
        Queue capacity (veh) & 1583.0 &	3166.0&	2600.0&	1583.0&	2600.0&	3166.0 & 1583.0
        & 3166.0 & 2600.0 & 1583.0 & 2800.0\\
        \hline
        Free flow speed (km/min) & 0.370 &	0.370 &	0.232 &	0.370&	0.232&	0.370& 0.232 
        & 0.463 & 0.232 & 0.370 & 0.463\\
		\hline
        Link length (km) & 1.252& 0.168 & 0.371 & 0.217& 0.501 & 0.288 & 0.947 & 0.323 & 0.723  & 0.290 & 0.875 \\
		\hline  
       \diagbox{Parameters}{Link} & 54 $\to$ 28 & 53 $\to$ 29 & 46 $\to$ 33 & 50 $\to$ 49
       & 28 $\to$ 49\\  
		\hline
		Flow capacity (veh/min) & 505 & 380 & 195 & 505 & 195 & & & & & & \\
        \hline
        Queue capacity (veh) & 2800.0 & 1400.0 & 2800.0 & 5600.0 & 1583.0 & & & & & &\\
        \hline
        Free flow speed (km/min) & 0.232 & 0.232 & 0.370 & 0.463 & 0.185 & & & & & &\\
		\hline
        Link length (km) & 0.310 & 0.776 & 0.538 & 0.254 & 0.609 & & & & & &\\
		\hline
   \caption{Link parameters in the Hong Kong network.}  
	\label{table:hk_app} 
\end{longtable}}

\section{Online algorithm}
\label{app:online}

\normalsize In this section, we propose an online algorithm to solve the SO-DTA problem in real-time. 
Suppose $\textbf{x}_{i,j}$ and $\textbf{h}_{i,j}$ denote the flow dynamics and headway variables from the $i$th to $j$th time intervals, respectively. Then we have $\textbf{x}_{i,j}=\mathop{\cup}\limits_{k=i}^{j}\textbf{x(k)}$ and $\textbf{h}_{i,j}=\mathop{\cup}\limits_{k=i}^{j}\textbf{h(k)}$, where $\textbf{x(k)}$ and $\textbf{h(k)}$ are the flow dynamics and headway variables in the $k$th time interval. Besides, $TTT_{k}$ denotes the total travel time in the $k$th time interval. 

In a real-time headway control framework, the future O-D demand is unclear at the current time. In order to obtain an effective real-time headway, it is necessary to estimate the future O-D demand. Besides, many studies are focusing on the estimation of O-D demand \citep{zhang2021short,ke2021predicting,bhattacharjee2001modeling,xiong2020dynamic}. And we use  $\textbf{d}_{i,j}$ and $\widetilde{\textbf{d}}_{i,j}$ to represent the real and estimated demand from the $i$th to $j$th time intervals, respectively.

Then the structure of the online algorithm is illustrated in the Algorithm \ref{algo:online}. $t_{0}$ and $t_{n}$ are the beginning and end of the time slices of headway control in the real world, such as 24 hours or 7 days. We divide the total time of controlling into many time slices. In the current time slice, we would predict the demand in the next time slice. Then we obtain the real-time headway control of all time intervals in the current time slice. 

\begin{breakablealgorithm}
\caption{Online version of Algorithm 1}
{\textbf{Input:} parameters in Table \ref{notation_con}};  $t_{1},\cdots,t_{n}$\qquad\qquad\qquad\qquad\qquad\qquad\qquad~~~~
\begin{algorithmic}[1]
\FOR{$i=1$ \TO $n$}

\STATE 1. Predict the future demand $\widetilde{\textbf{d}}_{t_{i-1},t_{i}}$ from the $i-1$ to $i$th time slices. Then we solve the linear programming in Equation \ref{eq:online_1} under minimum headway and obtain the $\textbf{x}_{t_{i},t_{i+1}}^{*}$.
\begin{eqnarray}
     &\min \limits_{\textbf{x}_{t_{i},t_{i+1}}} ~~  \mathop{\sum}\limits_{k=t_{i}}^{t_{i+1}}TTT_{k}
     \label{eq:online_1}\\
    & \, s.t.  \quad \textbf{A}(\textbf{h}^{min},\widetilde{\textbf{d}}_{t_{i},t_{i+1}})\textbf{x}_{t_{i},t_{i+1}}  =  \textbf{B}(\textbf{h}^{min},\widetilde{\textbf{d}}_{t_{i},t_{i+1}}) \nonumber \\
    & \quad \quad ~~ \textbf{C}(\textbf{h}^{min},\widetilde{\textbf{d}}_{t_{i},t_{i+1}})\textbf{x}_{t_{i},t_{i+1}} \leq \textbf{D}(\textbf{h}^{min},\widetilde{\textbf{d}}_{t_{i},t_{i+1}}) \nonumber
\end{eqnarray}

\FOR{$k=t_{i-1}$ \TO $t_{i}$}
\STATE 2. Obtain the real-time maximin headway control $\textbf{h}^{*}(k)$ in the $k$th time interval by solving the following linear programming in Equation \ref{eq:online_2}.
\begin{eqnarray}
\max \limits_{\mathbf{h}(k)} ~~  \mathbf{1}^T~\textbf{h}(k) \qquad\qquad \quad &\,&
\label{eq:online_2}\\
s.t. \quad  \rho_{i,j}^{*}(k)\textit{v}_{i,j}^f\textit{h}_{i,j}(k) & \leq&  1-\rho_{i,j}^{*}(k)\textit{L}  \qquad i,j,k \in \left\{(i,j;k)~|~ \textit{v}_{i,j}^{f}{~} \rho_{i,j}^{*}(k) \leq \frac{1-\rho_{i,j}^{*}(k)\emph{L}}{\textit{h}_{i,j}^{*}(k)} \right\} \nonumber \\
 L_{i,j}\textit{h}_{i,j}(k) & \geq&  \Delta_{t}\textit{L}\textit{n}_{i,j}^{w,*}(k)
\qquad i,j,k \in \left\{(i,j;k)~|~ \textit{v}_{i,j}^{f}{~} \rho_{i,j}^{*}(k) \leq \frac{1-\rho_{i,j}^{*}(k)\emph{L}}{\textit{h}_{i,j}^{*}(k)} \right\} \nonumber \\
 L_{i,j}\textit{h}_{i,j}(k) & \leq&  \Delta_{t}\textit{L}\textit{n}_{i,j}^{w,*}(k)+\Delta_{t}\textit{L} \qquad i,j,k \in \left\{(i,j;k)~|~ \textit{v}_{i,j}^{f}{~} \rho_{i,j}^{*}(k) \leq \frac{1-\rho_{i,j}^{*}(k)\emph{L}}{\textit{h}_{i,j}^{*}(k)} \right\} \nonumber \\
 \textit{h}_{i,j}^{*}(k) \leq \textit{h}_{i,j}(k) & \leq&  \textit{h}_{i,j;k}^{max} 
\qquad i,j,k \in \left\{(i,j;k)~|~ \textit{v}_{i,j}^{f}{~} \rho_{i,j}^{*}(k) \leq \frac{1-\rho_{i,j}^{*}(k)\emph{L}}{\textit{h}_{i,j}^{*}(k)} \right\} \nonumber \\
 \textit{h}_{i,j}(k) & =&  \textit{h}_{i,j}^{*}(k) \qquad i,j,k \in \left\{(i,j;k)~|~ \textit{v}_{i,j}^{f}{~} \rho_{i,j}^{*}(k) \geq \frac{1-\rho_{i,j}^{*}(k)\emph{L}}{\textit{h}_{i,j}^{*}(k)} \right\} \nonumber
\end{eqnarray}

\STATE 3. Obtain the total travel time $TTT_{k}$ in the $k$ time interval under the headway control $\mathbf{h}^{*}(k)$ and real demand $\textbf{d}(k)$. 

\ENDFOR

\ENDFOR
\end{algorithmic}
{\textbf{Output:} maximin headway control $\textbf{h}^{*}=\mathop{\cup}\limits_{i=1}^{n}\mathop{\cup}\limits_{k=t{i-1}}^{t_{i}}\textbf{h}^{*}(k)$ and total travel time $TTT=\mathop{\sum}\limits_{i=1}^{n}\mathop{\sum}\limits_{k=t_{i-1}}^{t_{i}}TTT_{k}$.\qquad\qquad~~~}
\label{algo:online}
\end{breakablealgorithm}

\end{document}